\newtheorem{theorem}{Theorem}
\newtheorem{claim}{Claim}
\theoremstyle{definition}
\newtheorem{definition}{Definition}
\newtheorem{observation}{Observation}
\newtheorem{lemma}{Lemma}
\newtheorem{corollary}{Corollary}
\crefname{invar}{invariant}{invariants}
\crefname{ineq}{inequality}{inequalities}
\crefname{constr}{constraint}{constraints}
\crefname{tbl}{table}{tables}
\crefname{lem}{lemma}{lemmata}
\crefname{lemma}{lemma}{lemmata}
\crefname{cond}{condition}{conditions}
\title{ETH Tight Algorithms for Geometric Intersection Graphs: Now in Polynomial Space}
\author{Fedor V. Fomin \footnote{University of Bergen, Norway. \texttt{Fedor.Fomin@uib.no} } \and Petr A. Golovach \footnote{University of Bergen, Norway. \texttt{Petr.Golovach@uib.no}} \and Tanmay Inamdar\footnote{University of Bergen, Norway. \texttt{Tanmay.Inamdar@uib.no}} \and Saket Saurabh\footnote{The Institute of Mathematical Sciences, HBNI, Chennai, India, and University of Bergen, Norway. \texttt{saket@imsc.res.in}} \footnote{The research leading to these results has received funding from the Research Council of Norway via the project ```MULTIVAL" (grant no. 263317) and the European Research Council (ERC) via grant LOPPRE, reference 819416.}}
\date{}
\newcommand{\Oh}{\mathcal{O}}
\newcommand{\real}{\mathbb{R}}
\renewcommand{\natural}{\mathbb{N}}
\renewcommand{\P}{\mathcal{P}}
\newcommand{\Gp}{G_\mathcal{P}}
\newcommand{\I}{\mathcal{I}}
\renewcommand{\k}{\kappa}
\renewcommand{\S}{\mathcal{S}}
\newcommand{\R}{\mathcal{R}}
\newcommand{\Q}{\mathcal{Q}}
\newcommand{\kpart}{$\kappa$-partition\xspace}
\renewcommand{\C}{\mathcal{C}}
\newcommand{\C}{\mathcal{C}}
\newcommand{\lr}[1]{\left( #1\right)}
\newcommand{\LR}[1]{\left\{ #1\right\}}
\newcommand{\dd}{{1-1/d}}
\newcommand{\ndd}{\Oh(n^\dd)}
\newcommand{\tndd}{2^{\ndd}}
\newcommand{\tw}{\texttt{tw}}
\newcommand{\cnc}{Cut\&Count\xspace}
\newcommand{\zz}{\mathbf{0}}
\newcommand{\oo}{\mathbf{1}}
\newcommand{\ts}{\widetilde{S}}
\newcommand{\wtd}{\texttt{wtd}}
\newcommand{\td}{\texttt{td}}
\newcommand{\lrcut}{(X_L, X_R)}
\newcommand{\states}{\texttt{states}}
\newcommand{\F}{\mathcal{F}}
\newcommand{\tail}{\texttt{tail}}
\newcommand{\tree}{\texttt{tree}}
\newcommand{\broom}{\texttt{broom}}
\newcommand{\child}{\texttt{child}}
\newcommand{\cc}{\texttt{cc}}
\newcommand{\w}{\mathbf{w}}
\newcommand{\integer}{\mathbb{Z}}
\newcommand{\bip}{\textbf{bip}}
\newif\iflong
\begin{document}
	\maketitle

%!TEX root = subexp.tex
\begin{abstract}
De Berg et al. in [SICOMP 2020] gave an algorithmic framework for subexponential algorithms on geometric graphs with tight (up to ETH) running times. This framework is based on dynamic programming on graphs of weighted treewidth resulting in algorithms that use super-polynomial space. We introduce the notion of weighted treedepth and use it to refine the framework of de Berg et al. for obtaining polynomial space (with tight running times)  on geometric graphs. As a result, we prove that  for any fixed dimension $d\ge 2$  on intersection graphs of similarly-sized fat objects many well-known graph problems including \textsc{Independent Set}, $r$-\textsc{Dominating Set} for constant $r$, \textsc{Cycle Cover}, \textsc{Hamiltonian Cycle}, \textsc{Hamiltonian Path}, \textsc{Steiner Tree}, \textsc{Connected Vertex Cover}, \textsc{Feedback Vertex Set}, and \textsc{(Connected) Odd Cycle Transversal}  are solvable in  time $2^{O(n^{1-1/d})}$ and within polynomial space.  
\end{abstract}
%\thispagestyle{empty}
%\newpage

\section{Introduction}\label{fv:secintro} Most of the fundamental NP-complete problems on graphs like \textsc{Independent Set}, \textsc{Feedback Vertex Set}, or \textsc{Hamiltonian Cycle} do not admit algorithms of running times $2^{o(n)}$ on general graphs unless the Exponential Time Hypothesis (ETH) fails. However, on planar graphs, $H$-minor-free graphs, and several classes of geometric graphs, such problems admit \emph{subexponential} time algorithms. There are several general frameworks for obtaining subexponential algorithms \cite{BergBKMZ20,DemaineFHT05jacm,FominLPSZ19}. The majority of these frameworks utilize dynamic programming algorithms over graphs of bounded treewidth. Consequently, the subexponential algorithms derived within these frameworks use prohibitively large (exponential) space.

Recently, algorithms on graphs of bounded treedepth attracted significant attention  \cite{FurerY17,HegerfeldK20,NederlofPSW20}.  The advantage of these algorithms over dynamic programming used for treewidth is that they use polynomial space. Our work is motivated by the following natural question 
   \begin{tcolorbox}[colback=green!5!white,colframe=blue!40!black]
Could the treedepth   find applications in the design of (polynomial space) subexponential algorithms?
  \end{tcolorbox} 
The problem is that the treedepth of a graph could be significantly larger than its treewidth. For example, the treewidth of an $n$-vertex path is one, while the treedepth is of order $\log{n}$. It creates problems in using treedepth in frameworks like bidimensionality that strongly exploit the existence of large grid minors in graphs of large treewidth. Despite that, we show the usefulness of treedepth for obtaining polynomial space subexponential algorithms on intersection graphs of some geometrical objects.  

  In~\cite{BergBKMZ20}, de Berg et al. developed a generic framework facilitating the construction of subexponential algorithms on large classes of geometric graphs. By applying their framework on intersection graphs of similarly-sized fat objects in dimension $d\geq 2$, de Berg et al. obtained algorithms with running time $\tndd$ for many well-known graph problems, including \textsc{Independent Set}, $r$-\textsc{Dominating Set} for constant $r$, \textsc{Hamiltonian Cycle}, \textsc{Hamiltonian Path},  \textsc{Feedback Vertex Set},   \textsc{Connected Dominating Set}, and \textsc{Steiner Tree}. 

The primary tool introduced by de Berg et al. is the weighted treewidth. They show that solving many optimization problems on   intersection graph of $n$ similarly-sized fat objects can be reduced on solving these problems on graphs of 
weighted treewidth of order  $\Oh(n^{\dd})$. Combined with single-exponential algorithms on graphs of bounded weighted treewidth, this yields subexponential algorithms for several problems.  

 The running times $\tndd$  are tight---de Berg et al. accompanied their algorithmic upper bounds with matching conditional complexity (under  ETH) bounds. However, as most of the treewidth-based algorithms, the algorithms of Berg et al. are dynamic programming over tree decompositions. As a result,  they require super-polynomial space. Thus a concrete question here is \emph{whether running times $\tndd$ could be achieved using polynomial space.}

 We answer this question affirmatively by developing polynomial space algorithms that in time $\tndd$ solve all problems on intersection graphs of similarly-sized fat objects from the paper of de Berg et al. except for \textsc{Connected Dominating Set}. The primary tool in our work is the \emph{weighted treedepth}. To the best of our knowledge, this notion is new. 
 
The \cnc technique was introduced by Cygan et al.\ \cite{cygan2011solving}, who gave the first single-exponential (randomized) algorithms parameterized by the treewidth for many problems using this technique. We note that at the heart of these algorithms is a dynamic programming over the tree decomposition, and thus require exponential space. However, unweighted treedepth was recently used by several authors in the design of parameterized algorithms using polynomial space 
\cite{FurerY17,HegerfeldK20,NederlofPSW20}. Some of these works adapt the \cnc technique for the treedepth decomposition.

Our main insight is that in the framework of de Berg et al. \cite{BergBKMZ20} for most of the problems the weighted treedepth can replace the weighted treewidth.  Pipelined with branching algorithms over graphs weighted treedepth, this new insight brings us to many tight (up to ETH) polynomial space algorithms on geometric graphs.

\medskip\noindent\textbf{Our results.} 
To explain our strategy of ``replacing'' the weighted treewidth with the weighted treedepth,  we need 
to provide an overview of the framework of de Berg et al. \cite{BergBKMZ20}. It has two main ingredients. First, for an intersection graph of $n$ similarly-sized fat objects (we postpone technical definitions to the next section), we construct an auxiliary weighted graph $\Gp$. (Roughly speaking, to create $\Gp$, we contract some cliques of  $G$ and assign weights to the new vertices.)   Then the combinatorial theorem of de Berg et al. states that the weighted treewidth of $\Gp$ is $\Oh(n^{\dd})$. Second, to solve problems on $G$ in time $\tndd$, one uses a tree decomposition of  $\Gp$. This part is problem-dependent and, for some problems, could be pretty non-trivial.

To plug in the treedepth into this framework, we first prove that the weighted treedepth of $\Gp$ is $\Oh(n^{\dd})$. Moreover, we give an algorithm computing a  treedepth decomposition in time  $\tndd$ and polynomial space. For \textsc{Independent Set}, a simple branching algorithm over the treedepth decomposition can solve the problem in time $\tndd$ and polynomial space. We also get a similar time and space bounds for \textsc{Dominating Set}, and more generally, $r$-\textsc{Dominating Set} for constant $r$; however, we need to use a slightly different kind of recursive algorithm.

Next, we consider connectivity problems like \textsc{Steiner Tree, Connected Vertex Cover}, \textsc{Feedback Vertex Set}, and \textsc{(Connected) Odd Cycle Transversal}. For these problems, we are able to adapt the single exponential FPT algorithms parameterized by (unweighted) treedepth given by Hegerfeld and Kratsch \cite{HegerfeldK20}, into the framework of weighted treedepth decomposition. Thus, we get $\tndd$ time, polynomial space algorithms for these problems.

Finally, we consider \textsc{Cycle Cover}, which is a generalization of \textsc{Hamiltonian Cycle}. Here, we are able to ``compress'' the given graph into a new graph, such that the (unweighted) treedepth of the new graph is $\ndd$. We can also compute the corresponding treedepth decomposition in $\tndd$ time, and polynomial space. Then, we can use a result by Nederlof et al.\ \cite{NederlofPSW20} as a black box, which is \cnc based a single exponential FPT algorithms parameterized by treedepth, that use polynomial space. Thus, we get $\tndd$ time, polynomial space algorithms for \textsc{Cycle Cover, Hamiltonian Cycle}, and also for \textsc{Hamiltonian Path}.

We note that the results in the previous two paragraphs are based on the \cnc technique, and are randomized. We also note that all of our algorithms, except for \textsc{Cycle Cover} and related problems can work even without the geometric representation of the similarly-sized fat objects. For \textsc{Cycle Cover} and related problems, however, we require the geometric representation. This is in line with similar requirements for these problems from \cite{BergBKMZ20}.

\paragraph{Organization.} In Section \ref{fv:sec:prelim}, we define some of the basic concepts including the weighted treedepth, and then prove our main result about the same. In Section \ref{fv:sec:recursive}, we discuss algorithms for \textsc{Indpendent Set}, and \textsc{$r$-Dominating Set}. Then, in Section \ref{fv:sec:cnc}, we discuss the \cnc technique, and its application for \textsc{Steiner Tree, Connected Vertex Cover, Feedback Vertex Set}, and \textsc{(Connected) Odd Cycle Transversal}. Finally, in Section \ref{fv:sec:cyclecover}, we describe our algorithm for \textsc{Cycle Cover} and related problems.

 %\medskip\noindent\textbf{Related work.} 

\section{Geometric Graphs and Weighted Treedepth} \label{fv:sec:prelim}

In this section we define the weighted treedepth, prove a combinatorial bound on the treedepth of certain geometric graphs and provide a generic algorithm and provide an abstract theorem  modeling at a high level our subexponential time and polynomial space algorithms. But first, we need some definitions.

\medskip\noindent\textbf{Graphs.}
We consider only undirected simple graphs and use the standard graph theoretic terminology; we refer to the book of Diestel~\cite{Diestel12} for basic notions. We write $|G|$ to denote $|V(G)|$, and 
throughout the paper we use $n$ for the number of vertices if it does not create confusion. For a set of vertices $S\subseteq V(G)$, we denote by $G[S]$ the subgraph of $G$ induced by the vertices from $S$ and write $G-S$
to denote the graph obtained by deleting the vertices of $S$. For a vertex $v$, $N_G(v)$ denotes the \emph{open neighborhood} of $v$, that is, the set of vertices adjacent to $v$, and $N_G[v]=\{v\}\cup N_G(v)$ is the \emph{closed neighborhood}. For a vertex $v$, $d_G(v)=|N_G(v)|$ denotes the \emph{degree} of $v$.  We may omit subscripts if it does not create confusion. 
For two distinct vertices $u$ and $v$ of a graph $G$, a set $S\subseteq V(G)$ is a \emph{$(u,v)$-separator} if $G-S$ has no $(u,v)$-path and $S$ is a \emph{separator} if $S$ is a $(u,v)$-separator for some vertices $u$ and $v$.
%\iffalse
A pair of vertex subsets $(A, B)$ is called a separation if $A \cup B = V(G)$, and there are no edges between $A\setminus B$ and $B \setminus A$, that is, $S=A\cap B$ is a $(u,v)$-separator for $u\in A\setminus B$ and $v\in B\setminus A$. We say that a subset $S \subseteq V(G)$ is an $\alpha$-balanced separator for a constant $\alpha \in (0, 1)$ if there exists a separation $(A, B)$ such that $A \cap B = S$, and $\max\LR{|A|, |B|} \le \alpha n$.
%\fi

\medskip\noindent\textbf{$\k$-partition.} Let $\P = \{V_1, V_2, \ldots, V_t\}$ be a partition of $V(G)$ for some $t \ge 1$, such that any $V_i \in \P$ satisfies the following properties: (1) $G[V_i]$ is connected, and (2) $V_i$ is a union of at most $\k$ cliques in $G$ (not necessarily disjoint). Then, we say that $\P$ is a \kpart of $G$. Furthermore, given a \kpart $\P = \{V_1, V_2, \ldots, V_t\}$ of $G$, we define the graph $G_\P$, the graph induced by $\P$,  as the undirected graph obtained by contracting each $V_i$ to a vertex, and removing self-loops and multiple edges. 

\medskip\noindent\textbf{Treedepth and Weighted Treedepth.}
We introduce  \emph{weighted treedepth} of a graph as a generalization of the well-known notion of treedepth (see e.g. the book of Nesetril and de Mendez~\cite{NesetrilM12}). There are different ways to define treedepth but it is convenient for us to deal with the definition via \emph{treedepth decompositions} or \emph{elimination forests}. 
We say that a forest $F$  supplied with one selected node (it is convenient for us to use the term ``node'' instead of ``vertex'' in such a forest) in each connected component, called a \emph{root}, a \emph{rooted forest}. The choice of roots defines the natural parent--child relation on the nodes of a rooted forest. Let $G$ be a graph and let $\omega\colon V(G)\rightarrow\mathbb{R}$ be a weight function. A \emph{treedepth decomposition} of $G$ is a pair $(F,\varphi)$, where $F$ is a rooted forest and $\varphi\colon V(F)\rightarrow V(G)$ is a bijective mapping such that for every edge $uv\in E(G)$, either $\varphi^{-1}(u)$ is an ancestor of $\varphi^{-1}(v)$ in $F$ or $\varphi^{-1}(v)$ is an ancestor of $\varphi^{-1}(u)$. Then the \emph{depth} of the decomposition is the depth of $F$, that is, the maximum number of nodes in a path from a root to a leaf. The \emph{treedepth} of $G$, denoted $\td(G)$, is  the minimum depth of a treedepth decomposition of $G$. We define the \emph{weighted depth} of a treedepth decomposition as the maximum $\sum_{v\in V(P)}\omega(\varphi(v))$ taken over all paths $P$ between roots and leaves. Respectively, the \emph{weighted treedepth} $\wtd(G)$ is the minimum weighted depth of a treedepth decomposition. For our applications, we assume without loss of generality  that $G$ is connected, which implies that the forest $F$ in a (weighted) treedepth decomposition is actually a tree. 

\medskip\noindent\textbf{Weighted Treewidth.} We assume basic familiarity with the notion of treewidth and tree decomposition of a graph -- see a text such as \cite{cygan2015parameterized}, for example.  Similar to the previous paragraph, de Berg et al. \cite{BergBKMZ20} define the \emph{weighted treewidth} of a graph. Given an undirected graph $G = (V, E)$ with weights $\omega\colon V(G)\rightarrow\mathbb{R}$, the weighted width of a tree decomposition $(T, \beta)$, is defined to be the maximum over bags, the sum of the weights of vertices in the bag. The weighted treewidth of a graph is the minimum weighted width over all tree decompositions of the graph.

It is useful to observe that we consider treedepth and tree decompositions of the graphs $\Gp$ constructed for graphs $G$ with given \kpart  $\P = \{V_1, V_2, \ldots, V_t\}$. Then the treedeph decomposition of $\Gp$ can be seen as a pair $(F,\varphi)$, where $F$ is a rooted forest and $\varphi$ is a bijective mapping of $V(F)$ to $\P$. Similarly, in a tree decomposition $(T, \beta)$ of $\Gp$, corresponding to every node $t \in V(T)$, the bag $\beta(t)$ is a subset of $\P$. Finally, we observe that the results of \cite{BergBKMZ20} regarding weighted treewidth---thus our results for weighted treedepth---hold for any weight functions $\omega: \P \to \real^+$, provided that $\omega(t) = \Oh(t^{\dd-\epsilon})$, for any $\epsilon > 0$. However, as in \cite{BergBKMZ20}, it suffices to fix $\omega(V_i) \coloneqq \log(1+|V_i|)$ for our applications. Thus, the weight function is assumed to be the aforementioned one, unless explicitly specified otherwise. For the simplicity of notation, we use the shorthand $\omega(u_i) \coloneqq \omega(\varphi(u_i))$ for any $u_i \in V(F)$, and $\omega(S) \coloneqq \sum_{u_i \in S} \omega(u_i)$ for any subset $S \subseteq V(F)$. 

\medskip\noindent\textbf{Geometric Definitions.} Given a set $F$ of objects in $\real^d$, we define the corresponding intersection graph $G[F] = (V, E)$, where there is a bijection between an object in $F$ and $V(G)$, and $uv \in E(G)$ iff the corresponding objects in $F$ have a non-empty intersection. It is sometimes convenient to erase the distinction between  $F$ with $V(G)$, and to say that each vertex is a geometric object from $F$.

We consider the geometric intersection graphs of \emph{fat objects}. A geometric object $g \subset \real^d$ is said to be $\alpha$-fat for some $\alpha \ge 1$, if there exist balls $B_{\text{in}}, B_{\text{out}}$ such that $B_\text{in} \subseteq g \subseteq B_\text{out}$, such that the ratio of the radius of $B_\text{out}$ to that of $B_\text{in}$ is at most $\alpha$. We say that a set $F$ of objects is \emph{fat} if there exists a constant $\alpha \ge 1$ such that every geometric object in $F$ is $\alpha$-fat. Furthermore, we say that $F$ is a set of \emph{similarly-sized} fat objects, if the ratio of the largest diameter of an object in $F$, to the smallest diameter of an object in $F$ is at most a fixed constant. Finally, observe that if $F$ is a set of \emph{similarly-sized fat objects}, then the ratio of the largest out-radius to the smallest in-radius of an object is also upper bounded by a constant.  de Berg et al. \cite{BergBKMZ20} prove the following two results regarding the intersection graphs of similarly sized fat objects.

\begin{lemma}[\cite{BergBKMZ20}] \label{fv:lem:deberg-kpart}
	Let $d \ge 2$ be a constant. Then, there exist constants $\k$ and $\Delta$, such that for any intersection graph $G = (V, E)$ of an (unknown) set of $n$ similarly-sized fat objects in $\real^d$, a $\k$-partition $\P$ for which $G_\P$ has maximum degree $\Delta$ can be computed in time polynomial in $n$.
\end{lemma}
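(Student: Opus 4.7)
The plan is to construct the partition from a spatial grid, assuming access to the geometric representation, and then address how to simulate this combinatorially. By rescaling, assume every object has diameter in $[1,c]$ and inscribed-ball radius at least a constant $r$, where $c$ and $r$ depend only on the fatness constant $\alpha$ and the similarly-sized ratio. Lay down an axis-aligned grid in $\real^d$ with cell side length $c$, and assign each object $o$ to the cell $\sigma(o)$ that contains the center of $o$'s inscribed ball, breaking ties arbitrarily. Let $V_\sigma = \{o : \sigma(o) = \sigma\}$.

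The key step is to show that each $V_\sigma$ is a union of $\kappa_0 = O((c\sqrt{d}/r)^d)$ cliques. Subdivide $\sigma$ into sub-cells of diameter strictly less than $r$. For any such sub-cell $\sigma'$, every object assigned to $\sigma'$ entirely contains $\sigma'$---its inscribed ball has radius at least $r$ and its center lies in $\sigma'$---so the objects assigned to $\sigma'$ pairwise intersect and form a clique in $G$. Hence $V_\sigma$ decomposes into at most $\kappa_0$ cliques. Now split every $V_\sigma$ into the connected components of $G[V_\sigma]$; each component is still a union of at most $\kappa_0$ cliques (just restrict the cover) and is connected by construction, so we take $\kappa := \kappa_0$. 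For the degree bound in $\Gp$, note that if parts $V_i \subseteq V_\sigma$ and $V_j \subseteq V_{\sigma'}$ are adjacent in $\Gp$, then some $o \in V_i$ intersects some $o' \in V_j$; since both have diameter at most $c$, the cells $\sigma$ and $\sigma'$ must lie within constant grid-distance, giving only $O(1)$ possible $\sigma'$, each containing at most $\kappa_0$ parts. Thus $V_i$ has a constant number $\Delta$ of neighbors in $\Gp$.

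The main obstacle is that the geometric representation is not available, so the grid construction must be simulated purely combinatorially. I would exploit a consequence of fatness: for every vertex $v$, the independence number of $G[N[v]]$ is bounded by a constant $\beta$, since independent neighbors correspond to pairwise disjoint similarly-sized fat objects all intersecting the object of $v$, and a packing argument caps the number of such objects. Using this, one can compute in polynomial time a cover of $N[v]$ by $O(1)$ cliques via brute-force enumeration over subsets of size at most $\beta$, extending each representative into a maximal clique. These local clique covers can then be stitched together by processing vertices in a suitable order and merging overlapping neighborhoods, producing parts that behave like grid cells: each is a union of $O(1)$ cliques, connected, and the resulting quotient graph has bounded degree by the same local-adjacency argument. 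Verifying that this purely combinatorial procedure preserves the structural guarantees from the geometric construction is the delicate part of the argument.
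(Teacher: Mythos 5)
Your geometric grid argument is correct and establishes the key \emph{existence} fact: when the representation is available, one can lay down a grid, observe that objects whose inscribed-ball centers land in a common sub-cell of diameter below the minimal in-radius form a clique, and conclude that $G$ admits a partition into parts each covered by $O(1)$ cliques, with the quotient graph having bounded degree. This is essentially the standard structural observation in de Berg et al.

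The problem is the combinatorial case, which is the entire point of the lemma (the objects are stated to be \emph{unknown}). Your proposed fix is not a proof and would not go through as written. Knowing that $\alpha(G[N[v]])\le\beta$ does not yield a polynomial-time algorithm for producing a constant-size clique cover of $N[v]$: bounded independence number bounds the clique number of the complement, not its chromatic number, and minimum clique cover is \textsf{NP}-hard in general. Your specific suggestion --- take a maximal independent set of $G[N[v]]$ and extend each representative to a maximal clique --- does not even cover $N[v]$, because a vertex can be adjacent to a representative without belonging to any common clique with it. And ``stitching together by processing vertices in a suitable order and merging overlapping neighborhoods'' is not a specified procedure; you acknowledge yourself that this is where the argument is incomplete.

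The cited proof in \cite{BergBKMZ20} avoids computing cliques entirely. Greedily compute a maximal independent set $I$ of $G$ in polynomial time. For each $v\in V(G)$, assign $v$ to an arbitrary $w\in N_G[v]\cap I$ (nonempty by maximality), and set $V_w$ to be the preimage of $w$; then $\P=\{V_w: w\in I\}$. Each $V_w\subseteq N[w]$ and contains $w$, so $G[V_w]$ is connected. For the clique-cover bound, one uses only the \emph{existence} of the $1$-partition $\Q$ with bounded-degree quotient $G_\Q$ (your grid argument): $N[w]$ is contained in the union of the clique containing $w$ in $\Q$ together with its $\le\Delta'$ neighbors in $G_\Q$, so $\k=\Delta'+1$ suffices, with no need to identify the cliques algorithmically. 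The degree of $G_\P$ is bounded because adjacent parts $V_{w_1},V_{w_2}$ force $d_G(w_1,w_2)\le 3$, and an independent set can meet at most $O((\Delta')^3)$ of the cliques of $\Q$ within $G_\Q$-distance $3$ of any fixed clique. So the gap in your proposal is precisely the missing maximal-independent-set construction, which replaces the algorithmic clique-cover computation with an existential invocation of the grid decomposition.
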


In the following, we will use the tuple $(G, d, \P, G_\P)$ to indicate that $G = (V, E)$ is the intersection graph of $n$ similarly-sized fat objects in $\real^d$, $\P$ is a \kpart of $G$ such that $\Gp$ has maximum degree $\Delta$, where $\k, \Delta$ are constants, as guaranteed by Lemma \ref{fv:lem:deberg-kpart}. 

\begin{lemma}[\cite{BergBKMZ20}] \label{fv:thm:deberg-weighted-treewidth}
	For any $(G, d, \P, \Gp)$, the weighted treewidth of $G_\P$ is  $\Oh(n^{\dd})$.
\end{lemma}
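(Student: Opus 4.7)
The plan is to build a tree decomposition of $\Gp$ of weighted width $O(n^{1-1/d})$ by recursion, at each level finding a small weighted balanced separator that is derived from a geometric balanced separator of the underlying fat objects.

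First, I would invoke a geometric separator theorem for $n$ similarly-sized fat objects in $\real^d$: for some constant $\beta \in (0,1)$ there is a sphere $\sigma$ such that the sets $F_{\text{in}}, F_{\text{out}}$ of objects strictly inside and strictly outside $\sigma$ each have size at most $\beta n$, while the set $F_\sigma$ of objects meeting $\sigma$ satisfies $|F_\sigma| = O(n^{(d-1)/d})$. This is a Miller--Teng--Thurston--Vavasis-style statement; the boundary bound follows from a packing estimate, since similarly-sized fat objects meeting a bounded-thickness shell around $\sigma$ fit into a region of volume $O(r^{d-1})$ with constant packing density.

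Next I would lift $F_\sigma$ to a separator of $\Gp$. Set $S = \{V_i \in \P : V_i \cap F_\sigma \neq \varnothing\}$. Because each $G[V_i]$ is connected, any $V_i$ with objects on both sides of $\sigma$ must contain some object of $F_\sigma$, so deleting $S$ from $\Gp$ disconnects the two sides. The main technical step is to argue $\omega(S) = \sum_{V_i \in S}\log(1+|V_i|) = O(n^{1-1/d})$; the naive bound $\omega(S) \le \sum_{V_i \in S}|V_i| \le n$ is far too weak, and a Jensen-type bound of the form $|S|\log(1 + n/|S|)$ loses an unwanted $\log n$ factor. I would attack this by a charging argument that distributes $\log(1+|V_i|)$ over the objects of $V_i \cap F_\sigma$: when $|V_i|$ is large, the $\kappa$-partition structure combined with similarly-sized fat packing forces many of the objects of $V_i$ into the bounded-thickness shell around $\sigma$ (because $V_i$ is a union of $\le \kappa$ cliques, each localized to a region of bounded diameter), so that $|V_i \cap F_\sigma|$ grows proportionally with $|V_i|$ and the per-object charge stays $O(1)$. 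This charging is the principal obstacle.

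Finally, I would recurse on $F_{\text{in}}$ and $F_{\text{out}}$, each of which is an intersection graph of at most $\beta n$ similarly-sized fat objects that still carries a $\kappa$-partition (inherited by restriction, or recomputed via Lemma~\ref{fv:lem:deberg-kpart} on the sub-instances), and insert $S$ into the root bag of the resulting decomposition, adding it to the bags of all descendants in the usual way. Because separator weights at recursion depth $k$ are bounded by $O((\beta^k n)^{1-1/d})$, summing along any root-to-leaf path gives a geometric series whose total is $O(n^{1-1/d})$, so the weighted width of the final tree decomposition is $O(n^{1-1/d})$, as claimed.
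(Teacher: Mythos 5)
Your high-level plan---recursive balanced separators, glued into a tree decomposition with separator weights summing along root-to-leaf paths as a geometric series---does match what de Berg et al.\ do and what this paper points to (it cites the result from \cite{BergBKMZ20} and remarks in the discussion after Theorem~3 that they ``show the existence of an $\alpha$-balanced separator of weight $\ndd$'' and use it to derive the treewidth bound). The recursion and gluing are fine.

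The gap is exactly where you flagged it, and I don't think the charging argument you sketch closes it. Your packing estimate ``$|F_\sigma| = O(n^{(d-1)/d})$'' for objects meeting a bounded-thickness shell implicitly assumes constant ply: the volume bound $O(r^{d-1})$ only translates into an object-count bound when the objects have bounded overlap. Similarly-sized fat objects do \emph{not} have this property ($n$ congruent balls at the same point are a valid input), so the MTTV-style bound gives $O(p^{1/d}n^{1-1/d})$ with no control on $p$. Your charging argument then bounds $\omega(S) \le \sum_{V_i\in S}\log(1+|V_i|) \le \sum_{V_i\in S}|V_i| \le |F_\sigma|$ (since each $V_i \in S$ is localized and hence lies in a slightly thicker shell), which is sound but reduces everything back to the unproven claim $|F_\sigma| = O(n^{(d-1)/d})$. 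In effect the charging only uses $\log(1+t) \le t$ and never actually exploits the strict sublinearity $\log(1+t) = O(t^{1-1/d-\epsilon})$, which is the whole reason the target bound is achievable. Conversely, if you retreat to applying the sphere separator to one representative ball per part (these do have bounded ply, since the parts are localized), you control $|S|$ but not $\omega(S)$, and you are back to the $\log n$ loss you correctly identified as lurking in the Jensen bound. What de Berg et al.\ actually do is a genuinely weighted separator theorem whose proof exploits the sublinear growth of the weight function directly (essentially trading off part sizes against part counts inside the shell); neither the unweighted packing estimate nor the per-object charging, as written, substitutes for that argument.

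Two minor points. First, ``recomputed via Lemma~\ref{fv:lem:deberg-kpart} on the sub-instances'' is not the right move inside a tree-decomposition recursion: the recursion has to be on $\Gp - S$ with the restricted partition (which is still a $\kappa$-partition), since the bags must be subsets of the original $\P$. Second, balance should be tracked on the vertices of $G$ (objects), as you do, so that the termination/geometric-decay claim refers to the quantity that actually decreases; this is consistent with how the paper uses the separator in Theorem~3.
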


Now we are ready to prove the following result about the intersection graphs of similarly sized fat objects. This result is at the heart of the subexponential algorithms designed in the following sections.

\begin{theorem} \label{fv:thm:treedepth} There is a polynomial space algorithm that for a 
	given $(G, d, \P, \Gp)$, computes in time  $\tndd$  	
	 a weighted treedepth decomposition $(F, \varphi)$ of $G_\P$ of weighted treedepth  $\Oh(n^{\dd})$. 
\end{theorem}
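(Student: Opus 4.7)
\medskip
\noindent\textbf{Proof proposal.}
The plan is to reduce the theorem to a balanced-separator statement that is already implicit in de Berg et al.'s proof of Lemma~\ref{fv:thm:deberg-weighted-treewidth}, and then turn separators into a treedepth decomposition by the standard recursive chain construction. Concretely, I will first argue that for any $(G, d, \P, \Gp)$ there is a constant $\alpha \in (0,1)$ and a polynomial-time algorithm that produces an $\alpha$-balanced separator $S$ of $\Gp$ of weight $\omega(S) = \Oh(n^{\dd})$. This follows from the proof of Lemma~\ref{fv:thm:deberg-weighted-treewidth}: the weighted tree decomposition of weighted width $\Oh(n^{\dd})$ is constructed by recursively finding such balanced separators, and any of them can be read off (or recomputed) in polynomial time using the geometric/combinatorial separator machinery that underlies Lemmas~\ref{fv:lem:deberg-kpart} and~\ref{fv:thm:deberg-weighted-treewidth}.

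With this tool in hand, I would build $(F,\varphi)$ recursively. Starting from $\Gp$, compute a balanced separator $S = \{s_1, \dots, s_{|S|}\}$, arrange its nodes as a path (chain) in $F$ rooted at $s_1$, and then recurse on each connected component $C$ of $\Gp - S$, attaching the root of the subtree produced for $C$ as a child of $s_{|S|}$. The resulting pair $(F,\varphi)$ is a valid treedepth decomposition: for any edge $uv \in E(\Gp)$, either both endpoints lie in a common component $C$ (handled by the recursive call) or at least one endpoint lies in $S$, in which case both endpoints lie on a single root-to-leaf path through the current chain and are thus comparable in the ancestor/descendant order of $F$.

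To bound the weighted depth, fix any root-to-leaf path $P$ in $F$. It decomposes into consecutive separator-chains $S_0, S_1, \dots, S_r$, where $S_i$ is the separator chosen at recursion depth $i$ on the branch of $P$, applied to a graph on at most $\alpha^i n$ vertices of the original $G$. Therefore
\[
\sum_{v \in V(P)} \omega(\varphi(v)) \;=\; \sum_{i=0}^{r} \omega(S_i) \;\le\; \sum_{i=0}^{r} \Oh\!\bigl((\alpha^i n)^{\dd}\bigr) \;=\; \Oh(n^{\dd})\cdot\sum_{i=0}^{\infty} \alpha^{i(\dd)},
\]
and the geometric series converges because $\alpha^{\dd} < 1$. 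Hence the weighted treedepth of the produced decomposition is $\Oh(n^{\dd})$, as required.

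Finally, for resources: the recursion depth is $O(\log n)$ because the ``$n$'' of every subproblem shrinks by a factor of $\alpha$, so the call stack and the intermediate separator/component lists occupy polynomial space; the forest $F$ has at most $|V(\Gp)| \le n$ nodes and thus also fits in polynomial space. Each recursive invocation performs polynomial work to compute the balanced separator and to list the components of $\Gp - S$, and the total number of invocations is bounded by $|V(\Gp)| \le n$, so the running time is polynomial and in particular fits in $\tndd$. The main obstacle is the first step, namely extracting from~\cite{BergBKMZ20} a balanced separator of weight $\Oh(n^{\dd})$ that can be computed in polynomial time and polynomial space rather than only as a byproduct of their (exponential-space) tree-decomposition construction; once that separator oracle is available, the recursive chain construction together with the telescoping-sum argument above yields the theorem.
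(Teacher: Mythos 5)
Your construction (arrange a balanced separator $S$ of $\Gp$ as a chain, recurse on the pieces of $\Gp - S$, and bound the weighted depth by a telescoping geometric series) is exactly the construction used in the paper's proof, and the telescoping analysis is equivalent to the paper's inductive bound $\frac{cn^q}{1-\alpha^q}$. So the overall route is the same.

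But there is a real gap, and it is precisely the one you flag at the end. You posit a \emph{polynomial-time} oracle for finding an $\alpha$-balanced separator of $\Gp$ of weight $\Oh(n^{\dd})$, and you conclude that the whole algorithm is polynomial time. Neither is justified. De Berg et al.\ only give such a separator efficiently when the geometric representation is available; the point of Lemma~\ref{fv:lem:deberg-kpart} and the surrounding discussion is that the abstract $(G, d, \P, \Gp)$ input does not include coordinates, so their geometric separator machinery cannot be ``read off'' or ``recomputed'' in polynomial time. The paper explicitly remarks on this obstruction right after the theorem: absent the geometric representation, one does not know how to directly compute the low-weight balanced separator. The paper's fix is to reduce to an \emph{unweighted} treewidth instance by blowing each part $V_i$ up into a clique of size $\log(1+|V_i|)$, and to run Reed's approximation algorithm, which uses polynomial space but takes time $2^{\Oh(\tw)} \cdot n^{\Oh(1)}$, i.e.\ $\tndd$. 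A low-weight balanced separator is then read off a bag of the resulting tree decomposition, and this step is re-run at each level of the recursion. Consequently the total running time is $\tndd$, not polynomial, which is exactly what the theorem claims. Your proposal should (i) drop the unproved polynomial-time separator oracle, (ii) insert the clique-blowup-plus-Reed step (or some other polynomial-space, $2^{\Oh(\tw)}$-time procedure) to obtain the separator, and (iii) correct the final running-time claim from ``polynomial'' to $\tndd$.
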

\begin{proof}
	We use the approximation algorithm from \cite{reed2003algorithmic} to compute a weighted tree decomposition $(T, \beta)$ of $G_\P$ (see the later part of the proof for a detailed explanation). Using the standard properties of the tree decomposition (e.g., see \cite{cygan2015parameterized}), there exists a node $t \in V(T)$, such that $V_B \coloneqq \bigcup_{V_i \in \beta(t)} V_i$ is an $\alpha$-balanced separator for $G$, for some $\alpha \le 2/3$. Let $B \coloneqq \beta(t)$. Note that $B \subseteq \P$.
	
	Now we construct a part of the forest $F$, and the associated bijection $\varphi$ in the weighted treedepth decomposition $(F, \varphi)$ of $\Gp$.  We create a path $\pi = (u_1, u_2, \ldots, u_{|B|})$, and arbitrarily assign $\varphi(u_i)$ to some $V_i \in B$ such that it is a bijection. We set $u_1$, the first vertex on $\pi$, to be the root of a tree in $F$. We also set the weight $\omega(u_i) = \log(1+|V_i|)$, where $\varphi(u_i) = V_i$. Note that the $\omega(\pi) = \omega(B) = \ndd$.
	
	Let $(Y_1, Y_2)$ be the separation of $G$, corresponding to the separator $\bigcup_{V_i \in \beta(t)} V_i$. Analogously, let $(\P'_1, \P'_2)$ denote the separation of $\Gp$, corresponding to the separator $B$.	Furthermore, let $X_i \coloneqq Y_i \setminus V_B$, and $\P_i \coloneqq \P'_i \setminus B$ for $i = 1, 2$. Note that $X_1 \setminus V_B, X_2 \subseteq V(G)$ are disjoint, $\max\{|X_1|, |X_2|\} \le \alpha n$, and there is no edge from a vertex in $X_1$, to a vertex in $X_2$. Furthermore, $\P_1$ is a \kpart of $G[X_1]$, and $\P_2$ is a \kpart of $G[X_2]$.
	
	Now, we recursively construct weighted treedepth decomposition $(F_1, \varphi_1)$ of $G_\P[\P_1]$. Note that $\varphi_1$ is a bijection between $V(F_1)$ and $\P_1 \subseteq \P$. Let $R_1$ denote the set of roots of the trees in forest $F_1$. We add an edge from the last vertex $u_{|B|}$ on the path $\pi$, to each root in $R_1$. In other words, we attach every tree in $F_1$ as a subtree below $u_{|B|}$. The bijection $\varphi$ is extended to $\P_1$ using $\varphi_1$. Now we consider a weighted treedepth decomposition $(F_2, \varphi_2)$ of $\Gp[\P_2]$, and use it to extend $(F, \varphi)$ in a similar manner. This completes the construction of $(F, \varphi)$.
	
	Let us first analyze the weighted treedepth of $(F, \varphi)$. Let us use $q \coloneqq \dd$ for simplicity. For a path $\pi$ in $F$, let $\omega(\pi)$ denote the sum of weights of vertices along the path $\pi$. Recall that the  weight of any root-leaf path $\pi$ in $F$ is at most $\Oh(n^q)$. More generally, let $c' \ge 0$ be a universal constant (independent of the path $\pi$, or its level in $F$) such that the weight of a path corresponding to a separator computed at level $j$, is at most $c' \cdot (\alpha^{j-1}n)^q$. Since $\max\{|X_1|, |X_2|\} \le \alpha n$, we inductively assume that the weighted treedepth of $(F_1, \varphi_1)$, and that of $(F_2, \varphi_2)$ is at most $\Oh(\alpha^q \cdot n^q)$. More specifically, we assume that there exists a universal constant $c \ge c'$, such that the sum of the weights along any root-leaf path in $F_1$ is upper bounded by $c \cdot \frac{(\alpha n)^q}{1-\alpha^q}$. The same inductive assumption holds for any root-leaf path in $F_2$. Therefore, the weight of any root-leaf path in $F$ is upper bounded by 	
	$$\omega(\pi) + c \cdot \frac{\alpha^q n^q}{1-\alpha^q} \le cn^q \lr{1+ \frac{\alpha^q}{1-\alpha^q}} =  \frac{cn^q}{1-\alpha^q}.$$ 
	Therefore, we have the desired bound on the weighted treedepth by induction. 
	
	Now we look the treewidth construction part of the algorithm in order to sketch the claims about bounds on time and space. Given the graph $G_\P$, we construct a graph $H$ by replacing every vertex $V_i$ with a (new) clique $C_i$ of size $\log(1+|V_i|)$. If $V_iV_j \in E(G_\P)$, we also add edges from every vertex in $C_i$ to every vertex in $C_j$. As shown in \cite{BergBKMZ20}, the weighted width of $G_\P$ is equal to the treewidth of $H$, plus $1$. Note that $|V(H)| = \sum_{V_i \in \P}\log(1+|V_i|) \le n$, since $\P$ is a partition of $V(G)$. 
	
	The algorithm from \cite{reed2003algorithmic} (see also Section 7.6.2 in the Parameterized Algorithms book \cite{cygan2015parameterized}) for approximating treewidth of a graph $H$ works as follows. Suppose the treewidth of a graph is $k$, which is known. At the heart of this algorithm is a procedure $\mathtt{decompose}(W, S)$, where $S \subsetneq W \subseteq V(H)$, and $|S| \le 3k+4$. This procedure tries to decompose the subgraph $H[W]$ in such a way that $S$ is completely contained in one bag of the tree decomposition. The first step is to compute a partition $(S_A, S_B)$ of $S$, such that the size of the separator separating $S_A$ and $S_B$ in $H[W]$ is at most $k+1$. This is done by exhaustively guessing all partitions, which takes $2^{\Oh(k)}$ time. For each such guess of $(S_A, S_B)$, we run a polynomial time algorithm to check whether the bound on the separator size holds. Once such a partition is found, a set $\hat{S} \supsetneq S$ is found by augmenting $S$ in a particular way. Finally, we recursively run the procedure $\mathtt{decompose}(N_H[D], N_H(D))$, for each connected component $D$ in $H[W \setminus \hat{S}]$. Finally, the tree decomposition of $H[W]$ is computed by augmenting the tree decompositions computed by the recursive procedure for its children, with the root bag containing $\hat{S}$. It is shown that this algorithm computes a tree decomposition of width $\Oh(\tw)$ in time $2^{\Oh(\tw)} \cdot n^{\Oh(1)}$. Furthermore, it can also be observed that it only uses polynomial space.
	
	Therefore, computing a tree decomposition of $\Gp$ of weighted treewidth $\ndd$ takes $\tndd$ time and polynomial space, corresponding to the original graph $G$ with $n$ vertices. The treewidth computation algorithm is called at most $n$ times, and there is additional polynomial processing at every step. This implies the time and space bounds as claimed.
\end{proof}

We note that de Berg et al.\ \cite{BergBKMZ20} show the existence of an $\alpha$-balanced separator of weight $\ndd$, which they then use to show the same bound on weighted treewidth (Theorem \ref{fv:thm:deberg-weighted-treewidth}). Moreover, this separator can be computed in $\Oh(n^{d+2})$ time if we are also given the geometric representation of the underlying objects in $\real^d$. However, without geometric representation it is not clear whether this separator can be directly computed. Therefore, we first compute an approximate weighted treedepth decomposition, and then retrieve the separator bag in the proof of Theorem \ref{fv:thm:treedepth}. Now, we prove the following abstract theorem which models at a high level our subexponential algorithms using polynomial space.

\begin{theorem} \label{fv:thm:subexp-poly}
	Let $\mathcal{A}$ be an algorithm for solving a problem on graph $G$, that takes input $(G, d, \P, \Gp)$, and a weighted treedepth decomposition $(F, \varphi)$ of $\Gp$ of weighted depth $\ndd$ (and optionally additional inputs of polynomial size). Suppose $\mathcal{A}$ is a recursive algorithm which uses $(F, \varphi)$ in the following way. At every node $u \in V(F)$, it spends time proportional to $2^{\Oh(\omega(u))} \cdot n^{\Oh(1)}$, uses polynomial space, and makes at most $2^{\Oh(\omega(u))}$ recursive calls on the children of $u$. Then, the algorithm $\mathcal{A}$ runs in time $\tndd$, and uses polynomial space.
\end{theorem}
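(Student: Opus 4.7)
The plan is to set up a straightforward induction on the forest $F$. Fix a constant $c \ge 1$ that absorbs the hidden constants in the three $O(\cdot)$'s from the hypothesis. For $u \in V(F)$, let $T(u)$ denote the worst-case running time of $\mathcal{A}$ on the subtree of $F$ rooted at $u$, and let $W(u) \coloneqq \max_P \sum_{v \in V(P)} \omega(v)$, where $P$ ranges over paths from $u$ down to a descendant leaf. By hypothesis on $(F, \varphi)$, we have $W(r) = O(n^{1-1/d})$ for every root $r$ of $F$.

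The assumptions on $\mathcal{A}$ yield the recurrence
\[
T(u) \;\le\; 2^{c\,\omega(u)} \cdot n^{c} \;+\; 2^{c\,\omega(u)} \cdot \max_{v \,\text{child of}\, u} T(v),
\]
since the at most $2^{c\omega(u)}$ recursive calls can be grouped by the child they are made on, and each is upper bounded by $\max_v T(v)$. I would then prove by induction on the depth $d_u$ of the subtree rooted at $u$ that $T(u) \le (d_u + 1) \cdot 2^{c\,W(u)} \cdot n^{c}$. The base case (leaves) is immediate, and the inductive step uses the identities $W(u) = \omega(u) + \max_v W(v)$ and $d_u \ge 1 + d_v$ to combine the two summands into a single term of the claimed form. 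Since $d_u \le |V(F)| \le n$, applying the bound at any root $r$ gives $T(r) \le n^{c+1} \cdot 2^{c \cdot O(n^{1-1/d})} = \tndd$, as required.

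For the space bound, I would observe that the call tree of $\mathcal{A}$ is a subtree of $F$, and the at most $2^{c\omega(u)}$ recursive calls at $u$ are executed sequentially, as is standard for any natural recursive implementation. Consequently, at every moment the call stack has depth at most the depth of $F$, which is at most $|V(F)| \le n$. Each active stack frame stores only polynomial auxiliary data (by hypothesis on $\mathcal{A}$), and the working memory used by one completed recursive call can be reused for the next, so the total working space is polynomial in $n$.

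The only step requiring genuine care is the inductive time analysis: the single factor $2^{c\,\omega(u)}$ in front of the recursive term must be absorbed into $W(u)$ without being amplified by the branching count. Writing the branching explicitly as \emph{(number of sibling calls) $\times$ (maximum child time)} — as opposed to summing $T(v)$ over all branches, which would invite double-counting — is what keeps the recurrence clean and makes the exponent stay linear in $W(u)$. Once that bookkeeping is set up, both claims follow without further work.
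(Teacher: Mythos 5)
Your recurrence is $T(u) \le 2^{c\omega(u)} n^c + 2^{c\omega(u)} \max_{v \in C(u)} T(v)$, which treats ``at most $2^{O(\omega(u))}$ recursive calls on the children'' as a bound on the \emph{total} number of calls issued from $u$. The reading the paper's own proof uses (and the one satisfied by the applications) is that there are $2^{O(\omega(u))}$ branches at $u$, and \emph{each branch recurses into every child once}: in the \textsc{Independent Set} algorithm, for instance, every guess $I \in \I_u$ triggers one call per child, giving $2^{O(\omega(u))} \cdot |C(u)|$ calls in total. Since $|C(u)|$ can be $\Theta(n)$ while $\omega(u)$ is constant, that factor cannot be absorbed into the $O(\cdot)$. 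The correct recurrence therefore has $\sum_{v \in C(u)} T(v)$ where you wrote $\max_v T(v)$; this is exactly what the paper writes.

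With the sum in place, your invariant $T(u) \le (d_u+1)\,2^{cW(u)}\,n^c$ no longer closes: the inductive step produces an extra multiplicative $|C(u)|$ that the depth factor $(d_u+1)$ cannot absorb (a node with $m$ leaf children already violates it for $m \ge 3$). Two repairs are needed, and together they recover the paper's proof. First, replace $(d_u+1)$ by the number of leaves $\ell(u)$ in the subtree rooted at $u$, since $\sum_{v \in C(u)} \ell(v) = \ell(u) \le n$, so the branching factors telescope into a single polynomial factor. Second, the coefficient of $W(u)$ in the exponent must be doubled to $2c$: at each level the term $2^{c\omega(u)}$ appears both as the local work and as the multiplier in front of the recursive sum, and (using $\omega(u)\ge 1$) the induction only closes when the exponent has room for both --- this is precisely the paper's $2^{2\omega(\pi_u)}$. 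The final bound is still $\tndd$ since $\ell(r)\le n$ and $2\,\wtd(\Gp)=\Oh(n^{1-1/d})$. Your space argument is correct and matches the paper's.
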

\begin{proof}
	In the following, to avoid cumbersome notation, we assume that the constants in the exponent of $2^{\Oh(\omega(u))}$ are $1$, i.e., that the algorithm spends $2^{\omega(u)} \cdot n^{\Oh(1)}$ time at $u$, and makes $2^{\omega(u_i)}$ recursive calls. It is easy to see that any other constants in the exponent is absorbed in the exponent of $\tndd$.
	
	Consider node $u \in V(F)$. We will inductively prove that the overall running time of $\mathcal{A}$ corresponding to one recursive call on $u$ is upper bounded by $\displaystyle \ell(u)  \cdot n^{\Oh(1)} \cdot \max_{\pi_u} 2^{2\omega(\pi_u)} $, where $\ell(u)$ is the number of leaves in the subtree of $F$ rooted at $u$, and the maximum is taken over all root-leaf paths $\pi_u$ in the subtree rooted at $u$. Assuming this claim is true, then the bound on the running time holds as follows. The running time of $\mathcal{A}$ corresponding to the root $r$ of $F$ is upper bounded by $n \cdot n^{\Oh(1)} \cdot \max_{\pi} 2^{2\omega(\pi)}$, where the maximum is taken over all root-leaf paths in $F$. However, $2^{2\omega(\pi)}$ is at most $2^{\Oh(\wtd(\Gp))}$, which is at most $\tndd$. Furthermore, if the algorithm uses polynomial space at every node in $V(F)$, and since the (unweighted) depth of $F$ is at most $n$, it only uses polynomial space overall.
	
	Now we prove the inductive claim. Fix a node $u \in V(F)$. If $u$ is leaf in the tree, then the running time is upper bounded by $2^{\omega(u)} \cdot n^{\Oh(1)}$ by the property of $\mathcal{A}$. Note that $\ell(u) = 1$ in this case. Now, suppose $u$ is an arbitrary internal node in $V(F)$. The algorithm spends time proportional to $2^{\omega(u)} \cdot n^{\Oh(1)}$, and makes at most $2^{\omega(u)}$ recursive calls on the children of $u$. Let $C(u)$ denote the set of children of $u$. By the inductive hypothesis, time taken by $\mathcal{A}$ in one recursive call at any child $v_i \in C(u)$ is bounded by $\displaystyle \ell(v_i)  \cdot n^{\Oh(1)} \cdot \max_{\pi_{v_i}} 2^{2\omega(\pi_{v_i})}$, where the maximum is taken over all $v_i$-leaf paths $\pi_{v_i}$ in the subtree rooted at $v_i$. Therefore, the overall running time at $u$ is bounded by
	\begin{align*}
		T(u) &\le 2^{\omega(u)} \cdot n^{\Oh(1)} + 2^{\omega(u)} \cdot \sum_{v_i \in C(u)} \ell(v_i) \cdot n^{\Oh(1)} \cdot \max_{\pi_{v_i}} 2^{2\omega(\pi_{v_i})}
		\\&\le n^{\Oh(1)} \cdot \lr{2^{\omega(u)} + \sum_{v_i \in C(u)} \ell(v_i) \cdot \max_{\pi_{v_i}} 2^{\omega(u) + 2\omega(\pi_{v_i})}}
		\\&\le \ell(u) \cdot n^{\Oh(1)} \cdot \lr{2^{\omega(u)} + \max_{\pi_u} 2^{\omega(u) + 2\omega(\pi_{v_i})}} \tag{Since there are at most $\ell(u)$ paths $\pi_u$ going from $u$ to a leaf}
		\\&\le \ell(u) \cdot n^{\Oh(1)} \cdot \max_{\pi_u} 2^{2\omega(\pi_u)}.
	\end{align*}
\end{proof}

\iflong
\else

\paragraph{Independent Set.} Using Theorem \ref{fv:thm:subexp-poly}, it is easy to get a $\tndd$ time, polynomial space algorithm for \textsc{Independent Set} on the intersection graphs of similarly-sized fat objects as follows. Given $(G, d, \P, \Gp)$, we first observe that every $V_i \in \P$ is a union of at most $\k$ cliques, which implies that the intersection of an independent set with any $V_i$ is bounded by $\k$. 

A recursive algorithm for \textsc{Independent Set} works with the weighted treedepth decomposition $(F, \varphi)$. When the algorithm is at a node $u_i \in \P$, we make a recursive call to the children of $u_i$, corresponding to each independent subset $U_i \subseteq \varphi(u_i) = V_i$ of size at most $\k$, that is independent. We recursively compute a Maximum Independent Set in the subgraph of $G$, corresponding to the subtree rooted at each children of $u_i$, with the vertices in $N(U_i)$ removed. We return the maximum independent set found over all choices of the subset $U_i$. Finally, we observe that the number of subsets of $V_i$ of size at most $\k$ is at most $(1+|V_i|)^\k = 2^{\Oh(\log(1+|V_i|))} = 2^{\Oh(\omega(u_i))}$, where we use the fact that $\k = \Oh(1)$. A more formal description of the algorithm can be found in the Appendix. 

\begin{theorem} %\label{fv:thm:max-is}
	There exists a $\tndd$ time, polynomial space algorithm to compute a maximum (weight) independent set in the intersection graphs of similarly sized fat objects in $\real^d$.
\end{theorem}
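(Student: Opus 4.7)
The plan is to instantiate the abstract framework of Theorem~\ref{fv:thm:subexp-poly}. First, I would invoke Lemma~\ref{fv:lem:deberg-kpart} to compute a $\k$-partition $\P = \{V_1, \ldots, V_t\}$ of $G$ in polynomial time, and then apply Theorem~\ref{fv:thm:treedepth} to obtain, in time $\tndd$ and polynomial space, a weighted treedepth decomposition $(F,\varphi)$ of $\Gp$ of weighted depth $\ndd$. The key structural observations are two. First, since each $V_i$ is a union of at most $\k$ cliques in $G$, any independent set of $G$ intersects $V_i$ in at most $\k$ vertices, and the number of candidate subsets to enumerate inside $V_i$ is therefore at most $(1+|V_i|)^{\k} = 2^{\Oh(\omega(u_i))}$ (using $\k = \Oh(1)$ and $\omega(u_i) = \log(1+|V_i|)$). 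Second, by the defining property of a treedepth decomposition, every edge of $\Gp$ lies on a single root-to-leaf path of $F$; hence if $u_a$ and $u_b$ hang off different children of some node $u \in V(F)$, there are no edges in $G$ between $\varphi(u_a)$ and $\varphi(u_b)$. So once we commit to which vertices inside $V_u = \varphi(u)$ are taken into the independent set, the subtrees below the children of $u$ are completely decoupled.

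Next, I would define a recursive procedure $\mathtt{MIS}(u, S)$ that returns the maximum (weight) independent set contained in $\bigcup_{u' \in \mathrm{desc}(u)} \varphi(u') \setminus S$, where $S \subseteq V(G)$ records the vertices forbidden by choices made at ancestors of $u$. The procedure enumerates every subset $U \subseteq V_u \setminus S$ with $|U| \le \k$ that is independent in $G[V_u]$ and has no edge to vertices picked by ancestors (which is implicit since any such vertex would be in $N_G$ of an ancestor's choice and thus in $S$). For each such $U$, it recursively calls $\mathtt{MIS}(v, S \cup N_G(U))$ on every child $v$ of $u$ and returns $\max_U \bigl(w(U) + \sum_{v \in C(u)} \mathtt{MIS}(v, S \cup N_G(U))\bigr)$. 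The overall answer is obtained by calling $\mathtt{MIS}(r, \emptyset)$ at the root $r$ of $F$ (assuming $G$ is connected; otherwise sum over the roots of $F$). To keep the space polynomial, $S$ is maintained as a global Boolean array on $V(G)$ that is incremented on descent and restored on ascent.

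To analyze it I would verify the hypotheses of Theorem~\ref{fv:thm:subexp-poly}. Correctness follows from the structural observations above: the optimal solution restricted to $V_u$ is one of the enumerated subsets $U$, and after fixing $U$ the subproblems on the children's subtrees are independent and solved optimally by induction. At each node $u$, the procedure examines $2^{\Oh(\omega(u))}$ candidate subsets $U$, checks independence and compatibility with $S$ in polynomial time, and for each $U$ invokes the recursion on every child of $u$; this matches the ``$2^{\Oh(\omega(u))}$ recursive calls on the children of $u$'' regime used in the proof of Theorem~\ref{fv:thm:subexp-poly}. Each local frame needs only polynomial space (the subset $U$, the edits to the global array $S$, and a running maximum). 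Plugging these bounds into Theorem~\ref{fv:thm:subexp-poly} yields total running time $\tndd$ and polynomial space.

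I do not anticipate a serious obstacle here: the algorithm is a direct branching, and all the non-trivial work has already been absorbed into Theorem~\ref{fv:thm:treedepth} (for building the decomposition) and Theorem~\ref{fv:thm:subexp-poly} (for the generic time/space accounting). The one point that must be argued carefully is that the branching on subsets of $V_u$ of size $\le \k$ is exhaustive for optimal solutions, which is exactly where the $\k$-clique structure of $V_u$ is used; once that is in place, the weighted version is handled identically by replacing cardinality with weight in the returned value.
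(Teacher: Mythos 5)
Your proposal matches the paper's proof essentially verbatim: the same invocation of Lemma~\ref{fv:lem:deberg-kpart} and Theorem~\ref{fv:thm:treedepth}, the same observation that an independent set meets each $V_i$ in at most $\k$ vertices so the local branching factor is $(1+|V_i|)^{\k}=2^{\Oh(\omega(u_i))}$, the same recursive procedure passing down a set of forbidden vertices $S\cup N_G(U)$ to the children, and the same appeal to Theorem~\ref{fv:thm:subexp-poly} for the time and space accounting. The only (harmless) additions are the explicit remark that the treedepth property decouples sibling subtrees and the implementation note about maintaining $S$ as a global array restored on backtracking, neither of which changes the argument.
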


We note that this algorithm can be seen as a re-interpretation of an algorithm from \cite{BergBKMZ20} for \textsc{Independent Set} on intersection graphs of fat objects that are either convex, or similarly sized (cf.\ Corollary 6 in \cite{BergBKMZ20}). We describe this merely as a warm-up example of how to use the weighted treedepth decomposition. We also note that the aforementioned algorithm from \cite{BergBKMZ20} requires the geometric representation to compute the separator. Although our algorithm only works for intersection graphs of similarly sized fat objects, it has the advantage that it does not require the geometric representation of the underlying objects. 

\fi

\iflong
\section{Simple Recursive Algorithms} \label{fv:sec:recursive}

\subsection*{Independent Set}
Recall that the task of the \textsc{Independent Set} problem is, given a graph $G$, to find an independent set, i.e., a set of pairwise nonadjacent vertices, of maximum size.
Given $G = (V, E)$, we first obtain $(G, d, \P, \Gp)$, and associated $(F, \varphi)$ of weighted treedepth $\ndd$ in $\tndd$ time and polynomial space using Theorem \ref{fv:thm:treedepth} We have the following simple observation from \cite{BergBKMZ20}.

\begin{observation}[\cite{BergBKMZ20}] \label{fv:obs:indepset-kpart}
	Let $\P$ be a \kpart of $G = (V, E)$, and let $I \subseteq V(G)$ be an independent set. Then, $|I \cap V_i| \le \k$ for any $V_i \in \P$.
\end{observation}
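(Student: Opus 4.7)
The plan is to apply the clique-cover property from the definition of a \kpart directly, with essentially no extra machinery. By definition, every $V_i \in \P$ is a union of at most $\k$ (not necessarily disjoint) cliques of $G$; the connectedness requirement on $G[V_i]$ plays no role here and can be ignored. So the first step is to fix such a decomposition $V_i = C_1 \cup C_2 \cup \cdots \cup C_s$, where $s \le \k$ and each $C_j$ is a clique in $G$.

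The second and only substantive step is to observe that an independent set meets each clique in at most one vertex: since $I$ is independent and $C_j$ is a clique, any two vertices of $I \cap C_j$ would be pairwise nonadjacent while also pairwise adjacent, which is impossible unless $|I \cap C_j| \le 1$. Summing over the at most $\k$ cliques covering $V_i$ then gives
\[
|I \cap V_i| \;=\; \Bigl|I \cap \bigcup_{j=1}^{s} C_j\Bigr| \;\le\; \sum_{j=1}^{s} |I \cap C_j| \;\le\; s \;\le\; \k,
\]
which is exactly the claimed bound.

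I do not anticipate any real obstacle: the whole argument is a one-line unpacking of the definition of a \kpart combined with the defining property of independent sets. The only thing to be careful about is that the $\k$ cliques need not be disjoint, so the union bound is used rather than a direct equality; but the inequality is all that is required.
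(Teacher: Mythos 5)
Your proof is correct and is the standard one-line argument: decompose $V_i$ into at most $\k$ cliques, note an independent set meets each clique in at most one vertex, and apply the union bound. The paper states this observation without proof (citing de Berg et al.), and your argument is exactly the intended one.
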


We now describe our recursive algorithm. In addition to $(G, d, \P, \Gp)$, and $(F, \varphi)$, the algorithm takes input $u \in V(F)$, and $D \subseteq V(G)$. Here, $u$ is the current node in $V(F)$ in the recursive algorithm, and $D \subseteq V(G)$ denotes the set of \emph{disallowed} vertices that cannot be added to the independent set, due to previous choices made by the algorithm. 

Let $\varphi(u) = V_i \in \P$ be the associated part in the \kpart $\P$, and define $\I_u$ be the collection of subsets $I \subseteq V_i$ such that (1) $I$ is independent in $G$, (2) $I \cap F = \emptyset$, and (3) $|I| \le \k$. For each guess $I \in \I_u$, we call the algorithm recursively on the children of $u$, where the set of disallowed nodes is updated to $D' \gets D \cup N_G(I)$. For a child $v_j$ of $u$, let $\textsc{MaxIS}(v_j, I)$ denote the independent set returned by the algorithm corresponding to guess $I$. We return the independent set maximizing $|I \cup \bigcup_{v_j \in C(u)} \textsc{MaxIS}(v_j, I)|$ over all $I \in \I_u$. It is easy to see the correctness of the algorithm. It is also easy to extend the algorithm to \textsc{Weighted Independent Set}, where the input is a weighted graph and the task is to find an independent set of maximum weight.

Note that $|\I_u| \le (1+|V_i|)^\k = \exp(O(\k \log(1+|V_i|))) = 2^{O(\omega(u))}$, since $\k = O(1)$. Therefore, the algorithm spends at most $2^{O(\omega(u))} \cdot n^{O(1)}$ time at $u$, uses polynomial space, and corresponding to each of the $2^{O(\omega(u))}$ guesses for $I$, it calls the algorithm recursively on its children. Using Theorem \ref{fv:thm:subexp-poly}, the bounds on space and time follow. 

\begin{theorem} \label{fv:thm:max-is}
	There exists a $\tndd$ time, polynomial space algorithm to compute a maximum (weight) independent set in the intersection graphs of similarly sized fat objects in $\real^d$.
\end{theorem}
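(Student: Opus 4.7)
The plan is to invoke Theorem~\ref{fv:thm:treedepth} as a preprocessing step: given $G$, first compute $(G, d, \P, \Gp)$ via Lemma~\ref{fv:lem:deberg-kpart}, and then a weighted treedepth decomposition $(F, \varphi)$ of $\Gp$ of weighted depth $\ndd$ in $\tndd$ time and polynomial space. Once this scaffold is in place, the combinatorial content reduces to Observation~\ref{fv:obs:indepset-kpart}: since any $V_i\in\P$ is the union of at most $\k$ cliques, the intersection of any independent set with $V_i$ has size at most $\k$. Hence at each part $V_i$ there are only $(1+|V_i|)^\k = 2^{O(\k\log(1+|V_i|))} = 2^{O(\omega(u_i))}$ candidate ``local slices'' of an independent set to consider, where $\varphi(u_i)=V_i$.

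Next I would describe the recursive branching procedure $\textsc{MaxIS}(u, D)$, where $u\in V(F)$ is the current node and $D\subseteq V(G)$ records vertices forbidden by earlier commitments. Let $V_i = \varphi(u)$ and let $\I_u$ be the family of independent subsets $I \subseteq V_i\setminus D$ with $|I|\le \k$. For each $I\in\I_u$, recurse on every child $v_j$ of $u$ with disallowed set $D\cup N_G(I)$, and return the choice of $I$ maximising $|I|+\sum_{v_j\in C(u)} |\textsc{MaxIS}(v_j, D\cup N_G(I))|$ (or, in the weighted variant, the total weight). Correctness follows from the defining property of a treedepth decomposition: for any edge $uv\in E(\Gp)$, one endpoint is an ancestor of the other in $F$, so every pair of vertices of $G$ belonging to different subtrees of $F$ is already separated by the ancestor parts, and the ancestor ``disallowed set'' correctly captures the constraints inherited from above. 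Enumerating $\I_u$ naively in time $2^{O(\omega(u))}\cdot n^{O(1)}$ and using polynomial workspace at $u$ puts the algorithm exactly in the template of Theorem~\ref{fv:thm:subexp-poly}.

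Finally, I would invoke Theorem~\ref{fv:thm:subexp-poly} to finish: the algorithm does $2^{O(\omega(u))}\cdot n^{O(1)}$ work at each node, makes at most $2^{O(\omega(u))}$ recursive calls to its children, and uses polynomial space, so the overall running time is $\tndd$ and the overall space remains polynomial (the recursion depth is bounded by the unweighted depth of $F$, which is at most $n$, and each stack frame stores only a subset $I$ of size $\le\k$ together with a pointer into $F$ and an update to $D$). The weighted version is handled identically by replacing cardinalities with weights in the objective. I do not expect a genuine obstacle here; the only point requiring mild care is to argue that $N_G(I)$ can be computed and the update to $D$ stored efficiently along the recursion stack in polynomial space, which is immediate because at any moment we only need the additive contributions along one root-to-leaf path of $F$.
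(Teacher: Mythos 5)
Your proposal matches the paper's proof essentially step for step: compute $(F,\varphi)$ via Theorem~\ref{fv:thm:treedepth}, use Observation~\ref{fv:obs:indepset-kpart} to bound the branching at each node by $(1+|V_i|)^\k = 2^{O(\omega(u_i))}$, recurse with an accumulated ``disallowed'' set $D$ updated by $N_G(I)$, and invoke Theorem~\ref{fv:thm:subexp-poly} for the time and space bounds. The argument is correct and there is no meaningful divergence from the paper's approach.
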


\begin{figure}[h]
	\centering
	\begin{tabular}{|c|c|c|c|c|c|}
		\hline
		Algorithm & Space & Similarly-sized & Convex & Robust \\
		\hline
		Corollary 2.4 in \cite{BergBKMZ20} & {\color{ForestGreen}Poly} & {\color{ForestGreen}Yes} & {\color{ForestGreen}Yes} & {\color{Red}No} \\
		\hline
		Theorem 2.13 in \cite{BergBKMZ20} & {\color{red}$\tndd$} & {\color{ForestGreen}Yes} & {\color{Red}No} & {\color{ForestGreen}Yes} \\
		\hline
		Theorem \ref{fv:thm:max-is} & {\color{ForestGreen}Poly} & {\color{ForestGreen}Yes} & {\color{Red}No} & {\color{ForestGreen}Yes} \\
		\hline
	\end{tabular}
	\caption{Comparison of three $\tndd$-time algorithms for \textsc{Independent Set} on the intersection graphs of fat objects. ``Robust'' in the last column means that the algorithm does not require the geometric representation of the objects. Our algorithm (third row) can be seen as a re-interpretation of the one in the first row, using the framework of weighted treedepth. }
\end{figure}

\subsection*{$r$-Dominating Set}
For a positive integer $r$, a set of vertices $D\subseteq V(G)$ is an \emph{$r$-dominating} set if every vertex of $G$ is at distance at most $r$ from some vertex of $D$. Throughout this subsection we assume that $r$ is a fixed constant. The \textsc{$r$-Dominating Set} asks for an $r$-dominating set in a given graph.

\begin{claim}[\cite{BergBKMZ20}]
	Suppose that $G = (V, E)$ has a \kpart $\P$ such that $G_\P$ has maximum degree $\Delta$. Then, there exists an $r$-dominating set $D \subseteq V$ such that $|D \cap V_i| \le \k^2 (\Delta+1)$ for any $V_i \in \P$.
\end{claim}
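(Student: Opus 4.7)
The plan is to exhibit a particularly simple $r$-dominating set $D$ that picks only $\k$ vertices per part, which is trivially bounded by $\k^2(\Delta+1)$, giving the claim. I would not even need to use the bound on the maximum degree of $\Gp$.

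Fix for each $V_i \in \P$ a decomposition $V_i = C^i_1 \cup \cdots \cup C^i_\k$ into at most $\k$ cliques of $G$, guaranteed by property (2) of a \kpart. For every pair $(i,a)$, pick an arbitrary representative $v^i_a \in C^i_a$, and set $D \coloneqq \{v^i_a : V_i \in \P,\ 1 \le a \le \k\}$. Then trivially $|D \cap V_i| \le \k \le \k^2(\Delta+1)$.

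To verify that $D$ is an $r$-dominating set, I would show the stronger statement that $D$ is already $1$-dominating (which suffices since $r \ge 1$): take any $u \in V(G)$, let $V_i$ be the part containing $u$, and let $C^i_a$ be a clique in the cover of $V_i$ with $u \in C^i_a$. Since $v^i_a$ also lies in $C^i_a$ and $C^i_a$ is a clique of $G$, either $u = v^i_a \in D$ or $uv^i_a \in E(G)$. Either way, $u$ is within distance one of some vertex of $D$.

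There is essentially no obstacle in this argument, because property (2) of the \kpart is exactly what is required. I remark that the bound $\k^2(\Delta+1)$ cited from \cite{BergBKMZ20} is quite loose for the bare existence statement; it most likely arises from a more structured construction there (for instance, an exchange argument on a minimum $r$-dominating set using private $r$-dominees, categorized by the first neighboring part encountered when leaving $V_i$---there are at most $\Delta+1$ such parts---and then by cliques on either side, producing a $\k^2$ factor). Such additional structure may be useful for the algorithm that follows, but it is not needed to prove the claim as stated.
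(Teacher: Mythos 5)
Your argument is a correct proof of the statement as literally worded, and you are right that bare existence is trivial. But the claim as written is a misstatement of what is actually needed: the statement proved in \cite{BergBKMZ20}, sketched in the paper in the paragraph immediately following the claim, and relied upon by the algorithm, is that \emph{every} $r$-dominating set $D$ can be converted, without increasing its size, into an $r$-dominating set $D'$ with $|D' \cap V_i| \le \k^2(\Delta+1)$ for all $V_i \in \P$; equivalently, some \emph{minimum} $r$-dominating set satisfies the per-part bound. Your construction does not establish this. The set you build has $\k\cdot|\P|$ vertices, potentially far larger than the $r$-domination number, so it says nothing about the structure of optimal solutions.

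The distinction is not cosmetic. The algorithm for $r$-\textsc{Dominating Set} only enumerates partial solutions $D$ with $|D\cap V_j|\le\k^2(\Delta+1)$ for every $V_j$ (condition (2) in the definition of $\mathcal{D}$), and this restriction is sound only if it cannot exclude an optimum. The exchange argument you correctly anticipate in your closing paragraph --- start from an $r$-dominating set of minimum size, locate a clique $C\subseteq V_i$ with $|C\cap D| > \k(\Delta+1)+1$, keep one vertex of $C\cap D$, and replace the rest by at most one vertex in each of the at most $\k$ cliques of each of the at most $\Delta$ parts $V_j$ adjacent to $V_i$ in $\Gp$ --- is therefore not an over-engineered derivation of the same bound, but the proof of the statement the paper actually needs; the degree bound on $\Gp$ and the union-of-$\k$-cliques structure both enter essentially there, unlike in your version. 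So: correct for the letter of the claim, but the letter of the claim needs fixing, and the fixed claim requires the exchange argument.
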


At a high level, suppose there exists an $r$-dominating set $D$ such that for some $V_i \in \P$, $|D \cap V_i| > \k^2 (1+\Delta)$. Recall that each $V_i$ is union of at most $\k$ cliques. Therefore, there exists a clique $C \subseteq V_i$, such that $|C \cap D| > \k (1+\Delta)+1$. Then, we remove all but one vertex of $C \cap D$, and replace them with one vertex each in $\k$ cliques $C' \subseteq V_j$, where $V_iV_j \in E(G_\P)$. Note that $|D'| \le |D|$, using the properties of $(G, d, \P, \Gp)$, it can be shown that $D'$ is also $r$-dominating set of $G$. We repeat this operation until the condition is satisfied for every $V_i \in \P$. A formal proof can be found in \cite{BergBKMZ20}.

For \textsc{$r$-Dominating Set}, we cannot directly use Theorem \ref{fv:thm:subexp-poly} to show the bound on running time. In fact, we need the following strengthened version of Lemma \ref{fv:thm:deberg-weighted-treewidth} from \cite{BergBKMZ20}, which we prove for completeness. We need the following notation. For any $V_i \in \P$, let $N_r[V_i] \coloneqq \{V_i\} \cup \{V_j \in \P: d_{G_\P}(V_i, V_j) \le r\}$.

\begin{theorem}[\cite{BergBKMZ20}] \label{fv:thm:deberg-rdom-kpart}
	Let $G$ be the intersection graph of $n$ similarly-sized fat objects. Then, there exists a \kpart $\P$ and a corresponding $G_\P$ of maximum degree at most $\Delta$, where $\k, \Delta$ are constants, and $G_\P$ has a weighted tree decomposition $(T, \beta)$ with the additional property that for any node $t$, the total weight of the partition classes in $\bigcup_{V_i \in \beta(t)} \LR{N_r[V_i]}$ is $\ndd$.
\end{theorem}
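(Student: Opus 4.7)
The plan is to mimic the balanced-separator construction underlying Lemma~\ref{fv:thm:deberg-weighted-treewidth} while tracking $r$-neighborhoods rather than just separators. First I would fix the \kpart $\P$ produced by Lemma~\ref{fv:lem:deberg-kpart}, so $\Gp$ has maximum degree $\Delta$, every part $V_i$ is a union of at most $\k$ cliques of similarly-sized $\alpha$-fat objects, and in particular each $V_i$ has geometric diameter at most $c_0 D$ for a constant $c_0 = c_0(\k,\alpha,d)$ (where $D$ is the maximum object diameter). The tree decomposition $(T,\beta)$ would be built recursively: given a current set of parts $\P' \subseteq \P$, the de Berg et al.\ argument supplies a $(d-1)$-dimensional slab $H \subset \real^d$ of width $\Theta(D)$ such that the set $S \subseteq \P'$ of parts whose underlying objects meet $H$ is an $\alpha'$-balanced separator of $\Gp[\P']$ of weight $\omega(S) = \ndd$. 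The weight bound comes from a packing argument: fatness and similar sizes force each part meeting $H$ to occupy at least a constant share of a bounded region around $H$, so $\sum_{V_i \in S}\log(1+|V_i|) = \ndd$.

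To upgrade this to a bound on $\omega(N_r[S])$, I would observe that $r$-neighborhoods in $\Gp$ correspond to geometric neighborhoods of controlled radius. Indeed, an edge $V_iV_j \in E(\Gp)$ forces the object-bundles underlying $V_i$ and $V_j$ to intersect geometrically, so moving along one edge of $\Gp$ shifts the underlying region by at most $c_0 D$. Iterating at most $r$ times, any $V_j \in N_r[V_i]$ has its underlying objects within geometric distance $r c_0 D$ of those of $V_i$. Hence $N_r[S]$ is contained in the set of parts whose objects meet the thickened slab $H^+$ obtained by expanding $H$ by $r c_0 D$ on each side. Since $r$ and $c_0$ are constants and the objects are similarly sized, $H^+$ remains a slab of width $\Theta(D)$ (with constants now depending on $r$), and the same packing argument yields $\omega(N_r[S]) = \ndd$.

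Finally, I would place $S$ in the root bag of the current subtree and recurse on each connected component of $\Gp[\P' \setminus S]$, attaching the recursively produced subtrees as children of this root bag; standard arguments (cf.\ \cite{BergBKMZ20,cygan2015parameterized}) show this yields a valid tree decomposition of $\Gp$ in which every bag $\beta(t)$ satisfies the desired $r$-neighborhood weight bound. The main obstacle is the geometric claim that $r$-adjacency in $\Gp$ inflates the underlying region by only a constant factor; this requires carefully combining the bounded geometric diameter of each part (from the \kpart plus similar-size fatness) with the fact that edges of $\Gp$ encode genuine geometric intersections of the underlying object-bundles.
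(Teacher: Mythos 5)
Your argument is sound, but it takes a genuinely different route from the paper. You re-open the geometric balanced-separator argument underlying Lemma~\ref{fv:thm:deberg-weighted-treewidth}: you thicken the slab by $O(rc_0 D)$, observe that $r$-adjacency in $\Gp$ corresponds to geometric dilation by at most $rc_0 D$ (using that each part has bounded geometric diameter $c_0 D$), and re-run the packing bound on the thickened slab. The paper instead reduces to Lemma~\ref{fv:thm:deberg-weighted-treewidth} as a black box: it builds an \emph{inflated} intersection graph $G^r$ by putting into each $W_i$ a copy of every object belonging to some $V_j \in N_r[V_i]$, notes that $G^r$ is still an intersection graph of $O(n)$ similarly-sized fat objects with $\{W_i\}$ a $c\kappa$-partition (using $|N_r[V_i]|=O(\Delta^r)=O(1)$), invokes Lemma~\ref{fv:thm:deberg-weighted-treewidth} to get a weighted tree decomposition $(T^r,\beta^r)$ of weight $\ndd$, and then replaces each $W_i$ by its core $V_i$ to obtain $(T,\beta)$; the bag-weight bound transfers because $\omega(W_i)=\log(1+\sum_{V_j\in N_r[V_i]}|V_j|)$ and $\sum_{V_j\in N_r[V_i]}\log(1+|V_j|)$ differ by at most the constant factor $O(\Delta^r)$. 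The paper's reduction is more modular — it never re-opens the packing argument, which in de Berg et al.\ is a shifting construction rather than a literal slab — whereas your route demands filling in those geometric details again (and verifying that the thickened region still satisfies the packing bound and that the recursive bag construction accumulates only a geometric series of separator pieces, which you do sketch correctly). Both buy the same bound; the paper's version is shorter precisely because it treats the weighted-treewidth bound as an oracle.
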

\begin{proof}
	Let $\P$ be a $\kappa$-partition of $G = (V, E)$ as guaranteed by Lemma \ref{fv:lem:deberg-kpart}, such that $G_\P$ has maximum degree $\Delta$. For every $V_i$, we define a set $W_i$, where we add a copy of a geometric object belonging to any $V_j \in N_r[V_i]$ to $W_i$. We say that $V_i$ is the core of $W_i$. Note that there are at most $c = O(\Delta^r) = O(1)$ copies of every original geometric object. 
	
	Now, let $G^r$ be the intersection graph defined by the set of geometric objects $\bigcup_{V_i \in \P} W_i$. Note that $G^r$ is the intersection graph of similarly sized fat objects in $d$ dimensions, and the number of vertices in $G^r$ is at most $nc = O(n)$. Furthermore, $\P^r = \{W_i: V_i \in \P\}$ is a $c\k$-partition of $G^r$ with maximum degree $\Delta$. Therefore, using Lemma \ref{fv:thm:deberg-weighted-treewidth}, we conclude that the weighted treewidth of $G^r$ is $\ndd$. Let $(T^r, \beta^r)$ be a weighted tree decomposition of $G^r$ with width $\ndd$. Then, we obtain a weighted tree decomposition $(T, \beta)$ of $G_\P$ by replacing each $W_i$ with its core $V_i$, and observe that $\P$ and $G_\P$ have the claimed properties.
\end{proof}

Now we use the additional guarantee of Theorem \ref{fv:thm:deberg-rdom-kpart} to obtain a $\tndd$ time, polynomial space algorithm for \textsc{$r$-Dominating Set}. One important difference in this algorithm as compared to the other algorithms, is that instead of using weighted treewidth decomposition via Theorem \ref{fv:thm:treedepth}, and then appealing to Theorem \ref{fv:thm:subexp-poly}, it is convenient to state and analyze the algorithm via the separator tree $\Sigma$, defined as follows. Each node of $\Sigma$ is an $\alpha$-balanced separator of an induced subgraph of $G$. If $\ts \subseteq V(G)$ is a node at level $j$ in $V(\Sigma)$, then we will have the following properties (proved subsequently). 
\begin{enumerate}
	\item $\ts$ is equal to $\bigcup_{V_i \in \P'} V_i$ for some subset $\P' \subseteq \P$. Denote this $\P'$ by $\P(\ts)$.
	\item Let $\Pi(\ts) \subseteq V(G)$ denote the union of all ancestors $\ts'$ of $\ts$ \footnote{We assume that a node is an ancestor and a descendant of itself.}, and let $J(\ts) \coloneqq G[\Pi(\ts)]$.
	\item Let $\C(\ts) \subseteq V(G)$ denote the union of all descendants $\ts'$ of $\ts$ in $\Sigma$. Then, $\ts$ is an $\alpha$-balanced separator of $H(\ts) \coloneqq G[\C(\ts)]$. 
	\item $\ts$ induces a separation $(\widetilde{Y}_1, \widetilde{Y}_2)$ in $H(\ts)$, such that $\max\{|\widetilde{X_1}|, |\widetilde{X_2}|\} \le \alpha^{j} n$, where $\widetilde{X}_i = \widetilde{Y}_i \setminus S$ for $i = 1, 2$. Furthermore, $\ts$ has two children $\ts_1, \ts_2$ in $\Sigma$, which are $\alpha$-balanced separators of $G[\widetilde{X}_1], G[\widetilde{X}_2]$, respectively.
	\item Let $\mathcal{N}(\ts) \subseteq \P$ be the set of all $V_j \in \P$, such that (1) $V_j \in N_r[V_i]$, and (2) $V_j \subseteq \C(\ts)$. Then, the total weight of $\mathcal{N}(\ts)$ is upper bounded by $O((\alpha^{j-1} \cdot n)^{\dd})$ (via the guarantee in Theorem \ref{fv:thm:deberg-rdom-kpart}).
\end{enumerate}

We will construct the tree $\Sigma$ recursively. As for the base case, we compute the weighted tree decomposition $(T, \beta)$ of $G_\P$ with the additional property from Theorem \ref{fv:thm:deberg-rdom-kpart}, using the algorithm of \cite{reed2003algorithmic,cygan2015parameterized}. As argued in Theorem \ref{fv:thm:treedepth}, this takes $\tndd$ time and polynomial space. Let $t \in V(T)$ be the separator node, such that $S \coloneqq \bigcup_{V_i \in \beta(t)} V_i$ is an $\alpha$-balanced separator for $G$, for some $\alpha \le 2/3$. It is easy to see that Property 1 holds -- $\P(S) = \beta(t)$. Property 2 is only a definition. Furthermore, Property 5 holds due to the additional guarantee from Theorem \ref{fv:thm:deberg-rdom-kpart}. Finally, let $(Y_1, Y_2)$ be the separation of $G$ due to the separator $S$, and let $X_i = Y_i \setminus S$ for $i = 1, 2$. We will compute $\alpha$-balanced separators $S_1, S_2$ of $G[X_1], G[X_2]$ respectively, and let $S_1, S_2$ to be the two children of $S$ in the tree $\Sigma$. Since $\max\{|X_1|, |X_2|\} \le \alpha n$, Property 4 holds.

Consider a general node $\ts \subseteq V(G)$ at $j$-th level in $\Sigma$, we may inductively assume that $\ts$ is an $\alpha$-balanced separator of an induced subgraph $H$ of $G$, where $H$ contains at most $\alpha^{j-1} n$ vertices. Then, by using arguments similar to the previous paragraph, it follows that the Properties 1-5 hold for $\ts$. After the completing the construction of $\Sigma$, it is apparent that every $v \in \C(\ts)$ is contained in exactly one descendant $\ts'$ of $\ts$. Therefore, $H = H(\ts)$. This completes the construction of the separator tree $\Sigma$. Now let us explain how to compute an $r$-dominating set using this.

Our algorithm is recursive. In addition to $G, \P, \Gp$, and $\Sigma$, it has additional two inputs: $\ts \in V(\Sigma)$, the current node in the separator tree that the algorithm is operating at; and $L \subseteq \Pi(\ts)$, which corresponds to the subset of vertices that are already added to the dominating set in the current recursive call. Furthermore, a $V_j \in \mathcal{N}(\ts)$ may be ``marked'', which indicates that the algorithm has already fixed an $O(\k^2 \Delta)$-size subset of $V_j$ in $L$. 

Let $\ts' = \bigcup_{V_j \in \mathcal{N}(\ts)} V_j$. Now, let $\mathcal{D}$ be the collection of subsets $D \subseteq \ts$ with the following properties: (1) $L \cup D$ is a dominating set of $J(\ts)$, (2) $|D \cap V_j| \le \k^2 (\Delta+1)$ for any $V_j \in \mathcal{N}(\ts)$, and (3) If $V_j \in \mathcal{N}(\ts)$ was already marked, then $D \cap V_j = L \cap V_j$.
Note that the number of such subsets is upper bounded by 
\begin{align*}
	\prod_{V_j \in \mathcal{N}(\ts)} (1+|V_j|)^{\k^2 (1+\Delta)} &= \exp\lr{\k^2 (1+\Delta) \sum_{V_j \in \mathcal{N}(\ts)} \log(1+|V_j|)} 
	\\&= \exp\lr{O(\alpha^{j-1} \cdot n)^{\dd}} \tag{Using Property 5}
\end{align*}

Furthermore, we ``mark'' $V_j \in \mathcal{N}(\ts)$, indicating that we have already fixed the choice of dominating set from $V_j$. Observe that this restriction can only reduce the number of recursive calls made at a subsequent level. Therefore, the number of recursive calls made to the children of $\ts$ can be upper bounded by $\exp\lr{O(\alpha^{j-1} \cdot n)^{\dd}}$, which implies that recurrence for the running time is given by
\begin{align*}
	T(\alpha^{j-1} n) \le \exp\lr{O(\alpha^{j-1} \cdot n)^{\dd}} \cdot n^{O(1)} + \exp\lr{O(\alpha^{j-1} \cdot n)^{\dd}} \cdot 2 \cdot T(\alpha^j n)
\end{align*}
This recurrence solves to $T(n) = \tndd$. 

It is easy to see that the construction of the separator tree $\Sigma$ only requires polynomial space, which follows from a similar discussion in Theorem \ref{fv:thm:treedepth}. Furthermore, the recursive algorithm uses polynomial space at every node of $\Sigma$, and there are $O(n)$ nodes in $\Sigma$. Thus, the overall space complexity of the algorithm is polynomial in $n$.

\begin{theorem} \label{fv:thm:min-ds}
	For any fixed $r \ge 1$, there exists a $\tndd$ time, polynomial space algorithm to compute a minimum $r$-dominating set in the intersection graphs of similarly sized fat objects in $\real^d$.
\end{theorem}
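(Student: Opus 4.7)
My plan is to exploit the strengthened structural result in Theorem \ref{fv:thm:deberg-rdom-kpart}, which guarantees a weighted tree decomposition whose bags not only have small weight but whose $r$-neighborhoods in $\Gp$ also have small total weight. The rationale for needing this strengthening is the kernelization-style observation that there is always a minimum $r$-dominating set $D$ with $|D \cap V_i| \le \k^2(\Delta+1)$ for every $V_i \in \P$: if some clique inside $V_i$ carries too many vertices of $D$, I keep one of them and replace the rest by a single vertex in each of the constantly many neighboring parts in $\Gp$, without losing domination or increasing $|D|$. This reduces $r$-\textsc{Dominating Set} to guessing a subset of size $\Oh(\k^2 \Delta)$ in each part.

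Second, rather than running the generic recursion of Theorem \ref{fv:thm:subexp-poly} on a weighted treedepth decomposition, I would build a separator tree $\Sigma$ recursively: the root is an $\alpha$-balanced separator extracted from the strengthened weighted tree decomposition supplied by Theorem \ref{fv:thm:deberg-rdom-kpart}, and its children are obtained by recursing on the two sides of the separation. An inductive argument shows that a node $\ts$ at level $j$ is an $\alpha$-balanced separator of a subgraph on at most $\alpha^{j-1} n$ vertices, and that the total weight of $\mathcal{N}(\ts)$, the set of parts within $\Gp$-distance $r$ of $\ts$ and lying in the descendant subtree, is $\Oh((\alpha^{j-1} n)^{\dd})$. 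Constructing $\Sigma$ reuses the approximation algorithm of Reed invoked in Theorem \ref{fv:thm:treedepth}, so it runs in $\tndd$ time and polynomial space.

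The recursive solver, at a node $\ts$ with a partial dominating set $L \subseteq \Pi(\ts)$, enumerates subsets $D \subseteq \ts$ with $|D \cap V_j| \le \k^2(\Delta+1)$ that are compatible with previously ``marked'' parts (those whose intersection with $D$ was already fixed by an ancestor call) and such that $L \cup D$ dominates $J(\ts)$; the latter condition is local since any yet-undominated vertex of $\ts$ must have a dominator within $\Gp$-distance $r$, hence inside $\mathcal{N}(\ts)$. The essential trick is to mark every $V_j \in \mathcal{N}(\ts)$ after fixing the guess, so that the two recursive calls on $\ts$'s children do not re-enumerate on these parts; without marking, a single part could contribute its enumeration cost to every separator on a root-leaf path. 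Counting choices via Property 5, the number of guesses at $\ts$ is at most $\exp(\Oh((\alpha^{j-1} n)^{\dd}))$, which yields
\[
T(\alpha^{j-1} n) \le \exp\bigl(\Oh((\alpha^{j-1} n)^{\dd})\bigr) \cdot n^{\Oh(1)} + \exp\bigl(\Oh((\alpha^{j-1} n)^{\dd})\bigr) \cdot 2\, T(\alpha^{j} n).
\]

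The main obstacle will be arguing that this recurrence actually closes at $\tndd$: the multiplicative factor at every level is itself subexponential, so one has to use the fact that $\dd < 1$ to see $\sum_{j \ge 0} (\alpha^{j})^{\dd}$ as a convergent geometric series, absorbing the two recursive branches into a single $\tndd$ term at the top. The polynomial space bound is then immediate, since at any moment the algorithm only stores the current root-to-leaf path of $\Sigma$, the set $L$ along it, and the marking information, all of polynomial size; $\Sigma$ itself need not be materialized in full and can be recomputed on the fly along the active path.
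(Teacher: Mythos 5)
Your proposal follows essentially the same route as the paper's proof: the same bound $|D\cap V_i|\le \k^2(\Delta+1)$ via local rerouting, the same strengthened decomposition from Theorem~\ref{fv:thm:deberg-rdom-kpart}, the same explicit separator tree $\Sigma$ with the $\mathcal{N}(\ts)$ weight bound at level $j$, the same marking trick to avoid re-enumerating a part on every separator of a root--leaf path, the same recurrence $T(\alpha^{j-1}n)\le \exp(\Oh((\alpha^{j-1}n)^{\dd}))\cdot n^{\Oh(1)} + \exp(\Oh((\alpha^{j-1}n)^{\dd}))\cdot 2\,T(\alpha^{j}n)$, and the same polynomial-space argument. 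The recurrence analysis via the convergent geometric series $\sum_j(\alpha^{j})^{\dd}$ is exactly what is needed, and your observation that $\Sigma$ can be rebuilt on the fly along the active path is a valid way to phrase the space bound.
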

\fi
%!TEX root = subexp.tex
\section{\cnc Algorithms} \label{fv:sec:cnc}
Hegerfeld and Kratsch \cite{HegerfeldK20} adapt the \cnc technique to give FPT algorithms for various connectivity based subset problems, parameterized by (unweighted) treedepth. In particular, these algorithms are randomized, have running times of the form $2^{\Oh(d)} \cdot n^{\Oh(1)}$, and use polynomial space. In their work, they consider \textsc{Connected Vertex Cover, Feedback Vertex Set, Connected Dominating Set, Steiner Tree,} and \textsc{Connected Odd Cycle Transversal} problems. We are able to adapt their technique for all of these problems, except for \textsc{Connected Dominating Set}. For the rest of the problems, we will extend their ideas to the more general case of \emph{weighted treedepth}, and use it to give $\tndd$ time, polynomial space, randomized algorithms. In the following, we select \textsc{Steiner Tree} as a representative problem, and we will explain  in full detail. For the remaining problems, we will only give a brief sketch of the differences, since at a high level the approach remains the same.

\subsection{Setup}

We adopt the following notation from \cite{HegerfeldK20}.

Let $\cc(G)$ denote the number of connected components in $G$. A cut of $X \subseteq V(G)$ is a pair $\lrcut$, where $X_L \cap X_R = \emptyset$, $X_L \cup X_R = X$. We refer to $X_L, X_R$ as the left and the right side of the cut $\lrcut$ respectively. A cut $\lrcut$ of $G[X]$ is consistent, if for any $u \in X_L$ and $v \in X_R$, $uv \not\in E(G[X])$. A \emph{consistently cut subgraph} of $G$ is a pair $(X, \lrcut)$, such that $X \subseteq V(G)$, and $\lrcut$ is a consistent cut of $G[X]$. Finally, for $X \subseteq V(G)$, we denote the set of consistently cut subgraphs of $G[X]$ by $\C(X)$.

For $n \in \natural$, let $[n]$ denote the set of integers from $1$ to $n$. For integers $a, b$, we write $a \equiv b$ to indicate equality modulo $2$. We use Iverson's bracket notation: for a boolean predicate $p$, $[p]$ is equal to $1$ if $p$ is true, otherwise $[p]$ is equal to $0$. 

Consider a function $f: A \to S$. For every $s \in S$ and a set $X$, we define the set $X(f, s) \coloneqq X \cap f^{-1}(\{s\})$ -- note that $X(f, s)$ may be empty for some or all $s \in S$. Furthermore, observe that the sets $\{A(f, s)\}_{s \in S}$ define a partition of $A$. For two functions $g: A \to S$, $f: B \to S$, we define the new function $g \oplus f: (A \cup B) \to S$ as follows. $(g \oplus f)(e) = f(e)$ for $e \in B$, and $(g \oplus f)(e) = g(e)$ for $e \in (A \setminus B)$. That is, $(g \oplus f)$ behaves like $g$ and $f$ on the exclusive domains, but in case of a conflict, the function $f$ takes the priority.

Recall that we work with $(G, d, \P, \Gp)$, and the corresponding weighted treedepth decomposition $(F, \varphi)$ of $G$. Here, $\varphi$ is a bijection between $V(F)$ and $\P$. For a node $u_i$, we will use $V_i \coloneqq \varphi(u_i)$, i.e., we use the same indices in the subscript to identify a node of $F$ and the corresponding part in $\P$. We denote the set of children of $u_i$ by $\child(u_i)$. We also define the following sets.
\begin{align*}
	\tail[u_i] &= \bigcup_{u_j \text{ is an ancestsor of } u_i} V_j\ ; && \tail(u_i) = \tail[u_i] \setminus V_i
	\\\tree[u_i] &= \bigcup_{u_j \text{ is a descendant of } u_i} V_j\ ; && \tree(u_i) = \tree[u_i] \setminus V_i
	\\\broom[u_i] &= \tail[u_i] \cup \tree(u_i)
\end{align*} 

\paragraph{Isolation Lemma.} 
\begin{definition}
	Let $U$ be a finite set, and $\mathcal{F} \subseteq 2^U$ be a family of subsets of $U$. We say that a weight function $\w: U \to \integer$ \emph{isolates} the family $\mathcal{F}$ if there exists a unique set $S' \in \mathcal{F}$ such that $\w(S') = \min_{S \in \mathcal{F}} \w(S)$, where $\w(X) \coloneqq \sum_{x \in X} \w(x)$ for any subset $X \subseteq U$.
\end{definition}
The following isolation lemma due to Mulmuley et al.\ \cite{mulmuley1987matching} is at the heart of all \cnc algorithms.
\begin{lemma}[\cite{mulmuley1987matching}] \label{fv:lem:isolation}
	Let $\mathcal{F} \subseteq 2^U$ be a non-empty family of subsets of a finite ground set $U$. Let $N \in \natural$, and suppose $\w(u)$ is chosen uniformly and independently at random from $[N]$ for every $u \in U$. Then, $\Pr(\w \text{ isolates } \mathcal{F}) \ge 1 - |U|/N$.
\end{lemma}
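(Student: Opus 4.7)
The plan is to prove the isolation lemma via the classical ``threshold'' argument: for each $u \in U$, I will show that the event that $u$ is what one calls a \emph{singular} element (explained below) occurs with probability at most $1/N$, and then apply a union bound over $U$.

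First, I would say that $u \in U$ is \emph{singular} (with respect to $\w$) if there exist two minimum-weight sets $S_1, S_2 \in \F$ such that $u \in S_1$ and $u \notin S_2$; equivalently, $u$ is singular iff both the cheapest set containing $u$ and the cheapest set avoiding $u$ achieve the overall minimum weight. A clean first observation is that if $\w$ does not isolate $\F$, then there are at least two distinct minimum-weight sets $S_1, S_2 \in \F$; picking any $u$ in their symmetric difference gives a singular element. Contrapositively, if no $u$ is singular, then $\w$ isolates $\F$, so $\Pr(\w \text{ does not isolate } \F) \le \sum_{u \in U} \Pr(u \text{ is singular})$.

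Next I would bound $\Pr(u \text{ is singular})$ by the standard deferred-decision trick. Fix $u \in U$ and condition on the weights $\{\w(v) : v \neq u\}$. Define $\alpha = \min \{\w(S) : S \in \F, u \notin S\}$ and $\beta = \min\{\w(S \setminus \{u\}) : S \in \F, u \in S\}$, where we treat a missing minimum as $+\infty$. Crucially, both $\alpha$ and $\beta$ depend only on the exposed weights, not on $\w(u)$. Then the cheapest set containing $u$ has weight $\beta + \w(u)$ and the cheapest set avoiding $u$ has weight $\alpha$, so $u$ is singular iff $\w(u) = \alpha - \beta$. Since $\w(u)$ is sampled uniformly from $[N]$ independently of everything else, this event has probability at most $1/N$ (it is exactly $1/N$ if $\alpha - \beta \in [N]$ and $0$ otherwise).

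Combining the two steps by the union bound yields
\[
\Pr(\w \text{ does not isolate } \F) \;\le\; \sum_{u \in U} \Pr(u \text{ is singular}) \;\le\; \frac{|U|}{N},
\]
which is the desired inequality. The only subtle point in this plan is the clean separation of the randomness of $\w(u)$ from that of the other weights when defining $\alpha$ and $\beta$; once that is in place, everything else reduces to a union bound. I would also briefly note the edge case where either no set in $\F$ contains $u$ or every set in $\F$ contains $u$, in which case $u$ is trivially non-singular and contributes $0$ to the sum.
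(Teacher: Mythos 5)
Your proof is correct and is the standard argument for the isolation lemma via ``singular'' elements and deferred decisions, which is exactly the one in the cited reference of Mulmuley, Vazirani, and Vazirani. The paper itself invokes this lemma as a black box and does not reprove it, so there is nothing to compare against beyond the original source; your handling of the edge cases (one of $\alpha,\beta$ equal to $+\infty$) is correctly noted and your separation of $\w(u)$ from the other coordinates is exactly the needed independence step.
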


\paragraph{General Idea.}

Fix a problem involving connectivity constraints that we want to solve on $G$. Note that the connectivity constraints may be explicit, e.g., \textsc{Connected Vertex Cover, Steiner Tree}, or implicit, e.g., \textsc{Odd Cycle Transversal}. Let $U$ be the ground set that is related to the graph $G$, such that $\S \subseteq 2^U$, where $\S$ denotes the set of solutions to the problem. At a high level, a \cnc based algorithm contains the following two parts.
\begin{itemize}
	\item \textbf{The Cut part:} We obtain a set $\R$ by relaxing the connectivity requirements on the solutions, such that $\S \subseteq \R \subseteq 2^U$. The set $\Q$ will contain pairs $(X, C)$, where $X \in \R$ is a candidate solution, and $C$ is a consistent cut of $X$. Note that since $X \in \R$, $X$ may be possibly disconnected. 
	\item \textbf{The Count part:} We compute $|\Q| \mod 2$ using an algorithm. The consistent cuts are defined carefully, in order that the non-connected solutions from $\R\setminus \S$ cancel while counting modulo $2$, since they are consistent with an even number of cuts. 
\end{itemize}
Note that if $|\S|$ is even, then the procedure counting $|\Q| \mod 2$ will return $0$, which will be inconclusive. Therefore, we initially sample a random weight function $\w: U \to [N]$ for some large integer $N \ge 2|U|$, and count $|\Q_w| \mod 2$ (where $\Q_w$ is the subset of $\Q$ such that the corresponding $X$ has weight \emph{exactly} $w$), for all values of $w \in [2|U|^2]$. Using Lemma \ref{fv:lem:isolation}, it can be argued that with at least probability $1/2$, if $\S \neq \emptyset$, then for some weight $w \in [2|U|^2]$, the procedure counting $|\Q_w| \mod 2$ outputs $1$. Finally, we guess an arbitrary vertex $v_1 \in V(G)$ in the solution, and force it to be on the left side of the consistent cuts. That is, we count the number of consistent cuts in which $v_1$ is forced to belong to the left side. This breaks the left-right symmetry. We first have the following two results from \cite{HegerfeldK20,cygan2011solving}. 

\begin{lemma}[\cite{HegerfeldK20,cygan2011solving}] \label{fv:lem:num-ccs}
	Let $X \subseteq V(G)$ such that $v_1 \in X$. The number of consistently cut subgraphs $(X, \lrcut)$ such that $v_1 \in X_L$ is equal to $2^{\cc(G[X])-1}$. 
\end{lemma}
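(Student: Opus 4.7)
The plan is to argue that a consistent cut must respect the connected components of $G[X]$, and then a straightforward counting argument delivers the bound. First I would observe that if $(X_L, X_R)$ is a consistent cut of $G[X]$, and $C$ is a connected component of $G[X]$, then $C$ must be entirely contained in $X_L$ or entirely contained in $X_R$. Indeed, if $C$ had a vertex $u \in X_L$ and a vertex $v \in X_R$, then by connectivity of $C$ there is a $u$--$v$ path in $G[X]$, and along that path some edge crosses from $X_L$ to $X_R$, contradicting consistency of the cut.

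Next I would exhibit the converse direction: any function assigning each connected component of $G[X]$ to either $X_L$ or $X_R$ yields a consistent cut, since all edges of $G[X]$ stay inside a single component and hence inside a single side. This establishes a bijection between consistent cuts $(X_L, X_R)$ of $G[X]$ and functions $\{C_1, \dots, C_k\} \to \{L, R\}$, where $C_1, \dots, C_k$ are the $k = \cc(G[X])$ components of $G[X]$. The total number of consistent cuts is therefore $2^{\cc(G[X])}$.

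Finally I would impose the constraint $v_1 \in X_L$. Since $v_1 \in X$, it lies in exactly one component, say $C_1$; by the previous correspondence, the requirement $v_1 \in X_L$ is equivalent to assigning $C_1$ to $L$. This fixes one of the $k$ choices and leaves the remaining $k-1$ components free, producing exactly $2^{k-1} = 2^{\cc(G[X])-1}$ consistent cuts, as claimed. There is no real obstacle here; the entire content of the lemma is the component-respecting property of consistent cuts, and once that is isolated the count is immediate.
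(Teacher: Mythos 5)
Your proof is correct and is the standard argument for this fact; the paper itself does not reprove the lemma but cites \cite{HegerfeldK20,cygan2011solving}, where essentially this same component-counting argument appears. The key observation that consistency forces each connected component of $G[X]$ to lie entirely on one side of the cut, together with the resulting bijection to $\{L,R\}$-labelings of components and the halving from fixing $v_1 \in X_L$, is exactly how the sources establish the count.
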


\begin{corollary}[\cite{HegerfeldK20,cygan2011solving}] \label{fv:cor:cnc}
	Let $\S \subseteq 2^U$, and $\Q \subseteq 2^{U \times (V \times V)}$, such that for every $\w: U \to [2|U|]$, and a target weight $w \in [2|U|^2]$, the following two properties hold.
	\begin{enumerate}
		\item $|\LR{(X, C) \in \Q : \w(X) = w}| = |\LR{X \in \S : \w(X) = w}|$, and
		\item There is an algorithm $\texttt{CountC}(\w, w, (G, d, \P, \Gp), (F, \varphi))$, where $(F, \varphi)$ is a weighted treedepth decomposition of $(G, d, \P, \Gp)$, such that:
		$\texttt{CountC}(\w, w, (G, d, \P, \Gp), (F, \varphi)) \equiv |\{(X, C \in \Q | \w(X) = w)\}|$	
	\end{enumerate}
	\iflong
	Then, Algorithm \ref{fv:alg:cnc} returns \textbf{false} if $\S = \emptyset$, and returns \textbf{true} with probability at least $\frac{1}{2}$ otherwise.
	\else
	Then, we can implement a \cnc-based algorithm that returns \textbf{false} if $\S = \emptyset$, and returns \textbf{true} with probability at least $\frac{1}{2}$ otherwise.
	\fi
\end{corollary}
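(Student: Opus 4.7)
The plan is to convert the two stated properties into a randomized parity-checking procedure via the isolation lemma. The algorithm samples a weight function $\w: U \to [2|U|]$ by drawing each $\w(u)$ uniformly and independently from $[2|U|]$, iterates over every target weight $w \in [2|U|^2]$, invokes $\texttt{CountC}(\w, w, (G, d, \P, \Gp), (F, \varphi))$ for each such $w$, and returns \textbf{true} if and only if at least one of these calls returns an odd value.

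First I would dispatch the easy direction $\S = \emptyset$. By property (1), for every weight function $\w$ and every target $w$ we have $|\LR{(X, C) \in \Q : \w(X) = w}| = |\LR{X \in \S : \w(X) = w}| = 0$, so by property (2) every call to $\texttt{CountC}$ returns $0 \pmod{2}$ and the algorithm deterministically outputs \textbf{false}. This direction requires no randomness at all.

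Next I would handle the interesting case $\S \neq \emptyset$ by applying Lemma \ref{fv:lem:isolation} with $N = 2|U|$ to the family $\S \subseteq 2^U$: the sampled $\w$ isolates $\S$ with probability at least $1 - |U|/(2|U|) = 1/2$. Conditioned on isolation, there is a unique minimum-weight set $S^* \in \S$ with weight $w^* = \w(S^*)$, and since $w^* \leq |U| \cdot 2|U| = 2|U|^2$, the value $w^*$ lies in the enumerated range. Property (1) then yields $|\LR{(X, C) \in \Q : \w(X) = w^*}| = 1$, which is odd; hence by property (2), $\texttt{CountC}(\w, w^*, (G, d, \P, \Gp), (F, \varphi)) \equiv 1 \pmod{2}$, and the algorithm returns \textbf{true}. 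Combining the two cases gives the claimed one-sided success probability of at least $1/2$.

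The main subtlety will be verifying that the enumerated range $[2|U|^2]$ really contains $w^*$ for every possible non-empty $S^* \in \S$, and ensuring that edge cases such as $\emptyset \in \S$ are handled (which can be done by separately testing $w = 0$ if the problem allows an empty solution). Beyond this bookkeeping, no additional combinatorial ingredients are needed: the proof is a direct assembly of the isolation lemma with the two hypotheses, and the polynomial-space, $\tndd$-time nature of the resulting \cnc algorithm follows from the corresponding guarantees on $\texttt{CountC}$ together with the fact that only polynomially many targets $w$ are enumerated.
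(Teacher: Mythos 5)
Your proof is correct and follows essentially the same route as the paper's: apply Lemma~\ref{fv:lem:isolation} with $\mathcal{F}=\S$ and $N=2|U|$, note that isolation forces exactly one $X\in\S$ at the minimum weight $w^*$, transfer this to $\Q$ via property (1), and conclude via property (2) that some call to $\texttt{CountC}$ returns $1$; the $\S=\emptyset$ direction is deterministic. You are a bit more explicit than the paper in checking that $w^*\in[2|U|^2]$ and in flagging the (moot in practice) $\emptyset\in\S$ edge case, but the argument is the same.
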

\iflong
\begin{proof}
	Plugging in $\mathcal{F} = \mathcal{S}$ and $N = 2|U|$ in Lemma \ref{fv:lem:isolation}, we know that if $\S \neq \emptyset$, then with probability at least $1/2$, there exists a weight $w \in [2|U|^2]$ such that $|\LR{X \in \S : \w(X) = w}| = 1$. Then, Algorithm \ref{fv:alg:cnc} returns \textbf{true} with probability at least $1/2$.
	
	On the other hand, if $\S = \emptyset$, then by the first property, and the definition of \texttt{CountC}, for any choice of $\w$ and $w$, the procedure $\texttt{CountC}$ returns \textbf{false}. Therefore, Algorithm \ref{fv:alg:cnc} returns \textbf{false}.
\end{proof}
\begin{algorithm}
	\caption{Cut\&Count$(U, (G, d, \P, \Gp), (F, \varphi), \texttt{CountC})$} \label{fv:alg:cnc}
	\begin{algorithmic}[1]
		\Statex \textbf{Input}: A set $U$, $(G, d, \P, \Gp)$, associated weighted treedepth decomposition $(F, \varphi)$, a procedure \texttt{CountC} that takes $\w: U \to [N], w \in \natural$
		\State Choose $\w(u)$ independently and uniformly at random from $[2|U|]$ for each $u \in U$
		\For{$w = 1, 2, \ldots, 2|U|^2$} 
		\State \textbf{if} $\texttt{CountC}((G, d, \P, \Gp), (F, \varphi), \w, w) \equiv 1$ \Return \textbf{true}
		\EndFor
		\State \Return \textbf{false}
	\end{algorithmic}
\end{algorithm}
\else
The proof of Lemma \ref{fv:lem:num-ccs} can be found in \cite{HegerfeldK20,cygan2011solving}. The proof of Corollary \ref{fv:cor:cnc} follows that of an analogous result from \cite{HegerfeldK20,cygan2011solving} (in particular, see Corollary 3.3 in \cite{cygan2011solving}). Nevertheless, we give a proof in the complete version which can be found in the appendix. We also give a formal description of the \cnc algorithm in the appendix. 
\fi

\subsection{Steiner Tree}

\begin{definition}[\textsc{Steiner Tree}]
	\ \\Input: An undirected graph $G = (V, E)$, a set of terminals $K \subseteq V(G)$, and an integer $k$. 
	\\Question: Is there a subset $X \subseteq V(G)$, with $|X| \le k$, such that $G[X]$ is connected, and $K \subseteq X$?
\end{definition}

Fix a $(G, d, \P, \Gp)$ via Lemma \ref{fv:lem:deberg-kpart}. Recall that $\P$ is a \kpart of $G$, such that the corresponding graph $\Gp$ has maximum degree $\Delta = O(1)$. We first have the following lemma from \cite{BergBKMZ20}.

\begin{lemma}[\cite{BergBKMZ20}] \label{fv:lem:steiner-bound}
	Suppose $X$ is a minimal solution for \textsc{Steiner Tree} (i.e., no proper subset of $X$ is also a solution) for a given $(G, d, \P, \Gp)$, and a set of terminals $K$. Then $|X \cap (V_i \setminus K)| \le \k^2 (\Delta+1)$ for any $V_i \in \P$.
\end{lemma}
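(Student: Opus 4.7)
The plan is to argue by contradiction with the minimality of $X$, mirroring the exchange-style proofs used for the analogous $r$-dominating set bound in this setting: if too many non-terminal Steiner vertices fall inside a single part $V_i$, there simply are not enough distinct external connections for all of them to be irremovable, so one must be deletable.

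I would first assume $|Y| > \k^2(\Delta+1)$, where $Y \coloneqq X \cap (V_i \setminus K)$ for some $V_i \in \P$. Because $V_i$ is a union of at most $\k$ cliques (by the definition of a \kpart), pigeonhole produces a single clique $C \subseteq V_i$ with $|Y \cap C| > \k(\Delta+1)$. Within $C$ every two vertices are adjacent, so for any $y \in Y \cap C$ the set $(C \cap X) \setminus \{y\}$ is a clique, and in particular connected, in $G[X \setminus \{y\}]$. Therefore removing $y$ from $X$ can only disconnect $G[X]$ if $y$ is the sole ``bridge'' from $C \cap X$ to some connected piece of $G[X \setminus (C \cap X)]$; otherwise $X \setminus \{y\}$ is still a Steiner tree --- the terminal requirement $K \subseteq X \setminus \{y\}$ is automatic from $y \notin K$ --- violating minimality.

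It then remains to bound the number of such bridge vertices in $C$. Their external $X$-neighbors lie in parts $V_j$ adjacent to $V_i$ in $\Gp$, and there are at most $\Delta$ such $V_j$. Each $V_j$ is itself a union of at most $\k$ cliques, giving at most $\k\Delta$ distinct external cliques $C'$ that any vertex of $C$ can reach. The key observation is that since each $C'$ is a clique, $C' \cap X$ is internally connected in $G[X]$, so once any single vertex of $C \cap X$ has an edge into $C' \cap X$, every other edge from $C \cap X$ into $C'$ is redundant for the overall connectivity of $G[X]$. Hence at most one bridge of $C$ can be charged to each external clique, yielding at most $\k\Delta < \k(\Delta+1)$ bridges in total, contradicting $|Y \cap C| > \k(\Delta+1)$.

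The main obstacle will be rigorously ruling out subtler failure modes in the ``one bridge per external clique'' accounting: the global structure of $G[X]$ means two candidate bridges in $C$ might a priori connect to the same external clique $C'$ through disjoint auxiliary paths that don't immediately close a cycle through $C' \cap X$, so the cycle-closing step needs to be argued carefully via the clique structure of $C'$ together with the fact that $X$ contains all vertices of the paths in question. Edge cases --- $|C \cap X| = 1$ (precluded by pigeonhole once $\k(\Delta+1) \ge 1$), vertices of $C \cap X$ whose external neighbors happen to be terminals, and external connections that route through parts $V_{j'}$ not adjacent to $V_i$ in $\Gp$ --- all reduce to the same local redundancy argument once one consistently identifies bridges with pairs $(C,C')$ of adjacent cliques, which should complete the proof.
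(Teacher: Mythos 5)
The paper cites Lemma~\ref{fv:lem:steiner-bound} from \cite{BergBKMZ20} without reproving it (it only sketches the analogous exchange argument for $r$-dominating set), so there is no in-paper proof to compare against; judging your proposal on its own, the exchange strategy is the right one, but two points need repair.

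First, the clique count. You tally only the cliques in the neighbouring parts $V_j \in N_{\Gp}(V_i)$, giving $\k\Delta$, but a vertex $y \in C \subseteq V_i$ can also have neighbours \emph{inside $V_i$} lying in the other $\k-1$ cliques covering $V_i$; these are ``external to $C$'' too. The correct tally is $(\k-1) + \k\Delta = \k(\Delta+1) - 1$. This is still strictly less than $\k(\Delta+1)$, so the pigeonhole conclusion survives, but as written the number $\k\Delta$ is simply wrong, and a reader checking whether the slack is really there will balk. Second, the ``one bridge per external clique'' step is indeed where the work is, but your worry about disjoint auxiliary paths is a red herring: no global rerouting is needed. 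The clean statement is that if $y \in C\cap X$ is a cut vertex of $G[X]$, then some component $D$ of $G[X]-(C\cap X)$ has $y$ as the unique vertex of $C\cap X$ with a neighbour in $D$ (otherwise removing $y$ would leave $D$ attached to the clique $(C\cap X)\setminus\{y\}$). Pick $z \in N(y)\cap D$ and let $C'$ be a clique of the cover containing $z$; since $C'\cap X$ is a clique lying in $D$ (together with $z$), any other $y' \in C\cap X$ with a neighbour in $C'\cap X$ would itself land in $D$, contradicting uniqueness. This charges each cut vertex of $C\cap X$ injectively to an external clique, bounding their number by $\k(\Delta+1)-1$; since $|Y\cap C| > \k(\Delta+1)$, some non-terminal $y$ in $C$ is not a cut vertex, $X\setminus\{y\}$ remains a Steiner tree, and minimality is contradicted. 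Finally, it is worth saying explicitly that you may take the $\k$ cliques covering $V_i$ to be pairwise disjoint (replace $C_j$ by $C_j \setminus \bigcup_{j'<j} C_{j'}$), which sidesteps the overlap edge cases in the charging above.
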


Let $k' = |K| + \k^2 (\Delta+1) \cdot |\P|$. Note that using Lemma \ref{fv:lem:steiner-bound}, we may assume that $k \le k'$ -- if $k \ge k'$, then $(G, k)$ is a ``yes-instance'' iff $(G, k')$ is a ``yes-instance''. For any $X \subseteq V(G)$, we say that $X$ is $\P$-restricted if for any $V_i \in \P$, $|X \cap (V_i \setminus K)| \le \k^2 (\Delta+1)$. Note that this definition of a $\P$-restricted set (and later, that of a $\P$-restricted function) is specific to the \textsc{Steiner Tree} problem. For different problems, we need to define this notion differently, albeit the main idea is to use a problem-specific version of Lemma \ref{fv:lem:steiner-bound}. 

We will run the following algorithm for all values of $k \le k'$. Let $t_1 \in K$ be an arbitrary terminal that we will fix to be on the left side of consistent cuts, as discussed previously. Now we give the formal definitions of the sets $\R, \S, \Q$ that were abstractly defined in the setup. We also define weight-restricted versions $\R_w, \S_w, \Q_w$ of these sets, where $w \in \natural$.
\begin{align*}
	\R &= \LR{X \subseteq V(G): X \text{ is $\P$-restricted}, K \subseteq X, \text{ and } |X| = k}; &&\R_w = \LR{X \in \R: \w(X) = w}
	\\\S &= \LR{X \in \R : G[X] \text{ is connected}};  &&\S_w = \LR{X \in \S: \w(X) = w}
	\\\Q &= \LR{(X, \lrcut) \in \C(V) : X \in \R \text{ and } t_1 \in X_L}; &&\Q_w = \{(X, \lrcut) \in \Q: \w(X) = w\}
\end{align*}

\begin{lemma} \label{fv:lem:sol-cuts}
	Let $\w: V(G) \to [N]$ be a weight function. Then, for every $w \in \natural$, $|\S_w| \equiv |\Q_w|$.
\end{lemma}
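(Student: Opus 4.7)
The plan is to decompose the sum $|\Q_w|$ over candidate vertex sets $X \in \R_w$ and then invoke Lemma~\ref{fv:lem:num-ccs} to reduce the count of consistent cuts to a power of $2$ depending only on $\cc(G[X])$.

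First I would observe that every $X \in \R$ contains $t_1$: by definition $K \subseteq X$ and $t_1 \in K$. Hence for any weight $w$,
\begin{align*}
	|\Q_w| \;=\; \sum_{X \in \R_w} \bigl|\{\lrcut \in \C(X) : t_1 \in X_L\}\bigr|.
\end{align*}
Since $t_1 \in X$ for each such $X$, Lemma~\ref{fv:lem:num-ccs} applies and the inner count equals $2^{\cc(G[X])-1}$. Therefore
\begin{align*}
	|\Q_w| \;=\; \sum_{X \in \R_w} 2^{\cc(G[X])-1}.
\end{align*}

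Next I would split the sum according to whether $G[X]$ is connected. If $\cc(G[X]) = 1$, i.e. $X \in \S_w$, then the summand is $1$; if $\cc(G[X]) \geq 2$, the summand $2^{\cc(G[X])-1}$ is even and vanishes modulo $2$. Reducing the displayed equation modulo $2$ thus yields
\begin{align*}
	|\Q_w| \;\equiv\; \sum_{X \in \S_w} 1 \;=\; |\S_w| \pmod{2},
\end{align*}
which is the claim. No obstacle is expected here since all the work has already been packed into Lemma~\ref{fv:lem:num-ccs}; the only subtle point is verifying that $t_1$ is guaranteed to lie in every $X \in \R$, and this follows directly from the definition of $\R$ together with $t_1 \in K$.
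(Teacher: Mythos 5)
Your proof is correct and follows the same approach as the paper: group $\Q_w$ by the underlying set $X \in \R_w$, apply Lemma~\ref{fv:lem:num-ccs} (noting $t_1 \in K \subseteq X$) to get $|\Q_w| = \sum_{X \in \R_w} 2^{\cc(G[X])-1}$, and observe that only the connected $X$ survive modulo $2$. The paper's proof is just a terser version of the same argument.
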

\begin{proof}
	From Lemma \ref{fv:lem:num-ccs}, $|\Q_w| = \sum_{X \in \R_w} 2^{\cc(G[X])-1}$. Therefore, $|\Q_w| \equiv |\LR{X \in \R_w : \cc(G[X]) = 1}| = |\S_w|$. Recall that $\equiv$ is equality modulo $2$.
\end{proof}

The goal of the rest of this subsection is to explain how the procedure \texttt{CountC} works.

First, we drop the cardinality constraints and define the following candidates and candidate cut-pairs for induced subgraphs $G[V']$, where $V' \subseteq V(G)$.
\begin{align*}
	\hat{\R}(V') &= \LR{X \subseteq V' : X \text{ is $\P$-restricted, and } K \cap V' \subseteq X }
	\\\hat{Q}(V') &= \LR{(X, \lrcut) \in \C(V') : X \in \R(V') \text{ and }  t_1 \in V' \implies\ t_1 \in X_L }
\end{align*}

Next, we define an important notion of $\P$-restricted functions, which will be crucial for pruning the number of recursive calls.
\begin{definition} \label{fv:defn:p-restricted}
	Let $f: X \to \states$ be a function, where $X \subseteq V(G)$. We say that $f$ is $\P$-restricted, if the following properties hold:
	\begin{itemize}
		\item $f^{-1}(\LR{\oo_L, \oo_R})$ is $\P$-restricted, and
		%\item For any $u, v \in f^{-1}(\LR{\oo_L, \oo_R})$, if $uv \in E(G)$, then $f(u) = f(v)$, and
		\item $(X \cap K) \subseteq f^{-1}(\LR{\oo_L, \oo_R})$, and if $t_1 \in X$, then $f(t_1) = \oo_L$.
	\end{itemize}
\end{definition}

The algorithm will be recursive, and it will compute a multivariate polynomial in the variables $Z_W$ and $Z_X$, where the coefficient of the term $Z_W^w Z_X^i$ is equal to the cardinality of $\hat{Q}_w^i(V') \coloneqq \LR{(X, C) \in \hat{Q}(V') : \w(X) = w, |X| = i}$, modulo $2$. That is, the formal variables will keep track of the weight and the size of the solutions. The polynomial is computed by using a recursive algorithm that uses the weighted treedepth decomposition to guide recursion. The algorithm starts at the root $r$ and proceeds towards the leaves.

Recall that each node $u_i \in V(F)$ is bijectively mapped to a $V_i \in \P$. The algorithm will assign a value to every vertex $v \in V_i$ from the following set $\states = \{\oo_L, \oo_R, \zz\}$, with the condition that if $v \in K \cap V_i$, then it cannot be assigned $\zz$. The interpretation of the states $\oo_L$ and $\oo_R$ for a vertex $v \in V_i$ is that $v$ is part of a candidate Steiner Tree solution, and is part of the left and the right side of the consistent cut, respectively. On the other hand, the vertices that are not part of a candidate Steiner Tree solution have the state $\zz$. 

Consider a node $u_i \in V(F)$, and a $\P$-restricted function $f: \tail[u_i] \to \states$, we define the set of partial solutions at $u_i$, but excluding any subset of $V_i$, that respect $f$ by
\begin{align}
	\C_{(u_i)}(f) \coloneqq \Big\{(X, \lrcut) \in \hat{Q}(\tree(u_i)) : &X' = X \cup f^{-1}(\LR{\oo_L, \oo_R}),\nonumber 
	\\& C' = (X_L \cup f^{-1}(\oo_L), X_R \cup f^{-1}(\oo_R)), \nonumber
	\\& (X', C') \in \hat{Q}(\broom[u_i])	\Big\} \label{fv:eqn:steiner-cuts-excl}
\end{align}

That is, the partial solutions in $\C_{(u_i)}(f)$ are given by consistently cut subgraphs of $G[\tree(u_i)]$, that are extended to the candidate-cut-pairs for $G[\broom[u_i]]$ by $f$, i.e., consistently cut subgraphs of $G[\broom[u_i]]$ that contain all terminals in $\broom[u_i]$. 

Similarly, for a node $u_i \in V(F)$, and a $\P$-restricted function $g: \tail(u_i) \to \states$, we define the set of partial solutions at $u_i$, but possibly including a subset of $V_i$, that respect $g$ by
\iflong
\begin{align}
	\C_{[u_i]}(g) \coloneqq \Big\{(X, \lrcut) \in \hat{Q}(\tree[u_i]) : &X' = X \cup g^{-1}(\LR{\oo_L, \oo_R}), \nonumber
	\\& C' = (X_L \cup g^{-1}(\oo_L), X_R \cup g^{-1}(\oo_R)), \nonumber
	\\& (X', C') \in \hat{Q}(\broom[u_i])	\Big\} \label{fv:eqn:steiner-cuts-incl}
\end{align}
\else
$\C_{[u_i]}(g)$, whose definition is identical to (\ref{fv:eqn:steiner-cuts-excl}) (after replacing $f$ by $g$ everywhere), except that the candidate consistently cut subgraph $(X, \lrcut)$ is from the set $\hat{Q}[u_i]$.
\fi

With these definitions, the coefficients of the terms $Z_W^w Z_X^k$, for $0 \le w \le 2n^2$ in the polynomial $P_{[r]}(\emptyset)$ at the root node $r \in V(F)$ will give the desired quantities. 

\paragraph{Recursively Computing Polynomials.}
First we define how to compute the polynomials using recurrences. Then, we prove the correctness.

Let $u_i \in V(F)$, and let $f: \tail[u] \to \states$ be a $\P$-restricted function. If $u_i$ is a leaf in $F$, then 
\begin{align}
	P_{(u_i)}(f) &= \left[ (f^{-1}(\oo_L), f^{-1}(\oo_R)) \text{ is a consistent cut of } G[ f^{-1}( \LR{\oo_L, \oo_R} ) ] \right]  \nonumber
	\\&\cdot \ \ \left[ K \cap \tail[u_i] \subseteq f^{-1}(\LR{ \oo_L, \oo_R }) \right] \cdot \left[ t_1 \in \tail[u_i] \implies f(t_1) = \oo_L \right] \label{fv:eqn:steiner-excl-leaf}
\end{align}
If $u_i \in V(F)$ is not a leaf, then 
\begin{align}
	P_{(u_i)}(f) = \prod_{u_j \in \child(u_i)} P_{[u_j]}(f) \label{fv:eqn:steiner-excl-internal}
\end{align}
To define the computation of $P_{[u_i]}(g)$ for a $\P$-restricted function $g: \tail(u_i) \to \states$, we need the following notation. Let $\F(V_i)$ be a set of $\P$-restricted functions (see Definition \ref{fv:defn:p-restricted}) from $V_i \to \states$ that also satisfy the following additional property: for all $h \in \F(V_i)$, if $u, v \in h^{-1}(\LR{\oo_L, \oo_R})$ with $uv \in E(G)$, then $h(u) = h(v)$. We refer to this additional property as the function being \emph{cut-respecting}.

Note $g$ and any $h \in \F(V_i)$ have disjoint domains, and both are $\P$-restricted. Therefore, $g \oplus h$ is also $\P$-restricted for any $h \in \F(V_i)$. We have the following recurrence:
\begin{align}
	P_{[u_i]}(g) = \sum_{h \in \F(V_i)} P_{(u_i)}(g \oplus h) \cdot Z_W^{\w(V_i(h, \oo))} Z_X^{|V_i(h, \oo)|} \label{fv:eqn:steiner-incl}
\end{align}
Where, we use the shorthand $V_i(h, \oo)$ for the set $V_i(h, \oo_L) \cup V_i(h, \oo_R)$.

In fact, we can replace $\F(V_i)$ with its superset $\F'(V_i)$ of $\P$-restricted functions, but without the additional requirement that the function be cut-respecting (see above). We want to claim that one can replace the sum over $\F(V_i)$ with a sum over $\F'(V_i)$ in (\ref{fv:eqn:steiner-incl}), which does not change the result of the computation.

This is because, if we have a function $f: X \to V(G)$ for some $X \subseteq V(G)$, such that $f$ is not cut-respecting, then any polynomial of the form $P_{(u_i)}(f)$ and $P_{[u_i]}(f)$ is equal to $0$ (i.e., the zero polynomial), for any $u_i \in V(F)$. This claim follows easily from observing that at a leaf node (cf. Equation \ref{fv:eqn:steiner-excl-leaf}), we first check whether the function corresponds to a consistent cut, otherwise the corresponding polynomial is defined to be the zero polynomial. The claim follows via a straightforward induction over the definitions (\ref{fv:eqn:steiner-excl-leaf}-\ref{fv:eqn:steiner-incl}). Furthermore, any ``extension'' of a function that is \emph{not} cut-respecting, is also not cut-respecting. Then, loosely speaking, a polynomial at an internal node corresponding to such a function is ``zeroed-out'' due to the recursive definitions. We omit the formal proof that turns this argument into a proof by induction. 

\paragraph{Correctness.}
Now we prove the correctness of (\ref{fv:eqn:steiner-excl-leaf}-\ref{fv:eqn:steiner-incl}). That is, we want to prove that for any $u_i \in V(F)$, and $\P$-consistent functions $f: \tail[u_i] \to \states$, $g: \tail(v_i) \to \states$, and any $w \in [2n^2], i \in [k']$,
\begin{enumerate}
	\item The coefficient of $Z_W^w Z_X^i$ in $P_{(u_i)}(f)$ is equal to 
	$|\LR{(X, C) \in \C_{(u_i)}(f) : \w(X) = w \text{ and } |X| = i }|\mod 2$, and 
	\item The coefficient of $Z_W^w Z_X^i$ in $P_{[u_i]}(g)$ is equal to 
	$|\LR{(X, C) \in \C_{[u_i]}(g) : \w(X) = w \text{ and } |X| = i }|\mod 2$
\end{enumerate}

The proof is by induction.
\paragraph{Base Case.}
First consider the case where $u_i$ is a leaf of $F$, and fix functions $f$ and $g$ as above. Then, note that $\tree(u_i)$ is empty. Therefore, from the definition (\ref{fv:eqn:steiner-cuts-excl}), $\C_{(u_i)}(f)$ is equal to the singleton set $\{(\emptyset, (\emptyset, \emptyset))\}$, if all three predicates in Equation \ref{fv:eqn:steiner-excl-leaf} are true. Otherwise, $\C_{(u_i)}(f)$ is empty. 

Now, consider Equation \ref{fv:eqn:steiner-incl} at a leaf $u_i$ for some $g: \tail(u_i) \to \states$.
Recall that $\tree[u_i] = V_i$, and $\hat{Q}(V_i) = \{ (X, \lrcut) \in \C(V_i) : X \in \hat\R(V_i) \text{ and } t \in V_i \implies t_1 \in X_L \}$. Note that there is a one-to-one correspondence between $(X, \lrcut) \in \hat{Q}(V_i)$ and a function $h \in \F(V_i)$. Now, consider a $(X, \lrcut) \in \hat{Q}(V_i)$ such that $(X', C')$ as in the definition \ref{fv:eqn:steiner-cuts-incl} of $\C_{[u_i]}(g)$, belongs to $\hat{Q}(\broom[u_i])$. 

Then, $g \oplus h$ is a $\P$-restricted function from $\tail[u_i] \to \states$, which implies that $P_{(u_i)}(g \oplus h)$ is equal to $1$. Furthermore, note that $V_i(h, \oo) = X$. Therefore, by multiplying $P_{(u_i)}(g \oplus h)$ with the monomial $Z^{\w(X)}_W Z_X^{|X|}$, we get the term corresponding to $h \in \F'(v_i)$ in Equation \ref{fv:eqn:steiner-incl}. If, on the other hand, $h$ corresponds to a $(X, \lrcut)$ such that $(X', C')$ does not belong to $\hat{Q}(\broom[u_i])$, then $P_{(u_i)}(g \oplus h)$ is equal to $0$. Then, summing over all $h \in \F'(V_i)$, we prove the second property for the base case.
\paragraph{Inductive Hypothesis.}
Now let us inductively assume that the claim is true for all children $u_j \in \child(u_i)$. We want to prove that the same holds for $u_i$.
\paragraph{Inductive Step.}
Consider Equation \ref{fv:eqn:steiner-excl-internal}, for a fixed $\P$-restricted function $f: \tail[u_i] \to \states$. Consider a $(X, \lrcut) \in \hat{Q}(\tree(u_i))$ as in (\ref{fv:eqn:steiner-cuts-excl}). Note that $X \subseteq \tree(u_i)$. For every $u_j \in \child(u_i)$, define $X^j \coloneqq X_j \cap \tree[u_j]$, and the sets $X^j_L, X^j_R$ are defined analogously. Then, since $F$ is a treedepth decomposition of $G$, there are no edges of $E(G)$ between vertices in different $\tree[u_j]$'s. It follows that $(X^j, (X^j_L, X^j_R))$ is a consistently cut subgraph of $\tree[u_j]$. Finally, observe that for any $u_j \in \child(u_i)$, $f$ is a function from $\tail(u_j) \to \states$. Therefore, the inductive hypothesis applies, and we can obtain the size and the weight of $X$ by summing the sizes and the weights of $X^j$'s respectively, over the children $u_j$ of $u_i$. It can be seen that multiplying the respective polynomials $P_{[u_j]}(f)$ accomplishes this.

Consider Equation \ref{fv:eqn:steiner-cuts-excl} for a $\P$-restricted function $g: \tail(u_i) \to \states$. Consider any $(X, \lrcut) \in \C_{[u_i]}(g)$. It follows that $X \in \R(\tree[v_i])$, in particular, $X \subseteq \tree[v_i]$, and $X$ is $\P$-restricted. Let $Y \coloneqq X \cap V_i, Y_L \coloneqq X_L \cap V_i$, and $Y_R \coloneqq X_R \cap V_i$. We then have following properties:
\begin{align*}
	\text{(1)}\ \ |Y \setminus K| \le \k^2 (\Delta+1), \text{\quad and\quad  (2)}\ \ (V_i \cap K) \subseteq Y,\text{ and if }t_1 \in V_i,\text{ then }t_1 \in Y_L.
\end{align*}
Therefore, there is a one-to-one correspondence between $(Y, (Y_L, Y_R))$, and a function $h \in \F'(V_i)$. Now observe that $g \oplus h$ is a $\P$-restricted function from $\tail[u_i]$ to $\states$, which lets us use the correctness of $P_{[u_i]}(g \oplus h)$ from the previous case. Then, using an argument similar to the second case of the base case, the correctness of Equation \ref{fv:eqn:steiner-incl} follows. This completes the proof of correctness by induction.

\iflong
A full description of the procedure \texttt{CountC} is given in Algorithm \ref{fv:alg:steiner-tree}.
\begin{algorithm}
	\caption{\texttt{CountC} for \textsc{Steiner Tree}} \label{fv:alg:steiner-tree}
	\begin{algorithmic}[1]
		\Statex \textbf{Input}: $(G, d, \P, \Gp)$, weighted treedepth decomposition $(F, \varphi)$, $\w: V(G) \to [2n]$, and $w \in [2n^2]$
		\State $P$ = \texttt{calc\_poly\_inc}$(r, \emptyset)$, where $r$ is the root of $F$
		\State \Return the coefficient of $Z_W^w Z_X^k$ in $P$.
		\iffalse
		\If{the coefficient of any $Z_W^w Z_X^{k'}$ is $1$ for any $k' \le k$}
		\Return $1$
		\Else \ \Return $0$
		\EndIf
		\fi
		\vspace{0.1cm} \hrule \vspace{0.1cm}		
		\State \textbf{procedure} \texttt{calc\_poly\_exc}($u_i \in V(F), f: \tail[u_i] \to \states$)
		\If {$u_i$ is a leaf in $F$} \Return the polynomial computed using Equation \ref{fv:eqn:steiner-excl-leaf}
		\Else
		\State $P = 1$
		\For{each $u_j \in \child(u_i)$}
		\State $P \gets P \cdot \texttt{calc\_poly\_inc}(u_j, f)$ \Comment{See Equation \ref{fv:eqn:steiner-excl-internal}}
		\EndFor
		\State \Return $P$
		\EndIf
		\State \textbf{end procedure}
		\vspace{0.1cm} \hrule \vspace{0.1cm}		
		\State \textbf{procedure} \texttt{calc\_poly\_inc}($u_i \in V(F), g: \tail(u_i) \to \states$)
		\State $P = 0$
		\For{each $h \in \F(V_i)$}
		\State Compute $y = \w(V_i(h, \oo))$, and $s = |V_i(h, \oo)|$.
		\State $P \gets P + (Z_W^y Z_X^s \cdot$ \texttt{calc\_poly\_inc}$(u_i, g \oplus h))$ \Comment{See Equation \ref{fv:eqn:steiner-incl}}
		\EndFor
		\State \Return $P$
		\State \textbf{end procedure}
	\end{algorithmic}
\end{algorithm}
\else
Given recurrences (\ref{fv:eqn:steiner-excl-leaf}-\ref{fv:eqn:steiner-incl}), it is straightforward to compute polynomials $P_{(u_i)}(f)$ and $P_{[u_i]}(g)$ using a recursive algorithm. Finally, we return the coefficient of the term $Z_W^w Z_X^k$ in the polynomial $P_{[u_i]}(\emptyset)$ thus computed. The actual description of the algorithm can be found in the full version in the appendix. 
\fi

Now we prove the following key lemma that bounds the size of each $|\F(V_i)|$. 

\begin{lemma} \label{fv:lem:steiner-function-bound}
	For any $V_i \in \P$, $|\F(V_i)| \le (1+|V_i|)^{\Oh(1)} = 2^{\Oh(\omega(V_i))}$. Furthermore, the set $\F(V_i)$ can be computed in $\text{poly}(|\F(V_i)|, n)$ time.
\end{lemma}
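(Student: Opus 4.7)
The plan is to parametrize an $h \in \F(V_i)$ by two independent pieces of data: its \emph{support} $S_h := h^{-1}(\{\oo_L,\oo_R\})$, and its \emph{side assignment} $S_h \to \{\oo_L,\oo_R\}$. First I would bound the number of valid supports. By the definition of $\P$-restrictedness, any support $S_h$ must contain $V_i \cap K$ and satisfy $|S_h \setminus K| \le \k^2(\Delta+1)$. Hence the non-terminal part of the support is any subset of $V_i \setminus K$ of size at most $\k^2(\Delta+1)$, and the number of such choices is at most $\sum_{j=0}^{\k^2(\Delta+1)} \binom{|V_i|}{j} \le (1+|V_i|)^{\k^2(\Delta+1)}$.

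Next I would bound the number of cut-respecting side assignments for a fixed support $S_h$. The cut-respecting property forces $h$ to be constant on every connected component of $G[S_h]$, so the number of valid side assignments is exactly $2^{\cc(G[S_h])}$ (halved to $2^{\cc(G[S_h])-1}$ if $t_1 \in V_i$, which only helps). The main observation is that because $V_i$ is, by the definition of a \kpart, a union of at most $\k$ cliques $C_1,\dots,C_\k$, each intersection $S_h \cap C_j$ is a clique in $G[S_h]$, and so $G[S_h]$ is a union of at most $\k$ cliques and therefore has at most $\k$ connected components. Thus at most $2^{\k}$ side assignments are valid.

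Combining the two bounds gives $|\F(V_i)| \le (1+|V_i|)^{\k^2(\Delta+1)} \cdot 2^\k$. Since $\k$ and $\Delta$ are absolute constants (Lemma~\ref{fv:lem:deberg-kpart}), this is $(1+|V_i|)^{\Oh(1)} = 2^{\Oh(\log(1+|V_i|))} = 2^{\Oh(\omega(V_i))}$, as required.

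For the algorithmic claim, I would simply enumerate the non-terminal part of the support (a subset of $V_i \setminus K$ of size at most $\k^2(\Delta+1)$), adjoin $V_i \cap K$ to form $S_h$, compute the connected components of $G[S_h]$ in polynomial time, and then enumerate all $\le 2^\k$ side assignments that are constant per component (discarding those that conflict with $h(t_1)=\oo_L$ if $t_1 \in V_i$). Each candidate $h$ is produced with $\text{poly}(n)$ overhead, giving a total running time of $\text{poly}(|\F(V_i)|, n)$. No single step looks like a genuine obstacle; the only ingredient that is not immediate is the observation that $G[S_h]$ has at most $\k$ connected components, which is forced by the union-of-cliques structure of $V_i$.
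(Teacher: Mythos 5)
Your proof is correct and follows essentially the same approach as the paper: both decompose $h$ into a support (with at most $\k^2(\Delta+1)$ non-terminal vertices, hence at most $(1+|V_i|)^{\k^2(\Delta+1)}$ choices) and a cut-respecting side assignment bounded by $2^{\k}$ via the union-of-$\k$-cliques structure of $V_i$. The only difference is cosmetic: you obtain the $2^{\k}$ factor by observing that $G[S_h]$ has at most $\k$ connected components (each clique is connected, so the number of components is at most the number of cliques) and enumerate per-component side choices, whereas the paper argues that vertices of the support in the same clique must share a side and propagates from a maximal independent set of $U_i \cup K_i$ of size at most $\k$; both yield the same bound and the same running time, and your connected-component formulation is, if anything, slightly cleaner to implement.
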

\begin{proof}
	Let $K_i = V_i \cap K$.	Because of the first property from the definition of $\P$-restricted functions, there are at most $(1+|V_i|)^{\k^2(1+\Delta)}$ choices for selecting a subset $U_i \subseteq V_i \setminus K$ of size at most $\k^2(1+\Delta)$, to be mapped to $\{\oo_L, \oo_R\}$. Let us fix such a choice $U_i$. Note that every terminal in $K_i \coloneqq K \cap V_i$ must be assigned to $\{\oo_L, \oo_R\}$.
	
	Recall that $V_i$ is a union of at most $\k$ cliques. Therefore, due to the second property, if there are two vertices $u, v \in U_i \cup K_i$ that belong to the same clique, then they must belong to the same side of the consistent cut. Therefore, there are at most $2^\k$ choices for assigning vertices in $U_i \cup K_i$ to either side of a consistent cut. Therefore, we have the following.
	\begin{align*}
		|\F(V_i)| &\le (1+|V_i|)^{\k^2(1+\Delta)} \cdot 2^\k 
		\\&\le (1+|V_i|)^{\k^2(1+\Delta) + \k} \tag{$\because 1+|V_i| \ge 2$}
		\\&= 2^{\Oh(\omega(V_i))} \tag{Since $\k, \Delta = \Oh(1)$, and $\omega(V_i) = \log(1+|V_i|)$}
	\end{align*}
	Here we would like to highlight the distinction between the weights $\omega: \P \to \real^+$ from the weighted treedepth decomposition, the weights $\w: V(G) \to \natural$ from the Isolation Lemma, and the target weight $w$ for $\w$.
	
	It is relatively straightforward to convert this proof into an algorithm for computing $\F(V_i)$. First, we can use a standard algorithm (e.g., \cite{knuth2005art}) to generate subsets $U_i$ of size at most $\k^2 (1+\Delta)$. It is known that this can be done in $\text{poly}(|V_i|^{\k^2(1+\Delta)})$ time. 
	
	Now, fix a particular choice of $U_i$, and consider the set $U_i \cup K_i$ as defined above. Now we compute an inclusion-wise maximal independent set $S_i$ of $U_i \cup K_i$, e.g., by a greedy algorithm.  Since $V_i$ is a union of at most $\k$ cliques, $|S_i| \le \k$. Now we consider at most $2^\kappa$ choices for assigning $\LR{\oo_L, \oo_R}$ to each vertex in $S_i$. For any vertex $v \in (V_i \cup K_i) \setminus S_i$, there is a vertex $v' \in S_i$ such that $vv'$ is an edge. Therefore, we set $f(v) = f(v')$. Note that if $v$ has more than one neighbor in $S_i$, and if a particular choice assigns them different values, then this corresponds to a function that is \emph{not} cut-respecting. In this case, we may move to the next assignment to $S_i$. Finally, since $S_i$ is a maximal independent set, each cut-respecting function for the fixed choice of $V_i$ will be considered in this manner. Finally, iterating over all choices of $V_i$, we can compute the set $\F(V_i)$ as claimed.
\end{proof}

\begin{theorem} \label{fv:theorem:steiner-tree}
	There exists a $\tndd$ time, polynomial space, randomized algorithm to solve \textsc{Steiner Tree} in the intersection graphs of similarly sized fat objects in $\real^d$.
\end{theorem}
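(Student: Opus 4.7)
The plan is to assemble the ingredients already developed in this section. First, I would run Theorem~\ref{fv:thm:treedepth} on the input to obtain, in $\tndd$ time and polynomial space, a weighted treedepth decomposition $(F,\varphi)$ of $\Gp$ of weighted depth $\Oh(n^{\dd})$. Then I would plug the recursive procedure defined by equations~(\ref{fv:eqn:steiner-excl-leaf})--(\ref{fv:eqn:steiner-incl}) into the generic Cut\&Count scheme of Corollary~\ref{fv:cor:cnc}. Concretely, the algorithm samples weights $\w\colon V(G)\to[2n]$, and for each target weight $w\in[2n^2]$ and each value of $k\le k'$, it computes the polynomial $P_{(r)}(\emptyset)$ at the root $r$ of $F$ and extracts the coefficient of $Z_W^w Z_X^k$; by Lemma~\ref{fv:lem:sol-cuts} combined with the correctness of the recursion (already shown by induction on $F$), this coefficient equals $|\S_w|\bmod 2$, and the isolation lemma (Lemma~\ref{fv:lem:isolation}) guarantees success probability at least $1/2$ whenever $\S\neq\emptyset$.

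Next, I would verify that the evaluation fits the template of Theorem~\ref{fv:thm:subexp-poly}. At an internal node $u_i$, computing $P_{[u_i]}(g)$ iterates over $h\in\F(V_i)$, and for each $h$ a single recursive call of the form $P_{(u_i)}(g\oplus h)$ is issued, which in turn multiplies the polynomials $P_{[u_j]}(g\oplus h)$ returned from the children $u_j\in\child(u_i)$. By Lemma~\ref{fv:lem:steiner-function-bound}, the number of functions enumerated at $u_i$ is $2^{\Oh(\omega(u_i))}$, matching both the ``local time'' and the ``number of recursive calls'' hypotheses of Theorem~\ref{fv:thm:subexp-poly}. Once a child's polynomial is obtained, it is accumulated into a running product and then discarded, so at any moment we store only $\Oh(\td(F))$ polynomials on the current root-to-leaf path.

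For the space bound, I would observe that every polynomial encountered has degree at most $2n^2$ in $Z_W$ and at most $k'=\Oh(n)$ in $Z_X$, and that all coefficients lie in $\{0,1\}$; therefore each polynomial occupies polynomial space. Combined with the fact that the depth of $F$ is polynomially bounded and that the set $\F(V_i)$ can itself be enumerated on the fly in polynomial space (by iterating subsets $U_i\subseteq V_i\setminus K$ of size at most $\k^2(1+\Delta)$ and then enumerating the at most $2^\k$ side assignments to a maximal independent set of $U_i\cup K_i$, as in the proof of Lemma~\ref{fv:lem:steiner-function-bound}), the overall working memory is polynomial. Applying Theorem~\ref{fv:thm:subexp-poly} then yields the claimed $\tndd$ running time. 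Finally, I would amplify the success probability by independent repetitions, iterate the outer loop over $k\in[0,k']$, and return \textbf{true} iff any iteration does so.

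The only subtle step is making sure that the enumeration step does not silently blow up the space. The main obstacle will be confirming that, even though the recursion tree has $2^{\Oh(n^{\dd})}$ leaves, at any instant only one root-to-leaf branch is active and each frame stores only: the current $\P$-restricted function $f$ (which has domain of size at most $n$ and codomain of constant size), a partial product polynomial of polynomial bit-length, and a pointer into the enumeration of $\F(V_i)$. Once this is checked, the two properties required by Corollary~\ref{fv:cor:cnc} hold and the theorem follows.
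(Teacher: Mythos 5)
Your proof is correct and follows the same route as the paper: combine Theorem~\ref{fv:thm:treedepth} for the decomposition, the recursive polynomial computation with the bound of Lemma~\ref{fv:lem:steiner-function-bound}, the Cut\&Count framework via Corollary~\ref{fv:cor:cnc}, and the generic time/space accounting of Theorem~\ref{fv:thm:subexp-poly}. One small slip: the algorithm invokes the ``inclusion'' polynomial at the root, i.e.\ $P_{[r]}(\emptyset)$, not $P_{(r)}(\emptyset)$; otherwise the argument matches the paper, and your explicit space accounting (polynomial-degree coefficients, one active root-to-leaf branch, on-the-fly enumeration of $\F(V_i)$) is a welcome expansion of what the paper only states.
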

\begin{proof}
	We begin by analyzing the running time and space requirement of the procedure \texttt{CountC} that recursively computes polynomials of interest. From Lemma \ref{fv:lem:steiner-function-bound}, it follows that at every $u_i \in F$, the algorithm spends $2^{\Oh(\omega(u_i))} \cdot n^{O(1)}$ time, and makes $2^{\Oh(\omega(u_i))}$ recursive calls to its children, corresponding to each function in $\F(V_i)$. Furthermore, since the weights defined by $\w: V(G) \to [2n]$ are polynomially bounded, at every node in $F$ the algorithm uses space polynomial in $n$. The correctness of \texttt{CountC} follows from that of the corresponding recurrence relations.
	
	We finally observe that the \cnc algorithm is a randomized procedure makes polynomially many calls to \texttt{CountC}. The correctness and the bounds on probability follow from Corollary \ref{fv:cor:cnc}.
\end{proof}

\iflong
\subsection{Connected Vertex Cover}

\begin{definition}[\textsc{Connected Vertex Cover}]
	\ \\Input: An undirected graph $G = (V, E)$, and an integer $k$. 
	\\Question: Is there a subset $X \subseteq V(G)$ with $|X| \le k'$, such that $X$ is a vertex cover of $G$, and $G[X]$ is connected?
\end{definition}

We will only give the crucial definitions for the rest of the problems that use the \cnc, and explain the important differences from the \textsc{Steiner Tree} algorithm. The details and formal proofs of correctness involve similar ideas, and are therefore omitted.

Fix $(G, d, \P, \Gp)$, and compute a weighted treedepth decomposition $(F, \varphi)$ in $\tndd$ time and polynomial space using Theorem \ref{fv:thm:treedepth}. We have the following observation, that is analogous to Observation \ref{fv:obs:indepset-kpart}.

\begin{observation} \label{fv:obs:cvc}
	Let $X \subseteq V(G)$ be a Connected Vertex Cover of $G$. Then, for any $V_i \in \P$, $|V_i \setminus X| \le \k$.
\end{observation}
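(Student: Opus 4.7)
The plan is to prove this observation using only the vertex cover property of $X$; notably, the connectivity of $G[X]$ is not needed. First, I would recall from the definition of a \kpart that each $V_i \in \P$ is a union of at most $\k$ cliques in $G$, say $C_1, C_2, \ldots, C_{\k'}$ with $\k' \le \k$ (and these cliques need not be disjoint).

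The key step is a single clean observation about cliques. For any clique $C$ of $G$, a vertex cover $X$ must contain all but at most one vertex of $C$: indeed, if there existed two distinct vertices $u, v \in C \setminus X$, then the edge $uv \in E(G)$ would not be covered by $X$, contradicting the vertex cover property. Hence $|C_j \setminus X| \le 1$ for every $j \in [\k']$.

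Finally, applying a union bound over the $\k'$ cliques whose union is $V_i$, I obtain
\[
|V_i \setminus X| \;=\; \Bigl|\Bigl(\bigcup_{j=1}^{\k'} C_j\Bigr)\setminus X\Bigr| \;=\; \Bigl|\bigcup_{j=1}^{\k'}(C_j \setminus X)\Bigr| \;\le\; \sum_{j=1}^{\k'} |C_j \setminus X| \;\le\; \k' \;\le\; \k,
\]
which is the desired bound. There is no real obstacle: the argument is a direct consequence of the clique-cover structure guaranteed by the \kpart together with the elementary fact that a vertex cover leaves at most one vertex of any clique uncovered. This mirrors the analogous Observation~\ref{fv:obs:indepset-kpart} for independent sets, where the roles of ``in'' and ``out'' of $X$ are swapped.
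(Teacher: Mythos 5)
Your proof is correct and matches the paper's own argument: the paper observes that $V(G)\setminus X$ is an independent set (equivalently, a vertex cover leaves at most one vertex of any clique uncovered) and then uses that each $V_i$ is a union of at most $\k$ cliques, which is exactly the clique-by-clique union bound you spell out. Both arguments rely only on the vertex-cover property and not on the connectivity of $G[X]$, as you correctly note.
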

This follows from the fact that $V(G) \setminus X$ is an independent set in $G$. Since each $V_i \in \P$ is a union of at most $\k$ cliques, at most $\k$ vertices in each $V_i$ can be in $V(G) \setminus X$.

This motivates the following definition. We say that a set $X \subseteq V(G)$ is $\P$-restricted (in the context of \textsc{Connected Vertex Cover}), if for any $V_i \in \P$, $|V_i \setminus X| \le \k$. 

We guess a vertex $v_1 \in V(G)$ that belongs to a Connected Vertex Cover, and force it to belong to the left side of the cuts. Then, the sets $\R, \S$, and $\Q$ are defined as follows. 
\begin{align*}
	\R &= \LR{X \subseteq V(G) \quad\ : X \text{ is } \P\text{-restricted, is a vertex cover of }G, \text{ and } |X| = k}
	\\\S &= \LR{S \in \R \qquad\quad\ : G[X] \text{ is connected}}
	\\\Q &= \LR{(X, \lrcut): X \in \R \text{ and } v_1 \in X_L}
\end{align*}
The weight restricted versions $\R_w, \S_w$, and $\Q_w$ are defined analogously. The \cnc algorithm for \textsc{Connected Vertex Cover} also uses the same set of $\states \coloneqq \{\oo_L, \oo_R, \zz\}$ as in \textsc{Steiner Tree}. The interpretation of the states $\oo_L, \oo_R$, is now, of course, that a certain vertex is part of a candidate vertex cover, and the subscript denotes the side of the cut to which it belongs.

The definition of a $\P$-restricted function is only slightly modified from Definition \ref{fv:defn:p-restricted} -- in the third condition, the terminal $t_1$ from the Steiner Tree is replaced by the guessed vertex $v_1$ that is forced to be on the left side.

Consider a node $u_i \in V(F)$, and $\P$-restricted functions $f: \tail[u_i] \to \states, g: \tail(u_i) \to \states$. The sets $\C_{(u_i)}(f)$, and $\C_{[u_i]}(g)$ are defined exactly as in (\ref{fv:eqn:steiner-cuts-excl}) and (\ref{fv:eqn:steiner-cuts-incl}) respectively. Now we state how to compute the corresponding polynomials.

If $u_i$ is a leaf in $F$, then for a $\P$-restricted function $f: \tail[u_i] \to \states$, we have the following:
\begin{align}
	P_{(u_i)}(f) &= \left[ (f^{-1}(\oo_L), f^{-1}(\oo_R)) \text{ is a consistent cut of } G[ f^{-1}( \LR{\oo_L, \oo_R} ) ] \right]  \nonumber
	\\&\cdot \ \ \left[ f^{-1}(\LR{\oo_L, \oo_R}) \text{ is a vertex cover of } G[\tail[u_i]] \right] 
	\\&\cdot\ \ \left[ v_1 \in \tail[u_i] \implies f(v_1) = \oo_L \right] \label{fv:eqn:cvc-excl-leaf}
\end{align}
If $u_i \in V(F)$ is not a leaf, then 
\begin{align}
	P_{(u_i)}(f) = \prod_{u_j \in \child(u_i)} P_{[u_j]}(f) \label{fv:eqn:cvc-excl-internal}
\end{align}

As before, let $\F(V_i)$ be the set of $\P$-restricted functions from $V_i$ to $\states$, with the additional property that each function in $\F(V_i)$ be cut-respecting. Then we have the following recurrence:
\begin{align}
	P_{[u_i]}(g) = \sum_{h \in \F(V_i)} P_{(u_i)}(g \oplus h) \cdot Z_W^{\w(V_i(h, \oo))} Z_X^{|V_i(h, \oo)|} \label{fv:eqn:cvc-incl}
\end{align}
As in the previous subsection, we can replace $\F(V_i)$ with the corresponding superset $\F'(V_i)$ without the additional requirement that the function be cut-respecting. As before, this does not affect the result of computation, because, again, the first condition in (\ref{fv:eqn:cvc-excl-leaf}) checks whether the function corresponds to a consistent cut. Then, any polynomial computed using the recurrences above, corresponding a function that is not cut-respecting, will be set to the zero polynomial. Using this, we can prove the correctness of the recurrences (\ref{fv:eqn:cvc-excl-leaf}-\ref{fv:eqn:cvc-incl}). The proof is very similar to the previous section, hence we omit the details.

We prove the following Lemma analogous to Lemma \ref{fv:lem:steiner-function-bound}.

\begin{lemma} \label{fv:lem:cvc-function-bound}
	For any $V_i \in \P$, $|\F(V_i)| \le (1+|V_i|)^{\Oh(1)} = 2^{\Oh(\omega(V_i))}$. Furthermore, the set $\F(V_i)$ can be computed in $\text{poly}(|\F(V_i)|, n)$ time.
\end{lemma}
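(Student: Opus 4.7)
The plan is to mimic the proof of Lemma \ref{fv:lem:steiner-function-bound}, but to exploit the fact that for Connected Vertex Cover the ``sparse'' side of the partition (per Observation \ref{fv:obs:cvc}) is the complement $V_i \setminus X$ rather than $X$ itself. So instead of first enumerating a small subset to receive states in $\{\oo_L, \oo_R\}$, I would first enumerate a small subset of $V_i$ to receive the state $\zz$.

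More concretely, I would argue as follows. By the definition of a $\P$-restricted function for Connected Vertex Cover (inherited from Definition \ref{fv:defn:p-restricted} with the third condition rewritten in terms of $v_1$), the set $h^{-1}(\zz)$ is required to lie in the complement of a vertex cover of $G[V_i]$, and hence has size at most $\k$ by Observation \ref{fv:obs:cvc}. Thus the number of choices for $Z_i \coloneqq h^{-1}(\zz) \subseteq V_i$ is bounded by $\sum_{j=0}^{\k} \binom{|V_i|}{j} \le (1+|V_i|)^{\k}$. Fix such a $Z_i$. Every vertex in $V_i \setminus Z_i$ must be mapped to $\{\oo_L, \oo_R\}$, and, because of the cut-respecting property and the fact that $V_i$ is a union of at most $\k$ cliques, all vertices lying in a common clique of $V_i \setminus Z_i$ must get the same label. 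Hence, picking one representative per clique, at most $2^\k$ assignments need to be considered, and the extra constraint $h(v_1)=\oo_L$ (when $v_1 \in V_i$) only prunes this set further. Multiplying,
\[
|\F(V_i)| \;\le\; (1+|V_i|)^{\k}\cdot 2^{\k} \;\le\; (1+|V_i|)^{\k+\k} \;=\; 2^{\Oh(\omega(V_i))},
\]
using $1+|V_i|\ge 2$ and $\k=\Oh(1)$.

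For the algorithmic side, I would enumerate all subsets of $V_i$ of size at most $\k$ using a standard subset-generation routine in $\text{poly}(|V_i|^\k)$ time. For each $Z_i$, I greedily compute a maximal independent set $S_i$ of $G[V_i \setminus Z_i]$, which satisfies $|S_i|\le\k$ since $V_i\setminus Z_i$ is covered by at most $\k$ cliques. I then enumerate all $2^{|S_i|}\le 2^\k$ labellings $S_i\to\{\oo_L,\oo_R\}$, propagate each labelling to the rest of $V_i\setminus Z_i$ via adjacencies to $S_i$ (each remaining vertex has a neighbor in $S_i$ by maximality), and reject the labelling if some vertex receives conflicting values from two distinct neighbors in $S_i$ — exactly the situation where the function fails to be cut-respecting. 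The surviving labellings, together with the chosen $Z_i$ and the forced value $h(v_1)=\oo_L$ when applicable, enumerate $\F(V_i)$ without duplication, so the running time is $\text{poly}(|\F(V_i)|,n)$ as claimed.

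The main obstacle I anticipate is bookkeeping rather than mathematical difficulty: one must verify that the ``propagate from a maximal independent set'' procedure genuinely produces every cut-respecting function exactly once and that no cut-respecting function is missed — but this follows from the same argument used in Lemma \ref{fv:lem:steiner-function-bound}, because $S_i$ contains a representative of every clique of $G[V_i\setminus Z_i]$ and within each clique the cut-respecting condition forces all labels to agree.
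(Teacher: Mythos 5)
Your argument matches the paper's proof essentially line for line: you enumerate the at-most-$\k$ vertices receiving state $\zz$ (equivalently, the paper enumerates $U_i$ with $|V_i\setminus U_i|\le \k$) to get the $(1+|V_i|)^\k$ factor, then bound the cut-respecting assignments over the $\le\k$ cliques by $2^\k$, and the algorithmic part (generate small subsets, take a maximal independent set as clique representatives, propagate and reject conflicts) is exactly the routine the paper invokes by reference to Lemma~\ref{fv:lem:steiner-function-bound}. One minor remark: the bound $|h^{-1}(\zz)\cap V_i|\le\k$ follows directly from the CVC-specific definition of $\P$-restricted (which is itself motivated by Observation~\ref{fv:obs:cvc}); your phrasing that $h^{-1}(\zz)$ ``is required to lie in the complement of a vertex cover'' overstates what the function definition actually enforces, but this does not affect the bound.
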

\begin{proof}
	Because of the definition of $\P$-restricted functions, there are at most $(1+|V_i|)^{\k}$ choices for selecting a subset $U_i \subseteq V_i$ such that $|V_i \setminus U_i| \le \k$. 
	
	Recall that $V_i$ is a union of at most $\k$ cliques. Therefore, due to the second property, if there are two vertices $u, v \in U_i$ that belong to the same clique, then they both must be mapped to $\oo_L$, or both to $\oo_R$. Therefore, there are at most $2^\k$ choices to map vertices in $U_i \cup K_i$ that are part of the same clique to either $\oo_L$, or all of them to $\oo_R$. Therefore, we have the following.
	\begin{align*}
		|\F(V_i)| &\le (1+|V_i|)^{\k} \cdot 2^\k 
		\\&\le (1+|V_i|)^{2\k} \tag{$\because 1+|V_i| \ge 2$} 
		\\&= 2^{\Oh(\omega(V_i))} \tag{Recalling that $\omega(V_i) = \log(1+|V_i|)$, and $\k = \Oh(1)$}
	\end{align*}
	The algorithm for computing the set $\F(V_i)$ in the claimed running time is similar to that in Lemma \ref{fv:lem:steiner-function-bound}, and is therefore omitted.
\end{proof}

One can then use these recurrences to compute the polynomials using a recursive algorithm similar to Algorithm \ref{fv:alg:steiner-tree}. We use the bound on $|\F(u_i)|$ from Lemma \ref{fv:lem:cvc-function-bound}, and appeal to Theorem \ref{fv:thm:subexp-poly}. We omit the details.

\begin{theorem} \label{fv:thm:cvc}
	There exists a $\tndd$ time, polynomial space, randomized algorithm to solve \textsc{Connected Vertex Cover} in the intersection graphs of similarly sized fat objects in $\real^d$.
\end{theorem}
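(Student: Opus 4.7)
The plan is to mirror the proof of Theorem \ref{fv:theorem:steiner-tree} for \textsc{Steiner Tree}: the Cut part has already been arranged by the choice of $\R, \S, \Q$ above together with the guessed vertex $v_1$ forced into the left side, and the Count part is embodied in the recurrences (\ref{fv:eqn:cvc-excl-leaf})--(\ref{fv:eqn:cvc-incl}). So what remains is to (i) verify that Lemma \ref{fv:lem:sol-cuts} carries over, (ii) package the recurrences as a concrete \texttt{CountC} procedure, (iii) bound its running time and space, and (iv) apply the general Cut\&Count wrapper from Corollary \ref{fv:cor:cnc}.

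First I would guess, by enumerating, a vertex $v_1 \in V(G)$ that belongs to the connected vertex cover; for each such guess fix $v_1 \in X_L$. Observation~\ref{fv:obs:cvc} ensures that every connected vertex cover is $\P$-restricted in the sense defined for this problem, so $\S \subseteq \R$. As in Lemma \ref{fv:lem:sol-cuts}, for any weight function $\w$ and target $w$, Lemma~\ref{fv:lem:num-ccs} yields $|\Q_w| = \sum_{X \in \R_w} 2^{\cc(G[X])-1} \equiv |\S_w| \pmod{2}$, so producing $|\Q_w| \bmod 2$ solves the counting step required by Corollary~\ref{fv:cor:cnc}. The isolation lemma (Lemma~\ref{fv:lem:isolation}) is applied with ground set $U = V(G)$, so $|U| \le n$ and $N = 2n$ suffices; the outer wrapper runs over $w \in [2n^2]$.

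Next I would implement \texttt{CountC} as the natural recursive evaluator of the polynomials $P_{(u_i)}(f)$ and $P_{[u_i]}(g)$ defined by (\ref{fv:eqn:cvc-excl-leaf})--(\ref{fv:eqn:cvc-incl}), exactly in the style of Algorithm \ref{fv:alg:steiner-tree}. The output is the coefficient of $Z_W^w Z_X^k$ in $P_{[r]}(\emptyset)$ at the root $r$ of $F$, reduced modulo $2$. Correctness of the recurrences was already established by the inductive argument given in the \textsc{Steiner Tree} subsection, with only cosmetic changes (replace the Steiner predicate ``$K \cap \tail[u_i] \subseteq f^{-1}(\LR{\oo_L,\oo_R})$'' by the vertex-cover predicate in (\ref{fv:eqn:cvc-excl-leaf}), and replace the terminal $t_1$ by $v_1$); these changes do not alter the structural induction, so I would simply cite that argument rather than redo it.

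Finally I would bound the resources. At each node $u_i \in V(F)$, the recurrence (\ref{fv:eqn:cvc-incl}) iterates over $h \in \F(V_i)$; Lemma~\ref{fv:lem:cvc-function-bound} gives $|\F(V_i)| = 2^{\Oh(\omega(u_i))}$, and for each $h$ we do $n^{\Oh(1)}$ work and make one recursive call on each child (the extension being reused across children). Because the weights $\w$ take values in $[2n]$, every coefficient in every intermediate polynomial fits in $\mathrm{poly}(n)$ bits, so each recursive frame uses polynomial space; combined with the polynomial \emph{unweighted} depth of $F$ (at most $n$), the total space is polynomial. For the time, invoking Theorem~\ref{fv:thm:subexp-poly} with the bound $2^{\Oh(\omega(u_i))}$ on both the per-node cost and the per-node branching immediately yields $\tndd$ time per call to \texttt{CountC}. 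The outer Cut\&Count wrapper makes $\Oh(n \cdot n^2) = n^{\Oh(1)}$ calls to \texttt{CountC} (one per choice of $v_1$ and target weight $w$), so the overall running time is still $\tndd$; the success probability follows from Corollary~\ref{fv:cor:cnc}. The only subtle point I expect is confirming that the ``not cut-respecting'' contributions really vanish, which the excerpt already argues via the leaf check in (\ref{fv:eqn:cvc-excl-leaf}); beyond that the proof is a direct adaptation of Theorem~\ref{fv:theorem:steiner-tree}.
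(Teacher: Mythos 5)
Your proposal is correct and follows essentially the same approach as the paper: like the paper, you reuse the Steiner Tree machinery with $K$ replaced by the vertex cover predicate and $t_1$ by the guessed vertex $v_1$, invoke Observation~\ref{fv:obs:cvc} and Lemma~\ref{fv:lem:cvc-function-bound} to bound $|\F(V_i)|$ by $2^{\Oh(\omega(u_i))}$, and appeal to Theorem~\ref{fv:thm:subexp-poly} and Corollary~\ref{fv:cor:cnc} for the time, space, and success-probability bounds.
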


We note the similarity between the various definitions and recurrences for \textsc{Connected Vertex Cover} and that for \textsc{Steiner Tree} from the previous subsection. Indeed, it is shown in \cite{HegerfeldK20} that one can consider \textsc{Connected Vertex Cover} as a special case of \textsc{Steiner Tree} by subdividing edges and making the middle vertex on each subdivided edge a terminal. This only increases the (unweigted) treedepth of the new graph by $1$. However, we cannot use this reduction in our application, since the graph obtained by subdividing edges may not belong to the class of intersection graph of similarly sized fat objects, even when the original graph does.

\subsection{Feedback Vertex Set}

\begin{definition}[\textsc{Feedback Vertex Set}]
	\ \\Input: An undirected graph $G = (V, E)$, and an integer $k$. 
	\\Question: Is there a subset $X \subseteq V(G)$ with $|X| = k$, such that $G-X$ is a forest?
\end{definition}

Fix $(G, d, \P, \Gp)$, and compute a weighted treedepth decomposition $(F, \varphi)$ in $\tndd$ time and polynomial space using Theorem \ref{fv:thm:treedepth}. We have the following observation, that is analogous to Observation \ref{fv:obs:indepset-kpart}.

\begin{observation} \label{fv:obs:fvs}
	Let $X \subseteq V(G)$ be a Feedback Vertex Set of $G$. Then, for any $V_i \in \P$, $|V_i \setminus X| \le 2\k$.
\end{observation}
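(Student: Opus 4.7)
The plan is to exploit the two defining features in play: first, that $V_i$ is the union of at most $\kappa$ cliques (by the definition of a \kpart), and second, that the subgraph $G - X$ induced by the complement of any feedback vertex set must be a forest, and in particular triangle-free. These two facts combine very cleanly: a clique of size $3$ or more contains a triangle, which cannot survive in a forest, so any clique of $G$ can contribute at most $2$ vertices to $V(G) \setminus X$.

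Concretely, I would proceed as follows. Let $C_1, C_2, \ldots, C_\ell$ with $\ell \le \k$ be the (not necessarily disjoint) cliques of $G$ whose union equals $V_i$, as guaranteed by the definition of a \kpart. For each $j \in [\ell]$, observe that $C_j \setminus X$ induces a clique in $G - X$; since $G - X$ is a forest, and hence acyclic (and in particular triangle-free), we must have $|C_j \setminus X| \le 2$. Taking the union over $j$ and using $V_i = \bigcup_{j=1}^{\ell} C_j$ together with the union bound yields
\[
|V_i \setminus X| \;=\; \Bigl|\bigcup_{j=1}^{\ell} (C_j \setminus X)\Bigr| \;\le\; \sum_{j=1}^{\ell} |C_j \setminus X| \;\le\; 2\ell \;\le\; 2\kappa,
\]
as required.

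There is no real obstacle here; the statement is a structural observation of the same flavor as Observation~\ref{fv:obs:indepset-kpart} for \textsc{Independent Set} (where $K_2$ replaces $K_3$ as the forbidden subgraph, giving the bound $\kappa$ instead of $2\kappa$). The only mild care needed is that the cliques comprising $V_i$ may overlap, but this only makes the bound tighter, so the straightforward union-bound argument suffices.
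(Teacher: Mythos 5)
Your proof is correct and matches the paper's argument exactly: $G-X$ is a forest, hence triangle-free, so each of the at most $\k$ cliques comprising $V_i$ contributes at most $2$ vertices to $V(G)\setminus X$, giving $2\k$ by a union bound.
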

Note that from any clique $C$ in $G$, there can be at most $2$ vertices that do not belong to a Feedback Vertex Set. Since each $V_i \in \P$ is a union of at most $\k$ cliques, at most $2\k$ vertices in each $V_i$ can be in $V(G) \setminus X$.

The \cnc based algorithm in \cite{cygan2011solving,HegerfeldK20} for \textsc{Feedback Vertex Set} is slightly different from the previous two algorithms, due to the fact that the \textsc{Feedback Vertex Set} problem has a negative connectivity requirement. The idea is to use pairs $(Y, M) \subseteq V(G) \times V(G)$, where $Y$ is the forest that remains after removing an FVS $X \subseteq V(G)$, and $M \subseteq Y$ is a set of \emph{marker vertices} that help in counting the number of connected components. We say that a set $Y \subseteq V(G)$ is $\P$-restricted if $|Y \cap V_i| \le 2\kappa$ for any $V_i \in \P$. The sets $\R, \S$, and $\Q$ are defined as follows.
\begin{align*}
	\R &= \LR{(Y, M) : M \subseteq Y \subseteq V(G), Y \text{ is $\P$-restricted, and } |Y| = n-k },
	\\\S &= \LR{(Y, M) \in \R : G[Y] \text{ is a forest, and every connected component of $G[Y]$ intersects $M$}}
	\\\Q &= \LR{((Y, M), (Y_L, Y_R)) : (Y, M) \in \R, (Y, Y_L, Y_R) \in \C(V), \text{ and } M \subseteq Y_L}
\end{align*}
We need to use larger universe $U$ for defining the weights corresponding to the Isolation Lemma (Lemma \ref{fv:lem:isolation}). Define $U = V \times \{\mathbf{F, M}\}$, and the weight function $\w: U \to [N]$ assigns two different weights $\w(v, \mathbf{F}), \w(v, \mathbf{M})$ depending on whether $v$ is marked or not. To be able to use Corollary \ref{fv:cor:cnc}, we associate $(Y, M)$ with the set $(Y \times \{\mathbf{F}\}) \cup (M \times \LR{\mathbf{M}}) \subseteq U$. Then, we can also define $\w(Y, M) = \w(Y \times \{\mathbf{F}\}) \cup (M \times \LR{\mathbf{M}}).$ However, for conceptual ease, we will describe the algorithm using the earlier notation.

\begin{lemma}[\cite{cygan2011solving}]
	Let $(Y, M)$ be such that $M \subseteq Y \subseteq V(G)$. The number of consistently cut subgraphs $(Y, (Y_L, Y_R))$ such that $M \subseteq Y_L$ is equal to $2^{\overline{\cc}_M(G[Y])}$, where $\overline{\cc}_M(G[Y])$ us the number of connected components of $G[Y]$ that do not contain any vertex from $M$.
\end{lemma}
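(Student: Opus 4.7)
The plan is to argue that the only freedom in constructing such a consistent cut lies in deciding, for each connected component of $G[Y]$ that avoids $M$, whether to place it in $Y_L$ or in $Y_R$; once these choices are made, the rest of the cut is forced. This immediately yields the factor $2^{\overline{\cc}_M(G[Y])}$.

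First I would make the standard observation that if $(Y,(Y_L,Y_R))$ is a consistently cut subgraph of $G[Y]$, then no edge of $G[Y]$ joins a vertex in $Y_L$ to a vertex in $Y_R$. Consequently, for every connected component $C$ of $G[Y]$, either $C \subseteq Y_L$ or $C \subseteq Y_R$: otherwise a path in $C$ from a vertex of $Y_L$ to a vertex of $Y_R$ would contain an edge straddling the cut, contradicting consistency. Conversely, any assignment of whole components of $G[Y]$ to the two sides gives a valid consistent cut.

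Next I would use the constraint $M \subseteq Y_L$ to classify the components. Every connected component $C$ that contains at least one vertex of $M$ must satisfy $C \subseteq Y_L$, since placing $C$ in $Y_R$ would force some vertex of $M$ into $Y_R$. On the other hand, a component $C$ disjoint from $M$ may be placed either entirely in $Y_L$ or entirely in $Y_R$ without violating any requirement. Since different components are vertex-disjoint, these binary choices are independent across the $\overline{\cc}_M(G[Y])$ components that avoid $M$.

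Finally, I would combine the two observations: a consistently cut subgraph $(Y,(Y_L,Y_R))$ with $M \subseteq Y_L$ is in bijection with a choice of subset of the $\overline{\cc}_M(G[Y])$ $M$-avoiding components to send to $Y_R$ (the remaining vertices forming $Y_L$). The total count is therefore $2^{\overline{\cc}_M(G[Y])}$, as claimed. No step here is really an obstacle; the main thing to be careful about is distinguishing the components that contain some marker from those that do not, and noting that the bound $M \subseteq Y_L$ only constrains the former.
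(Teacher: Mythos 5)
Your proof is correct and is exactly the standard argument used in the cited works (Cygan et al.\ and Hegerfeld--Kratsch); the paper itself omits the proof and simply refers to those references. The key observations---that a consistent cut must assign each connected component of $G[Y]$ wholly to one side, that components meeting $M$ are forced into $Y_L$, and that the remaining $\overline{\cc}_M(G[Y])$ components can be assigned freely and independently---are precisely what is needed, and your bijection with subsets of $M$-avoiding components is the right way to conclude.
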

The proof of this lemma is similar to that of Lemma \ref{fv:lem:num-ccs}, and can be found in \cite{cygan2011solving,HegerfeldK20}. We define further subsets of $\R, \S$, and $\Q$ based on the number of edges, markers and weight.
\begin{align*}
	\R_{w}^{j, \ell} &= \{ (Y, M) \in \R &: \w(Y, M) = w, |E(G[Y])| = j, |M| = \ell \}
	\\\S_{w}^{j, \ell} &= \{ (Y, M) \in \S &: \w(Y, M) = w, |E(G[Y])| = j, |M| = \ell \}
	\\\Q_{w}^{j, \ell} &= \{ ((Y, M), (Y_L, Y_R)) \in \Q &: \w(Y, M) = w, |E(G[Y])| = j, |M| = \ell \}
\end{align*}
We have the following lemma analogous to Lemma \ref{fv:lem:num-ccs} from \cite{cygan2011solving}, the proof of which is omitted.
\begin{lemma}[\cite{cygan2011solving,HegerfeldK20}] \label{fv:lem:qs-fvs}
	Let $\w: U \to [N]$ be a weight function. Then for every $w \in \natural$, and $j \in [n-k-1]$, we have that $|\S_w^{j, n-k-j}| \equiv |\Q_w^{j, n-k-j}|$.
\end{lemma}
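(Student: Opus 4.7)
The plan is to execute the standard Cut\&Count recipe: invoke the counting lemma stated just before the target lemma to rewrite $|\Q_w^{j, n-k-j}|$ as a sum of powers of two, reduce modulo $2$, and argue that only the ``forest'' contributions survive while the ``non-forest'' contributions cancel in pairs. Concretely, applying the counting lemma to each $(Y, M) \in \R_w^{j, n-k-j}$ yields
\begin{equation*}
|\Q_w^{j, n-k-j}| \;=\; \sum_{(Y, M) \in \R_w^{j, n-k-j}} 2^{\overline{\cc}_M(G[Y])},
\end{equation*}
so modulo $2$ only the terms with $\overline{\cc}_M(G[Y]) = 0$ contribute, i.e., only those $(Y, M)$ for which every connected component of $G[Y]$ is hit by at least one marker.

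Next I would split the surviving pairs according to whether $G[Y]$ is a forest. If $G[Y]$ is a forest with $|Y| = n-k$ vertices and $j$ edges, it has exactly $|Y| - j = n-k-j$ components; combined with $|M| = n-k-j$ and ``every component hit,'' pigeonhole forces exactly one marker per component, and the pair lies in $\S_w^{j, n-k-j}$. Conversely, every $(Y, M) \in \S_w^{j, n-k-j}$ trivially satisfies $\overline{\cc}_M(G[Y]) = 0$. Hence the forest contribution to $|\Q_w^{j, n-k-j}| \bmod 2$ equals exactly $|\S_w^{j, n-k-j}|$.

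The main obstacle is to show that the non-forest contribution is even. If $G[Y]$ is not a forest then it contains a cycle, and the number of components of $G[Y]$ is strictly smaller than $|Y| - j = |M|$, so at least one component is hit by at least two markers. The standard trick is to fix a canonical cycle $C$ in $G[Y]$ (for example the lex-smallest shortest cycle, under an a priori fixed ordering of $V$) together with a canonical edge $u^* v^*$ on $C$, and then toggle $\{u^*, v^*\} \cap M$; since $u^*$ and $v^*$ lie in the same component of $G[Y]$, any toggle that preserves $|M|$ automatically preserves the ``every component hit'' property as well. Turning this sketch into a fixed-point-free, $|M|$- and $\w$-preserving involution on the non-forest pairs is the technical crux, and it is precisely the involution used by Cygan et al.~\cite{cygan2011solving} in their treewidth-based Cut\&Count algorithm for Feedback Vertex Set. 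Since this involution acts on $(G[Y], M)$ and does not depend on any decomposition, we can reuse it verbatim to conclude that the non-forest pairs cancel pairwise modulo $2$. Combining the two halves yields $|\Q_w^{j, n-k-j}| \equiv |\S_w^{j, n-k-j}| \pmod 2$, which is the claim.
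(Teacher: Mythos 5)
Your application of the counting lemma and the reduction modulo $2$ to the pairs $(Y,M)$ with $\overline{\cc}_M(G[Y]) = 0$ is correct, but the second half of the argument goes off the rails. First, your component bound for non-forests points the wrong way: if $G[Y]$ has $n-k$ vertices, $j$ edges, and $c$ components, then the cyclomatic number $j - (n-k) + c$ is always nonnegative, so $c \ge n-k-j$, with \emph{equality} iff $G[Y]$ is a forest; for a non-forest one has $c > n-k-j = |M|$, not $c < |M|$ as you wrote. Second, and more importantly, that corrected inequality already finishes the proof with no involution at all: if every component of $G[Y]$ contains a marker, the markers are distinct so $c \le |M| = n-k-j$, and combined with $c \ge n-k-j$ this forces $c = n-k-j$, hence $G[Y]$ is a forest and exactly one marker sits in each component. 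The surviving pairs are therefore \emph{exactly} the elements of $\S_w^{j,\,n-k-j}$; there is no non-forest contribution to cancel.

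The involution you sketch is not used by Cygan et al.\ for \textsc{Feedback Vertex Set}; the whole point of coupling the exponents as $(j, n-k-j)$ is to make the cyclomatic-number argument above do the work. Your proposed involution would also be delicate to realize correctly: toggling markers must preserve $|M|$ exactly (so it cannot be a simple symmetric-difference toggle on two vertices), and the weight $\w$ depends on which vertices are marked via the $(v, \mathbf{M})$ coordinates of $U$, so any marker change alters $\w(Y,M)$. Fortunately none of this machinery is needed, since the numeric constraint already rules out all non-forest pairs.
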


From this lemma, it follows that $G$ has a Feedback Vertex Set $X$ of size $k$ iff for some $w, j \in \natural$ and $M \subseteq Y \coloneqq (V(G) \setminus X)$ such that $(Y, M) \in \S_w^{j, n-k-j}$. Now the \texttt{CountC} algorithm for \textsc{Feedback Vertex Set} will accomplish this task using polynomial computation. Define
\begin{align*}
	\hat\R(V') &= \LR{(Y, M) : M \subseteq Y \subseteq V' \text{ and } Y \text{ is $\P$-restricted} }
	\\\hat\Q(V') &= \LR{ ((Y, M), (Y_L, Y_R)) : (Y, M) \in \hat\R(V'), (Y, (Y_L, Y_R)) \in \C(V'), \text{ and } M \subseteq Y_L  }
\end{align*}
The multivariate polynomials will have four formal variables $Z_W, Z_Y, Z_E, Z_M$. The set of $\states$ is defined to be $\{\oo, \zz_L, \zz_R\}$, where $\oo$ represents that the vertex is in a Feedback Vertex Set; whereas the states $\zz_L, \zz_R$ represent that the vertex is in the remaining forest, and the subscript denotes the side of the consistent cut the vertex belongs to. We modify the definition of $\P$-restricted functions as follows.
\begin{definition}
	Let $f: X \to \states$ be a function, where $X \subseteq V(G)$. We say that $f$ is $\P$-restricted, if $f^{-1}(\LR{\zz_L, \zz_R})$ is $\P$-restricted.
\end{definition}

Now, for a node $u_i \in V(F)$, and a $\P$-restricted function $f: \tail[u_i] \to \states$, we define partial solutions at $u_i$, but excluding $V_i$, that respect $f$ by:
\begin{align}
	\C_{(u_i)}(f) \coloneqq \Big\{ ((Y, M), (Y_L, Y_R)) \in \hat{Q}(\tree(u_i)) : &Y' = Y \cup f^{-1}(\LR{\zz_L, \zz_R}),\nonumber 
	\\& C' = (Y_L \cup f^{-1}(\zz_L), Y_R \cup f^{-1}(\zz_R)), \nonumber
	\\& ((Y', M), C') \in \hat{Q}(\broom[u_i])	\Big\} \label{fv:eqn:fvs-cuts-excl}
\end{align}
Similarly, for a $\P$-restricted function $g: \tail(u_i) \to \states$, we define partial solutions at $u_i$, but possibly including $V_i$, that respect $g$ by:
\begin{align}
	\C_{[u_i]}(f) \coloneqq \Big\{ ((Y, M), (Y_L, Y_R)) \in \hat{Q}(\tree[u_i]) : &Y' = Y \cup f^{-1}(\LR{\zz_L, \zz_R}),\nonumber 
	\\& C' = (Y_L \cup f^{-1}(\zz_L), Y_R \cup f^{-1}(\zz_R)), \nonumber
	\\& ((Y', M), C') \in \hat{Q}(\broom[u_i])	\Big\} \label{fv:eqn:fvs-cuts-incl}
\end{align}

Now, at a node $u_i \in V(F)$, and corresponding to $\P$-restricted functions $f: \tail[u_i] \to \states, g: \tail(u_i) \to \states$, we will compute polynomials $P_{(u_i)}(f)$ and $P_{[u_i]}(g)$ respectively. These have the following interpretation. The coefficient of the monomial $Z_W^w Z_Y^a Z_E^b Z_M^c$ in $P_{(u_i)}(f)$ is given by
$$\left| \LR{ ((Y, M), (Y_L, Y_R)) \in \C_{(u_i)}(f) :  \w(Y, M) = w, |Y| = a, |E(G[Y])| + |E(Y, f^{-1}(\zz_L, \zz_R)\setminus Y)| = b, |M|= c  } \right| \mod 2.$$
Similarly, the coefficient of the monomial $Z_W^w Z_Y^a Z_E^b Z_M^c$ in $P_{[u_i]}(g)$ is given by
$$\left| \LR{ ((Y, M), (Y_L, Y_R)) \in \C_{[u_i]}(g) :  \w(Y, M) = w, |Y| = a, |E(G[Y])| + |E(Y, g^{-1}(\zz_L, \zz_R))| = b, |M|= c  } \right| \mod 2.$$
Where we are using the notation $E(A, B) \subseteq E(G)$, where $A, B \subseteq V(G)$ are disjoint, to denote the set of edges with one endpoint in $A$ and another in $B$.

Then, it can be seen that the coefficients of the polynomial $P_{[r]}(\emptyset)$ contain the size of the desired sets.

\paragraph{Recursively Computing Polynomials}
Let $u_i \in V(F)$ be a leaf in $F$ and let $f: \tail[u_i] \to \states$ be a $\P$-restricted function. Then,
\begin{align}
	P_{(v)}(f) = \left[ (f^{-1}(\zz_L), f^{-1}(\zz_R)) \text{ is a consistent cut of } G[f^{-1}(\LR{\zz_L, \zz_R})] \right] \label{fv:eqn:fvs-excl-leaf}
\end{align}
If $u_i \in V(F)$ is an internal node, then
\begin{align}
	P_{(u_i)}(f) = \prod_{u_j \in \child(u_i)} P_{[u_j]}(f) \label{fv:eqn:fvs-excl-internal}
\end{align}
Finally, for any $u_i \in V(F)$, we define $\F(V_i)$ to be the $\P$-restricted functions from $V_i$ to $\states$, with the additional requirement that they be cut-consistent. Fix a function $h \in \F(V_i)$, recall that $V_i(h, \oo), V_i(h, \zz_L), V_i(h, \zz_R)$ are the subsets of $V_i$ that are mapped to the corresponding $\texttt{state}$ by $h$. Let $V_i(h, \zz) = V_i(h, \zz_L) \cup V_i(h, \zz_R)$. Let $\w_{\mathbf{F}} = \sum_{v \in V_i(h, \zz)} \w(v, \mathbf{F})$, and $\w_{\mathbf{M}} = \sum_{v \in V_i(h, \zz_L)} \w(v, \mathbf{M})$. Finally, let $j = \lr{\sum_{v \in V_i(h, \zz)} |N(v) \cap g^{-1}(\LR{\zz_L, \zz_R})|} + |E(G[V_i(h, \zz)])|$. Then for any $\P$-consistent function $g: \tail(u_i) \to \states$,
\begin{align}
	P_{[u_i]}(g) = \sum_{h \in \F(V_i)} P_{(u_i)}(g \oplus h) \cdot Z_W^{\w_{\mathbf{F}}}\ Z_Y^{|V_i(h, \zz)|}\ Z_E^{j} \cdot \lr{\prod_{v \in V_i(h, \zz_L)} \lr{1 + Z_W^{\w(v, \textbf{M})} Z_M}} \label{fv:eqn:fvs-incl}
\end{align}
We remark that each of the sub-polynomials corresponding to a function $h \in \F(V_i)$ can be computed in polynomial time and space. Therefore, the $P_{[u_i]}(g)$ can be computed in time proportional to $|\F(V_i)| \cdot n^{O(1)}$.

\paragraph{Correctness}

\paragraph{Base Case.}
Suppose $u_i$ is a leaf of $F$, and fix functions $f$ and $g$ as above. Then, note that $\tree(u_i)$ is empty. Therefore, from the definition of $\C_{(u_i)}(f)$, \ref{fv:eqn:fvs-cuts-excl} is equal to the singleton set $\{((\emptyset, \emptyset), (\emptyset, \emptyset))\}$, if the predicate in \ref{fv:eqn:fvs-excl-leaf} is true. Otherwise, $\C_{(u_i)}(f)$ is empty. 

Now, consider Equation \ref{fv:eqn:fvs-incl} at a leaf $u_i$ for some $g: \tail(u_i) \to \states$.
Recall that $\tree[u_i] = V_i$, and $\hat{Q}(V_i) = \{ ((Y, M), (Y_L, Y_R)) : (Y, M) \in \hat\R(V_i), (Y, (Y_L, Y_R)) \in \C(V_i), \text{ and } M \subseteq Y_L \}$. Now, fix a consistently cut subgraph $(Y, (Y_L, Y_R))$ such that there exists $M \subseteq Y_L$ such that $((Y', M), C')$ as in the definition \ref{fv:eqn:fvs-cuts-incl} of $\C_{[u_i]}(g)$, belongs to $\hat{Q}(\broom[u_i])$. Let $(W, W_L, W_R)$ denote the intersection of the corresponding sets in $(Y, Y_L, Y_R)$ with $V_i$. Note that there is a one-to-one correspondence between a function $h \in \F(V_i)$, and a consistently cut subgraph $(W, (W_L, W_R))$.

Using the correctness of the previous case, we know that the coefficients of the respective terms in the polynomial $P_{(u_i)}(g)$ correspond to the size of the respective sets, modulo $2$. Now we discuss the correctness of the multiplying term in \ref{fv:eqn:fvs-incl}. Note that the weight and size of $Y$ are obtained by adding the respective quantities of $V_i(h, \zz)$ to that of $Y\setminus V_i(h, \zz)$, and the quantities of $Y\setminus V_i(h, \zz)$ are encoded in the polynomial via the inductive hypothesis. 

Note that for every $v \in V_i(h, \zz_L)$, we have two choices -- whether to add a vertex to the set of marked vertices, or not. We account for the two choices by multiplying by the binomial $1 + Z_W^{\w(v, \mathbf{M})} Z_M$. If $v$ is unmarked, then we have already accounted for its weight in the previous paragraph, which corresponds to the term $1$ in the binomial. If, however, $v \in M$, the size of marked vertices increases by $1$, and the weight increases by $\w(v, \mathbf{M})$, which we account in the second term of the binomial.

Finally, we need to update the number of edges, encoded by the formal variable $Z_E$. When we add a vertex $v \in V_i(h, \zz)$ to the set $Y$, the number of edges between $v$ and $g^{-1}(\LR{\zz_L, \zz_R})$ is exactly $|N(v) \cap g^{-1}(\LR{\zz_L, \zz_R})|$. Finally, we need to account for the edges among the vertices in $V_i \cap Y$, which is exactly the quantity $|E(G[V_i(h, \zz)])|$. Note that the number of edges in $N(v) \cap \tree(u_i) \cap Y$ are already accounted for, using the correctness of the previous case with respect to the function $g \oplus f$. This completes the proof of the base case.

\paragraph{Inductive Hypothesis.}
Now let us inductively assume that the claim is true for all children $u_j \in \child(u_i)$. We want to prove that the same holds for $u_i$.

\paragraph{Inductive Step.}
Consider Equation \ref{fv:eqn:fvs-excl-internal}, for a fixed $\P$-restricted function $f: \tail[u_i] \to \states$. Consider a $((Y, M), (Y_L, Y_R)) \in \hat{Q}(\tree(u_i))$ as in (\ref{fv:eqn:fvs-cuts-excl}). Note that $Y \subseteq \tree(u_i)$. For every $u_j \in \child(u_i)$, define $Y^j \coloneqq Y \cap \tree[u_j]$, and the sets $M^j, Y^j_L, Y^j_R$ are defined analogously. 

By the inductive hypothesis, the coefficient of the term $Z_W^w Z_Y^a Z_E^b Z_M^c$ in polynomials $P_{[u_j]}(f)$ for $u_j \in \child(u_i)$ correctly counts the number $(Y^j, M^j), (Y^j_L, Y^j_R) \in \C_{[u_j]}(f)$ with the appropriate counts. By the properties of treedepth decomposition, there are no edges between vertices in different $\tree[u_j]$'s. Therefore, the size and the weight of $Y$, and the size of $M$ can be obtained by adding the respective quantities in the corresponding subtrees. Similarly, again by using the properties of treedepth decomposition, there are no edges between vertices in different $Y^j$'s, which implies that $|E(G[Y])| + |E(Y, f^{-1}(\LR{\zz_L, \zz_R}))| = \sum_{u_j \in \child(u_i)} |E(G[Y^j])| + |E(Y^j, f^{-1}(\LR{\zz_L, \zz_R}))|$. Therefore, the polynomial $P_{(u_i)}(f)$ can be obtained by polynomial multiplication, which corresponds to adding the respective quantities in the powers of the formal variables.

Consider Equation \ref{fv:eqn:fvs-cuts-excl} for a $\P$-restricted function $g: \tail(u_i) \to \states$. Consider any $((Y, M), (Y_L, Y_R)) \in \C_{[u_i]}(g)$. It follows that $Y \in \R(\tree[v_i])$, in particular, $Y \subseteq \tree[v_i]$, and $Y$ is $\P$-restricted. Let $(W, N, W_L, W_R)$ denote the intersection of the respective sets from $(Y, M, Y_L, Y_R)$ with the set $V_i$. We then have following properties:
\begin{enumerate}
	\item $|Y_L \cup Y_R| \le 2\k$,
	\item For any $u \in Y_L$ and $v \in Y_R$, $uv \not\in E(G)$, and
	\item If $t_1 \in V_i$, then $t_1 \in Y_L$.
\end{enumerate}
Therefore, there is a one-to-one correspondence between $((W, N), (W_L, W_R))$, and a function $h \in \F'(V_i)$. Now we observe that $g \oplus h$ is a function from $\tail[u_i]$ to $\states$. Then, using an argument similar to the second part of the base case, the correctness of equation \ref{fv:eqn:fvs-incl} follows. This finishes the proof of correctness of the reucrrences by induction.

The proof of the following lemma follows from Observation \ref{fv:obs:fvs}, is very similar to that of Lemma \ref{fv:lem:cvc-function-bound}, and is therefore omitted.

\begin{lemma} \label{fv:lem:fvs-function-bound}
	For any $V_i \in \P$, $|\F(V_i)| \le (1+|V_i|)^{\Oh(1)} = 2^{\Oh(\omega(V_i))}$. Furthermore, the set $\F(V_i)$ can be computed in time $\text{poly}(|\F(V_i)|, n)$.
\end{lemma}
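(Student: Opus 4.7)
The plan is to mirror the counting argument used in Lemma \ref{fv:lem:cvc-function-bound}, with the only essential change being that Observation \ref{fv:obs:fvs} now allows up to $2\k$ (rather than $\k$) vertices per part to take one of the two ``forest'' states. Concretely, any $h \in \F(V_i)$ is determined by the pair $(U_i, \sigma_i)$, where $U_i = h^{-1}(\{\zz_L,\zz_R\}) \subseteq V_i$ and $\sigma_i \colon U_i \to \{\zz_L,\zz_R\}$ records how the forest-side vertices are split across the cut. (The remaining vertices of $V_i$ are forced to state $\oo$.) Thus the cardinality of $\F(V_i)$ factors as the number of valid choices of $U_i$ times the number of cut-respecting choices of $\sigma_i$.

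For the first factor, the $\P$-restrictedness of $h$, combined with Observation \ref{fv:obs:fvs}, forces $|U_i| \le 2\k$, so the number of candidate sets $U_i$ is at most $\sum_{j=0}^{2\k}\binom{|V_i|}{j} \le (1+|V_i|)^{2\k}$. For the second factor, recall that $V_i$ is a union of at most $\k$ cliques; the cut-respecting condition says that whenever two vertices of $U_i$ lie in a common clique they must receive the same state in $\{\zz_L,\zz_R\}$. Hence $\sigma_i$ is determined by picking a side for each of the at most $\k$ cliques intersected by $U_i$, contributing at most $2^\k$ possibilities. Multiplying and using $1+|V_i|\ge 2$ yields
\[
|\F(V_i)| \;\le\; (1+|V_i|)^{2\k}\cdot 2^\k \;\le\; (1+|V_i|)^{3\k} \;=\; 2^{\Oh(\omega(V_i))},
\]
since $\k = \Oh(1)$ and $\omega(V_i) = \log(1+|V_i|)$.

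For the algorithmic claim, I would enumerate subsets $U_i \subseteq V_i$ of size at most $2\k$ in $\text{poly}(|V_i|^{2\k})$ time using standard subset-generation techniques (e.g., \cite{knuth2005art}), and for each fixed $U_i$ compute a greedy maximal independent set $S_i$ of $G[U_i]$. Because $U_i$ is a union of at most $\k$ cliques, $|S_i|\le \k$, and each of the $2^\k$ assignments of $\{\zz_L,\zz_R\}$ to $S_i$ propagates uniquely to $U_i\setminus S_i$ via the cut-respecting constraint on cliques. Assignments that lead to a conflict (two adjacent vertices forced to opposite sides) are discarded; the surviving ones exhaust the cut-respecting functions for this $U_i$. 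Iterating over all choices of $U_i$ enumerates $\F(V_i)$ in the claimed $\text{poly}(|\F(V_i)|,n)$ time.

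I expect no real obstacle: the argument is a routine adaptation of Lemma \ref{fv:lem:cvc-function-bound}, with the only care needed being to replace the ``at most $\k$ vertices outside the vertex cover'' bound with the ``at most $2\k$ vertices in the remaining forest'' bound from Observation \ref{fv:obs:fvs}, and to note that the cut-respecting condition only involves the forest-side states $\zz_L,\zz_R$, since the FVS-state $\oo$ is irrelevant to consistent cuts.
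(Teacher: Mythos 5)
Your proof is correct and takes exactly the route the paper intends: the paper omits the proof, noting only that it follows from Observation~\ref{fv:obs:fvs} by the same argument as Lemma~\ref{fv:lem:cvc-function-bound}, and you carry out that adaptation faithfully. The key substitutions are right: in the FVS setting the $\P$-restriction bounds $|h^{-1}(\{\zz_L,\zz_R\})|$ by $2\k$ (via Observation~\ref{fv:obs:fvs}) rather than $\k$, and the cut-respecting constraint lives on the forest-side states $\zz_L,\zz_R$ rather than the solution-side states, since in FVS the consistent cut is over $Y=V\setminus X$. One cosmetic remark: since $|U_i|\le 2\k$ here, even the trivial bound $2^{|U_i|}\le 2^{2\k}$ on the number of side-assignments would suffice without invoking the clique structure; your tighter $2^\k$ bound (via constancy on connected components of $G[U_i]$, of which there are at most $\k$) is still valid and matches the CVC template more closely. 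The enumeration algorithm is also the right adaptation of the one spelled out in Lemma~\ref{fv:lem:steiner-function-bound}.
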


Then, we get the following result using similar arguments as in the previous sections.
\begin{theorem} \label{fv:theorem:fvs}
	There exists a $\tndd$ time, polynomial space, randomized algorithm to solve \textsc{Feedback Vertex Set} in the intersection graphs of similarly sized fat objects in $\real^d$.
\end{theorem}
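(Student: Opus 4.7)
The plan is to assemble the pieces developed in the section, verifying that they compose to yield the claimed time and space bounds. First, I would invoke Theorem \ref{fv:thm:treedepth} to compute a weighted treedepth decomposition $(F, \varphi)$ of $\Gp$ of weighted depth $\Oh(n^{\dd})$ in $\tndd$ time and polynomial space. The overall algorithm is an instance of the Cut\&Count framework from Corollary \ref{fv:cor:cnc}: iterate over all target weights $w \in [2|U|^2]$ with random weights $\w: U \to [2|U|]$ drawn via Lemma \ref{fv:lem:isolation}, and for each $w$ run the procedure \texttt{CountC} that computes $|\Q_w^{j, n-k-j}| \bmod 2$ via polynomial evaluation over the decomposition. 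By Lemma \ref{fv:lem:qs-fvs}, a nonzero value for some $(w, j)$ certifies existence of a Feedback Vertex Set of size $k$ with probability at least $1/2$; absence of such a solution always yields zero.

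The core work is implementing \texttt{CountC} using the multivariate polynomial recurrences (\ref{fv:eqn:fvs-excl-leaf})--(\ref{fv:eqn:fvs-incl}) in a recursive traversal of $F$. I would argue correctness by induction on the subtree rooted at $u_i$ exactly along the lines already sketched: at leaves the $\hat{Q}(V_i)$-structures are in bijection with cut-respecting $\P$-restricted functions $h \in \F(V_i)$, with the marker-choice binomial $\prod_{v \in V_i(h, \zz_L)} (1 + Z_W^{\w(v, \mathbf{M})} Z_M)$ accounting for the two options per left-side vertex, and the edge exponent combining the $|E(G[V_i(h, \zz)])|$ internal edges with the $|N(v) \cap g^{-1}(\LR{\zz_L, \zz_R})|$ edges crossing into $\tail(u_i)$. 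At internal nodes, the key invariant is that no edge of $G$ crosses between $\tree[u_j]$ and $\tree[u_k]$ for distinct children $u_j, u_k$, so weights, sizes, marker counts, and edge counts are all additive, which corresponds to multiplication of the child polynomials in the formal variables $Z_W, Z_Y, Z_E, Z_M$.

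For the running time and space analysis, I would first observe via Lemma \ref{fv:lem:fvs-function-bound} that $|\F(V_i)| = 2^{\Oh(\omega(u_i))}$, so at each node $u_i$ the recursion enumerates $2^{\Oh(\omega(u_i))}$ candidate functions, performs $n^{\Oh(1)}$ arithmetic per function on polynomials whose degrees in each of $Z_W, Z_Y, Z_E, Z_M$ are polynomially bounded (recall $\w$-weights live in $[2n]$), and then recurses on its children. Since the polynomials have size $n^{\Oh(1)}$ and we carry only the current root-to-node path of state, polynomial space suffices at each node. This exactly matches the hypothesis of Theorem \ref{fv:thm:subexp-poly}, so a single invocation of \texttt{CountC} runs in $\tndd$ time and polynomial space. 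Wrapping this in the Cut\&Count outer loop only incurs a $\text{poly}(n)$ multiplicative overhead, yielding the claimed $\tndd$ time and polynomial space randomized algorithm.

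The main obstacle I anticipate is the edge-count accounting in (\ref{fv:eqn:fvs-incl}): unlike \textsc{Steiner Tree} or \textsc{Connected Vertex Cover}, FVS requires tracking $|E(G[Y])|$ to identify when the retained vertex set $Y$ is a forest (via the $|Y| - |E(G[Y])|$ component count combined with the marker constraint of Lemma \ref{fv:lem:qs-fvs}). I would need to verify carefully that edges from $V_i(h, \zz)$ to $g^{-1}(\LR{\zz_L, \zz_R})$ and edges within $V_i(h, \zz)$ are each counted exactly once across the recursion; edges to descendants are counted at the descendant node through $g \oplus h$, while edges upward into the tail are counted at $u_i$. Once this bookkeeping is rigorously established, the correctness follows by induction and the remaining time/space bounds are a direct application of the abstract framework.
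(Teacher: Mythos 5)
Your proposal follows exactly the approach taken in the paper: compute the weighted treedepth decomposition via Theorem~\ref{fv:thm:treedepth}, run \cnc via Corollary~\ref{fv:cor:cnc} and Lemma~\ref{fv:lem:qs-fvs}, implement \texttt{CountC} by the recurrences (\ref{fv:eqn:fvs-excl-leaf})--(\ref{fv:eqn:fvs-incl}) with correctness by induction, bound the branching factor by Lemma~\ref{fv:lem:fvs-function-bound}, and close with Theorem~\ref{fv:thm:subexp-poly}. Your edge-bookkeeping observation --- edges internal to $V_i(h,\zz)$ and edges into $g^{-1}(\{\zz_L,\zz_R\})$ are charged at $u_i$, while edges from $V_i$ into $\tree(u_i)$ are charged at the descendant where the other endpoint lives, via $g\oplus h$ --- is precisely the invariant the paper uses when it writes $|E(G[Y])| + |E(Y, f^{-1}(\{\zz_L,\zz_R\}))| = \sum_j |E(G[Y^j])| + |E(Y^j, f^{-1}(\{\zz_L,\zz_R\}))|$ in the inductive step, so this is a correct reading rather than a gap.
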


\subsection{Connected Odd Cycle Transversal}

\begin{definition}[\textsc{Connected Odd Cycle Transversal}]
	\ \\Input: An undirected graph $G = (V, E)$, and an integer $k$. 
	\\Question: Is there a subset $X \subseteq V(G)$ with $|X| = k$, such that $G[X]$ is connected, and $G-X$ is bipartite?
\end{definition}

Fix $(G, d, \P, \Gp)$, and compute a weighted treedepth decomposition $(F, \varphi)$ in $\tndd$ time and polynomial space using Theorem \ref{fv:thm:treedepth}. We have the following observation.

\begin{observation} \label{fv:obs:coct}
	Let $X \subseteq V(G)$ be a (Connected) Odd Cycle Transversal of $G$. Then, for any $V_i \in \P$, $|V_i \setminus X| \le 2\k$.
\end{observation}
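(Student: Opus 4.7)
The plan is to mimic the argument used for \textsc{Feedback Vertex Set} in Observation~\ref{fv:obs:fvs}, exploiting the clique-union structure of each $V_i \in \P$ together with a simple combinatorial property of the complement of an odd cycle transversal.

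First I would recall that, by the definition of a \kpart, each part $V_i \in \P$ can be written as a union of at most $\k$ cliques $C_{i,1}, \ldots, C_{i,\k}$ of $G$ (not necessarily disjoint). Next, I would observe that since $X$ is an odd cycle transversal (the connectivity of $G[X]$ plays no role for this statement), the induced subgraph $G - X$ is bipartite, and therefore contains no triangle. Consequently, every clique of $G$ intersects $V(G) \setminus X$ in at most $2$ vertices; in particular, $|C_{i,j} \setminus X| \le 2$ for each $j \in [\k]$.

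Combining these two facts, I would conclude by a union bound over the (at most) $\k$ cliques covering $V_i$:
\[
|V_i \setminus X| \;=\; \Bigl|\bigcup_{j=1}^{\k} (C_{i,j} \setminus X)\Bigr| \;\le\; \sum_{j=1}^{\k} |C_{i,j} \setminus X| \;\le\; 2\k,
\]
which is exactly the claimed bound. There is no real obstacle here: the only nontrivial ingredient is the triangle-freeness of $G - X$, which is immediate from $X$ being an OCT, and everything else follows from the structural definition of the \kpart.
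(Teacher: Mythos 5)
Your proposal is correct and follows the same argument as the paper: since $G - X$ is bipartite and hence triangle-free, each of the at most $\k$ cliques covering $V_i$ can contain at most $2$ vertices outside $X$, giving $|V_i \setminus X| \le 2\k$ by a union bound. The paper's version of this observation is stated with exactly the same reasoning, just more tersely.
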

Note that from any clique $C$ in $G$, there can be at most $2$ vertices that do not belong to an Odd Cycle Tranversal -- otherwise there is a triangle in $C$. Since each $V_i \in \P$ is a union of at most $\k$ cliques, at most $2\k$ vertices in each $V_i$ can be in $V(G) \setminus X$. We say that a set $X \subseteq V(G)$ is $\P$-restricted, if $|(V(G) \setminus X) \cap V_i| \le 2\k$ for any $V_i \in \P$.

First we adopt some notation from \cite{HegerfeldK20}. We say that $(A, B)$ is a bipartition of $G$, denoted by $(A, B) \in \bip(G)$, if $\{A, B\}$ is a partition of $V$, and $E(G[A]) = E(G[B]) = \emptyset$. For a set $X \subseteq V(G)$, the counting algorithm will count the number of candidates $(X, A)$, where $X$ is an Odd Cycle Transversal, OCT for short; and $A$ is one side of the bipartition. Formally, 
\begin{align*}
	\R &= \LR{(X, A) \in 2^{V(G)} \times 2^{V(G)}: X \text{ is $\P$-restricted, } X \cap A = \emptyset, (A, V(G) \setminus (X \cup A)) \in \bip(G - X), |X| = k },
	\\\S &= \LR{(X, A) \in \R : G[X] \text{ is connected}}
	\\\Q &= \LR{((X, A), (\lrcut)) : (X, A) \in \R, (X, \lrcut) \in \C(V), \text{ and } v_1 \in X_L}
\end{align*}
Where, we guess a vertex $v_1$ in an OCT solution, and force it on the left side of the consistent cuts, in order to break the left-right symmetry.

Similar to the previous section, we need to define a larger universe $U$ for applying the Isolation Lemma. Define $U = V \times \{\mathbf{X, A}\}$. We identify a pair $(X, A)$ with $X \times \{\mathbf{X}\} \cup A \times \{\mathbf{A}\}$. Then, we can also define $\w(X, A) = \w(X \times \{\mathbf{X}\} \cup A \times \{\mathbf{A}\}).$ However, for conceptual ease, we will describe the algorithm using the earlier notation. For the sets $\R, \S, \Q$, we also define their weight-restricted versions $\R_w, \S_w, \Q_w$, where the weight $\w(X, A) = \w(X \times \LR{\mathbf{X}} \cup A \times \LR{\mathbf{A}})$ in the corresponding definitions is required to be exactly $w$ from the subscript.

We have the following lemma analogous to Lemmata \ref{fv:lem:num-ccs},\ref{fv:lem:qs-fvs} from \cite{cygan2011solving}, the proof of which is omitted.
\begin{lemma}[\cite{cygan2011solving,HegerfeldK20}]
	Let $\w: U \to [N]$ be a weight function. Then for every $w \in \natural$, $|\S_w| \equiv |\Q_w|$.
\end{lemma}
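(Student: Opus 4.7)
The plan is to mirror the argument used in the proof of Lemma \ref{fv:lem:sol-cuts} for \textsc{Steiner Tree}: I would partition $\Q_w$ by its underlying pair $(X,A)$, apply Lemma \ref{fv:lem:num-ccs} to each fiber to count consistent cuts of $G[X]$ with $v_1$ pinned to the left, and then observe that the disconnected fibers vanish modulo $2$.

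First I would fix $w$ and write $|\Q_w| = \sum_{(X,A) \in \R_w} |\LR{\lrcut : ((X,A), \lrcut) \in \Q_w}|$. By the definition of $\Q$, a tuple $((X,A), \lrcut)$ contributes to $\Q_w$ iff $(X, \lrcut)$ is a consistently cut subgraph of $G[X]$ and $v_1 \in X_L$; in particular this forces $v_1 \in X$, so any $(X,A) \in \R_w$ with $v_1 \notin X$ contributes zero. For the remaining $(X,A) \in \R_w$ with $v_1 \in X$, Lemma \ref{fv:lem:num-ccs} applied to $G[X]$ with distinguished vertex $v_1$ gives that the inner count is exactly $2^{\cc(G[X])-1}$. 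Combining, $|\Q_w| = \sum_{(X,A) \in \R_w,\, v_1 \in X} 2^{\cc(G[X])-1}$.

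Reducing this expression modulo $2$, only the summands with $\cc(G[X])=1$ survive, i.e. those pairs $(X,A)\in\R_w$ with $v_1\in X$ and $G[X]$ connected. Under the standard \cnc convention of iterating over all possible guesses $v_1\in V(G)$ (so that for the current guess we implicitly restrict attention to solutions containing $v_1$), this set is exactly $\S_w$. Hence $|\Q_w|\equiv|\S_w|\pmod{2}$, which proves the lemma.

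The main obstacle, if any, is merely the bookkeeping around the extra bipartition component $A$: the cut $\lrcut$ partitions only $X$ and is independent of $A$, so the count $2^{\cc(G[X])-1}$ at fixed $(X,A)$ depends only on $G[X]$ and not on the chosen bipartition $(A, V(G)\setminus(X\cup A))$ of $G-X$. This is precisely why the classical \cnc identity from \cite{cygan2011solving} carries over unchanged from the \textsc{Steiner Tree} setting, and no mathematical input beyond Lemma \ref{fv:lem:num-ccs} is required.
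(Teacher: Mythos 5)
Your proposal is correct and takes the same approach the paper would: the paper omits this particular proof, stating only that it is analogous to Lemma \ref{fv:lem:sol-cuts} (\textsc{Steiner Tree}), and you have reconstructed exactly that argument, partitioning $\Q_w$ by the underlying pair $(X,A)$ and invoking Lemma \ref{fv:lem:num-ccs} fiberwise. Your observation that the bipartition component $A$ plays no role in the count because the cut partitions only $X$ is the right thing to say and is what makes the reduction to the \textsc{Steiner Tree} case go through. You also correctly flagged the one place where bookkeeping matters: Lemma \ref{fv:lem:num-ccs} is stated only for $X$ containing $v_1$, and the paper's definitions of $\R$ and $\S$ for \textsc{Connected Odd Cycle Transversal} (as well as for \textsc{Connected Vertex Cover}) do not literally restrict to $v_1 \in X$, whereas $\Q$ does; your appeal to the standard \cnc\ convention of guessing $v_1$ and implicitly restricting $\R$, $\S$ to solutions containing $v_1$ is the intended reading and resolves the mismatch cleanly.
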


For any subset $V' \subseteq V(G)$
\begin{align*}
	\hat{R}(V') &= \LR{(X, A) \in 2^{V'} \times 2^{V'}: X \text{ is $\P$-restricted, } X \cap A = \emptyset, (A, V' \setminus (X \cup A)) \in \bip(G[V'] - X), |X| = k },
	\\\hat{\Q}(V') &= \LR{((X, A), \lrcut) : (X, A) \in \hat\R(V'), \text{ and } v_1 \in X_L}
\end{align*}

The multivariate polynomials will have two formal variables $Z_W, Z_X$. Define $\states \coloneqq \{\oo_L, \oo_R, \zz_A, \zz_B\}$, where the states $\oo_L, \oo_R$ represent that the vertex is in the (candidate) OCT, and the subscript denotes the side of the consistent cut the vertex belongs to; on the other hand the states $\zz_A, \zz_B$ denotes that a vertex is not in the (candidate) OCT.
\begin{definition}
	Let $f: X \to \states$ be a function, where $X \subseteq V(G)$. We say that $f$ is $\P$-restricted, if $f^{-1}(\LR{\oo_L, \oo_R})$ is $\P$-restricted.
\end{definition}
Now, for a node $u_i \in V(F)$, and a $\P$-restricted function $f: \tail[u_i] \to \states$, we define partial solutions at $u_i$, but excluding $V_i$, that respect $f$ by:
\begin{align}
	\C_{(u_i)}(f) \coloneqq \Big\{ ((X, A), (X_L, X_R)) \in \hat{Q}(\tree(u_i)) : & X' = X \cup f^{-1}(\LR{\oo_L, \oo_R}),\nonumber 
	\\& C' = (Y_L \cup f^{-1}(\oo_L), Y_R \cup f^{-1}(\oo_R)), \nonumber
	\\& A' = A \cup f^{-1}(\zz_A) \nonumber
	\\& ((X', A'), C') \in \hat{Q}(\broom[u_i])	\Big\} \label{fv:eqn:coct-cuts-excl}
\end{align}
Similarly, for a $\P$-restricted function $g: \tail(u_i) \to \states$, we define partial solutions at $u_i$, but possibly including $V_i$, that respect $g$ by:
\begin{align}
	\C_{[u_i]}(g) \coloneqq \Big\{ ((X, A), (X_L, X_R)) \in \hat{Q}(\tree(u_i)) : & X' = X \cup f^{-1}(\LR{\oo_L, \oo_R}),\nonumber 
	\\& C' = (Y_L \cup f^{-1}(\oo_L), Y_R \cup f^{-1}(\oo_R)), \nonumber
	\\& A' = A \cup f^{-1}(\zz_A), \nonumber
	\\& ((X', A'), C') \in \hat{Q}(\broom(u_i))	\Big\} \label{fv:eqn:coct-cuts-incl}
\end{align}

Now, at a node $u_i \in V(F)$, and corresponding to $\P$-restricted functions $f: \tail[u_i] \to \states, g: \tail(u_i) \to \states$, we will compute polynomials $P_{(u_i)}(f)$ and $P_{[u_i]}(g)$ respectively. These have the following interpretation. The coefficient of the monomial $Z_W^w Z_X^i$ in $P_{(v)}(f)$ is given by
$$\left| \LR{ ((X, A), (X_L, X_R)) \in \C_{(u_i)}(f) :  \w(X \times \LR{\mathbf{X}} \cup A \times \LR{\mathbf{A}}) = w, |X| = i  } \right| \mod 2.$$
Similarly, the coefficient of the monomial $Z_W^w Z_X^i$ in $P_{[u_i]}(g)$ is given by
$$\left| \LR{ ((X, A), (X_L, X_R)) \in \C_{[u_i]}(g) :  \w(X \times \LR{\mathbf{X}} \cup A \times \LR{\mathbf{A}}) = w, |X| = i  } \right| \mod 2.$$

\paragraph{Recursively Computing Polynomials}
Let $u_i \in V(F)$ be a leaf in $F$ and let $f: \tail[u_i] \to \states$ be a $\P$-restricted function. Then,
\begin{align}
	P_{(v)}(f) &= \left[ (f^{-1}(\oo_L), f^{-1}(\oo_R)) \text{ is a consistent cut of } G[ f^{-1}( \LR{\oo_L, \oo_R} ) ] \right]  \nonumber
	\\&\cdot \ \ \left[ f^{-1}(\LR{\zz_L, \zz_R}) \text{ is a bipartition } G[f^{-1}(\LR{\zz_A, \zz_B})] \right]  \nonumber
	\\&\cdot\ \ \left[ v_1 \in \tail[u_i] \implies f(v_1) = \oo_L \right] \label{fv:eqn:coct-excl-leaf}
\end{align}
If $u_i \in V(F)$ is an internal node, then
\begin{align}
	P_{(u_i)}(f) = \prod_{u_j \in \child(u_i)} P_{[u_j]}(f) \label{fv:eqn:coct-excl-internal}
\end{align}
Finally, for any $u_i \in V(F)$, we define $\F(V_i)$ to be the $\P$-consistent functions from $V_i$ to $\states$, with the additional requirement that they be cut-respecting. Fix a function $h \in \F(V_i)$, recall that $V_i(h, \texttt{state})$ is the subset of $V_i$ that is mapped to $\texttt{state} \in \states$. Define: $V_i(h, \oo) \coloneqq V_i(h, \oo_L) \cup V_i(h, \oo_R)$,  $\w_{\mathbf{X}} \coloneqq \sum_{v \in V_i(h, \oo)} \w(v, \mathbf{X})$, $\w_{\mathbf{A}} \coloneqq \sum_{v \in V_i(h, \zz_L)} \w(v, \mathbf{A})$, and finally $s \coloneqq |V_i(h, \oo)|$. 

Then for any $\P$-consistent function $g: \tail(u_i) \to \states$,
\begin{align}
	P_{[u_i]}(g) = \sum_{h \in \F(V_i)} P_{(u_i)}(g \oplus h) \cdot Z_W^{\w_{\mathbf{X}} + \w_{\mathbf{A}}} Z_X^{s} \label{fv:eqn:coct-incl}
\end{align}
Similar to the discussion in the previous subsections, it is possible to replace $\F(V_i)$ with its superset $\F'(V_i)$, where we do not have the additional requirement of cut-respecting. Then, due to the first condition in \ref{fv:eqn:coct-excl-leaf}, all polynomials with respect to such a function will be equal to the zero polynomials. Furthermore, it is possible that some functions in $\F(V_i)$ will not induce a bipartition, which will also result in all corresponding polynomials being equal to the zero polynomials. Therefore, it may be possible to further prune the recursive calls by examining each $h \in \F(V_i)$. However, even an upper bound as given in Lemma \ref{fv:lem:coct-function-bound} suffices.

\paragraph{Correctness.}
Now we prove the correctness of recurrences (\ref{fv:eqn:coct-excl-leaf}-\ref{fv:eqn:coct-incl}). That is, we want to prove that for any $u_i \in V(F)$, and $\P$-consistent functions $f: \tail[u_i] \to \states$, $g: \tail(v_i) \to \states$, and any $w \in [2n^2], i \in [k]$,
\begin{enumerate}
	\item Coefficient of $Z_W^w Z_X^i$ in $P_{(u_i)}(f)$ is equal to 
	$$|\LR{((X, A), (X_L, X_R)) \in \C_{(u_i)}(f) : \w(X \times \LR{\mathbf{X}} \cup A \times \LR{\mathbf{A}}) = w \text{ and } |X| = i }|\mod 2$$ 
	\item Coefficient of $Z_W^w Z_X^i$ in $P_{[u_i]}(g)$ is equal to 
	$$|\LR{((X, A), (X_L, X_R)) \in \C_{[u_i]}(g) : \w(X \times \LR{\mathbf{X}} \cup A \times \LR{\mathbf{A}}) = w \text{ and } |X| = i }|\mod 2$$
\end{enumerate}
The proof is by induction.
\paragraph{Base Case.}
First consider the case where $u_i$ is a leaf of $F$, and fix functions $f$ and $g$ as above. Then, note that $\tree(u_i)$ is empty. Therefore, from the definition of $\C_{(u_i)}(f)$, \ref{fv:eqn:coct-cuts-excl} is equal to the singleton set $\{((\emptyset, \emptyset), (\emptyset, \emptyset))\}$, if all three predicates in Equation \ref{fv:eqn:steiner-excl-leaf} are true. Otherwise, $\C_{(u_i)}(f)$ is empty. 

Now, consider Equation \ref{fv:eqn:coct-incl} at a leaf $u_i$ for some $g: \tail(u_i) \to \states$.
Recall that $\tree[u_i] = V_i$, and $\hat{Q}(V_i) = \{ ((X, A), \lrcut) \in \C(V_i) : X \in \hat\R(V_i) \text{ and } t \in V_i \implies t_1 \in X_L \}$. Note that there is a one-to-one correspondence between $(X, \lrcut) \in \hat{Q}(V_i)$ and a function $h \in \F'(V_i)$. Now, consider a $(X, \lrcut) \in \hat{Q}(V_i)$ such that $((X', A'), C')$ as in the definition \ref{fv:eqn:coct-cuts-incl} of $\C_{[u_i]}(g)$, belongs to $\hat{Q}(\broom[u_i])$. 

Then, $g \oplus h$ is a $\P$-restricted function from $\tail[u_i] \to \states$, which implies that $P_{(u_i)}(g \oplus h)$ is equal to $1$. Furthermore, note that $V_i(h, \oo) = X$. Note the total weight and the number of vertices in $V_i(h, \oo)$ is equal to $\w_{\mathbf{X}}$, and $s$ respectively, as defined before (\ref{fv:eqn:coct-incl}). Similarly, the weight of the vertices in $V_i(h, \zz_L)$ is equal to $\w_{\mathbf{A}}$. Therefore, by multiplying $P_{(u_i)}(g \oplus h)$ with the monomial $Z^{\w_{\mathbf{X}}+\w_{\mathbf{A}}}_W Z_X^{s}$, we get the term corresponding to $h \in \F(v_i)$ in Equation \ref{fv:eqn:coct-incl}. If, on the other hand, $h$ corresponds to a $(X, \lrcut)$ such that $(X', C')$ does not belong to $\hat{Q}(\broom[u_i])$, then $P_{(u_i)}(g \oplus h)$ is equal to $0$. Then, summing over all $h \in \F'(V_i)$, we prove the second property for the base case.

\paragraph{Inductive Hypothesis.}
Now let us inductively assume that the claim is true for all children $u_j \in \child(u_i)$. We want to prove that the same holds for $u_i$.

\paragraph{Inductive Step.}
Consider Equation \ref{fv:eqn:coct-excl-internal}, for a fixed $\P$-restricted function $f: \tail[u_i] \to \states$. Consider a $((X, A), \lrcut) \in \hat{Q}(\tree(u_i))$ as in (\ref{fv:eqn:coct-cuts-excl}). Note that $X \subseteq \tree(u_i)$. For every $u_j \in \child(u_i)$, define $X^j \coloneqq X_j \cap \tree[u_j]$, and the sets $A^j, B^j, X^j_L, X^j_R$ are defined analogously. 

Since $F$ is a treedepth decomposition of $G$, there are no edges of $E(G)$ between vertices in different $\tree[u_j]$'s. It follows that $((X^j, A^j), (X^j_L, X^j_R)) \in |\C_{[u_j]}(f)|$. Furthermore, $(A^j, B^j)$ is a bipartition of $G[\tree[u_j]] - X^j$.

Therefore, we can obtain the weight $\w(X, A)$ and $|X|$ by adding the respective quantities $\w(X^j, A^j)$, and $|X^j|$ over the children $u_j \in \child(u_i)$. Then, using the properties of polynomial multiplication, and the correctness of polynomials $P_{[u_j]}(f)$ via the inductive hypothesis, the correctness of Equation \ref{fv:eqn:coct-excl-leaf} follows for an internal node $u_i \in V(F)$.

Consider Equation \ref{fv:eqn:coct-cuts-excl} for a $\P$-restricted function $g: \tail(u_i) \to \states$. Consider any $((X, A), \lrcut) \in \C_{[u_i]}(g)$. It follows that $X \in \R(\tree[v_i])$, in particular, $X \subseteq \tree[v_i]$, and $X$ is $\P$-restricted. Let $(Y, Y_L, Y_R, C, D)$ denote the intersection of the respective sets from $(X, X_L, X_R, A, B)$ with the set $V_i$. We then have the following properties:
\begin{enumerate}
	\item $|C \cup D| \le \k^2 (\Delta+1)$,
	\item For any $u \in Y_L$ and $v \in Y_R$, $uv \not\in E(G)$, 
	\item $E(G[C]) = E(G[D]) = \emptyset$, and
	\item If $t_1 \in V_i$, then $t_1 \in Y_L$.
\end{enumerate}
Therefore, there is a one-to-one correspondence between $((Y, C), (Y_L, Y_R))$, and a function $h \in \F(V_i)$. Now we observe that $g \oplus h$ is a function from $\tail[u_i]$ to $\states$. Then, using an argument similar to the second part of the base case, the correctness of equation \ref{fv:eqn:coct-incl} follows. This finishes the proof of correctness of the reucrrences by induction.

We prove the following bound analogous to Lemma \ref{fv:lem:fvs-function-bound}.

\begin{lemma} \label{fv:lem:coct-function-bound}
	For any $V_i \in \P$, $|\F(V_i)| \le (1+|V_i|)^{\Oh(1)} = 2^{\Oh(\omega(V_i))}$. Furthermore, the set $\F(V_i)$ can be computed in time $\text{poly}(|\F(V_i)|, n)$.
\end{lemma}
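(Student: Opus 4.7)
The plan is to mirror the proofs of Lemmas \ref{fv:lem:steiner-function-bound}, \ref{fv:lem:cvc-function-bound}, and \ref{fv:lem:fvs-function-bound}; the combinatorics is nearly identical, the only genuine novelty being that each ``outside-OCT'' vertex now carries an extra binary choice between $\zz_A$ and $\zz_B$, while the cut-respecting condition still controls only the $\{\oo_L,\oo_R\}$-part.

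First I would count $|\F(V_i)|$ in three stages. (1) Choose the set $B_i \coloneqq h^{-1}(\LR{\zz_A,\zz_B}) \subseteq V_i$ of vertices outside the candidate OCT; by the $\P$-restrictedness of $h$ together with Observation \ref{fv:obs:coct} we have $|B_i| \le 2\kappa$, hence at most $(1+|V_i|)^{2\kappa}$ choices. (2) For every $v \in B_i$ pick one of the two states $\zz_A, \zz_B$; at most $2^{2\kappa}$ options. Crucially, I would \emph{not} impose here that $(h^{-1}(\zz_A), h^{-1}(\zz_B))$ be a proper bipartition: any such $h$ only produces the zero polynomial through the predicate in (\ref{fv:eqn:coct-excl-leaf}), so overcounting is harmless for both the bound and the algorithm. (3) Assign $\oo_L$ or $\oo_R$ to each vertex in $V_i \setminus B_i$ in a cut-respecting way. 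Because $V_i$ is a union of at most $\kappa$ cliques, so is $V_i \setminus B_i$, and the cut-respecting condition forces every connected component of $G[V_i \setminus B_i]$ to lie on one side of the cut; since a maximum independent set of $V_i$ has size at most $\kappa$, there are at most $2^{\kappa}$ valid assignments. Multiplying and using $1+|V_i| \ge 2$ together with $\kappa,\Delta = \Oh(1)$,
\[
  |\F(V_i)| \;\le\; (1+|V_i|)^{2\kappa}\cdot 2^{2\kappa}\cdot 2^{\kappa} \;\le\; (1+|V_i|)^{\Oh(1)} \;=\; 2^{\Oh(\omega(V_i))}.
\]

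The same enumeration gives the algorithmic claim. I would generate all subsets $B_i$ of size at most $2\kappa$ with a standard polynomial-delay subset generator, iterate over the $2^{2\kappa}$ assignments of $\LR{\zz_A,\zz_B}$ to $B_i$, and then, as in Lemma \ref{fv:lem:steiner-function-bound}, compute greedily a maximal independent set $S_i$ of $G[V_i \setminus B_i]$ (of size at most $\kappa$), enumerate its $2^\kappa$ colorings by $\LR{\oo_L,\oo_R\}$, and propagate each coloring to the remaining vertices of $V_i \setminus B_i$ through their neighbors in $S_i$, discarding any propagation that produces a conflict (which is exactly a non-cut-respecting extension). Each enumerated function is produced in time $\mathrm{poly}(n)$, giving overall $\mathrm{poly}(|\F(V_i)|,n)$.

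I expect no real obstacle. The only subtle point is cleanly separating the two kinds of constraints on $h$: the cut-respecting property is baked into $\F(V_i)$, whereas the bipartition property on the $\zz_A/\zz_B$-side is enforced only later at leaves via (\ref{fv:eqn:coct-excl-leaf}). Observing this separation is what allows me to absorb the ``bipartition'' choice into a crude $2^{2\kappa}$ factor rather than trying to count proper 2-colorings inside $V_i$, and it is also what keeps the algorithm simple while preserving correctness.
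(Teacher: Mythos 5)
Your proof is correct and matches the paper's argument almost exactly: the paper's three factors $(1+|V_i|)^{2\k}\cdot 2^\k\cdot 2^{2\k}$ correspond to your three stages (choosing the $\zz$-set, the cut-respecting $\LR{\oo_L,\oo_R}$ coloring, and the $\LR{\zz_A,\zz_B}$ coloring), and the algorithmic claim is likewise proved by the same reduction to the Steiner Tree case. Your explicit remark that the bipartition constraint is deliberately \emph{not} enforced here but deferred to the leaf predicate in (\ref{fv:eqn:coct-excl-leaf}) is the same point the paper makes in the paragraph preceding the lemma.
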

\begin{proof}
	Because of the definition of $\P$-restricted functions, there are at most $(1+|V_i|)^{2\k}$ choices for selecting a subset $U_i \subseteq V_i$ such that $|V_i \setminus U_i| \le 2\k$. 
	
	Recall that $V_i$ is a union of at most $\k$ cliques. Therefore, due to the second property, if there are two vertices $u, v \in U_i$ that belong to the same clique, then they both must be mapped to $\oo_L$, or both to $\oo_R$. Therefore, there are at most $2^\k$ choices to map vertices in $U_i \cup K_i$ that are part of the same clique to either $\oo_L$, or all of them to $\oo_R$. Finally, note that there are at most $2\k$ vertices in $V_i \setminus U_i$, each of which are mapped to either $\zz_A$, or to $\zz_B$. 
	Therefore, we have the following.
	\begin{align*}
		|\F(V_i)| &\le (1+|V_i|)^{2\k} \cdot 2^\k \cdot 2^{2\k} 
		\\&\le (1+|V_i|)^{5\k} \tag{$\because 1+|V_i| \ge 2$} 
		\\&= 2^{\Oh(\omega(V_i))} \tag{Recalling that $\omega(V_i) = \log(1+|V_i|)$, and $\k = \Oh(1)$}
	\end{align*}
	The algorithm for computing the set $\F(V_i)$ in the claimed running time is analogous to that in Lemma \ref{fv:lem:steiner-function-bound}, and is therefore omitted.
\end{proof}

Then, we get the following result using similar arguments as in the previous sections.
\begin{theorem} \label{fv:theorem:coct}
	There exists a $\tndd$ time, polynomial space, randomized algorithm to solve \textsc{Connected Odd Cycle Transversal} in the intersection graphs of similarly sized fat objects in $\real^d$.
\end{theorem}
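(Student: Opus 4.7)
My plan is to combine the machinery already assembled for \textsc{Connected Odd Cycle Transversal} and proceed in the same style as the proofs of Theorems \ref{fv:theorem:steiner-tree} and \ref{fv:theorem:fvs}. First, I would invoke Theorem \ref{fv:thm:treedepth} to compute, in time $\tndd$ and polynomial space, a weighted treedepth decomposition $(F, \varphi)$ of $\Gp$ of weighted depth $\ndd$. This supplies the recursion skeleton on which the \cnc computation runs.

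Next, I would set up the outer \cnc loop as in Corollary \ref{fv:cor:cnc}: sample a weight function $\w: U \to [2|U|]$ for the enlarged universe $U = V \times \{\mathbf{X}, \mathbf{A}\}$, and for each target weight $w \in [2|U|^2]$ call the procedure $\texttt{CountC}$ that computes the polynomial $P_{[r]}(\emptyset)$ at the root $r$ of $F$ according to recurrences (\ref{fv:eqn:coct-excl-leaf}--\ref{fv:eqn:coct-incl}). The value returned would be the coefficient of $Z_W^w Z_X^k$ in $P_{[r]}(\emptyset)$, and correctness of this coefficient as a count modulo two has already been established in the inductive argument preceding the theorem.

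The key running-time accounting would then invoke Theorem \ref{fv:thm:subexp-poly}. At a node $u_i \in V(F)$, the algorithm iterates over $\F(V_i)$ to evaluate (\ref{fv:eqn:coct-incl}); each $h \in \F(V_i)$ produces one recursive call on the subtree rooted at $u_i$ together with polynomial-time bookkeeping to update powers of $Z_W$ and $Z_X$. Lemma \ref{fv:lem:coct-function-bound} gives $|\F(V_i)| \le 2^{\Oh(\omega(V_i))}$, so both the work at $u_i$ and the number of recursive calls fit the hypothesis of Theorem \ref{fv:thm:subexp-poly}. Because the monomial degrees are polynomially bounded (weights in $[2n]$, sizes at most $n$), each polynomial can be stored in polynomial space, and the recursion depth is at most $n$, so only polynomial space is needed overall. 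Theorem \ref{fv:thm:subexp-poly} then yields the overall bound of $\tndd$ time and polynomial space for one call to \texttt{CountC}.

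Finally, I would close the argument using Corollary \ref{fv:cor:cnc}. Lemma \ref{fv:lem:num-ccs} (applied as in the cited \cnc lemma for OCT) and the definitions of $\R, \S, \Q$ ensure that $|\S_w| \equiv |\Q_w| \pmod{2}$, so \texttt{CountC} satisfies the two hypotheses of the corollary. Consequently, the outer loop returns \textbf{true} with probability at least $1/2$ whenever a connected OCT of size $k$ exists, and \textbf{false} otherwise; standard probability amplification by independent repetition improves the success probability to any desired constant. The only step that requires care is verifying that the $\mathbf{X}, \mathbf{A}$ splitting of the universe interacts correctly with (\ref{fv:eqn:coct-incl}), since the weight contribution $\w_{\mathbf{X}} + \w_{\mathbf{A}}$ couples two different roles of a vertex; but this is handled exactly as in the \textsc{Feedback Vertex Set} argument, and no new obstacle arises.
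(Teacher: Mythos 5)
Your proposal is correct and follows essentially the same approach the paper intends: compute the weighted treedepth decomposition via Theorem~\ref{fv:thm:treedepth}, run \texttt{CountC} using recurrences (\ref{fv:eqn:coct-excl-leaf}--\ref{fv:eqn:coct-incl}), bound the per-node work and branching via Lemma~\ref{fv:lem:coct-function-bound} so Theorem~\ref{fv:thm:subexp-poly} applies, and invoke Corollary~\ref{fv:cor:cnc} with the enlarged universe for correctness---which is exactly the chain the paper compresses into ``similar arguments as in the previous sections,'' mirroring the explicit proof of Theorem~\ref{fv:theorem:steiner-tree}.
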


Note that we can reduce the standard \textsc{Odd Cycle Transversal} to \textsc{Connected Odd Cycle Tranversal}, by adding a universal vertex connected to all the vertices in the original graph. Suppose we want to solve \textsc{Odd Cycle Transversal} for a given $(G, d, \k, \Gp)$. Note that the graph $G'$ obtained by adding a universal vertex does not necessarily belong to the class of intersection graph of similarly sized fat objects. However, we observe that we can add a node corresponding to the universal vertex at the root of the weighted tree decomposition $(F, \varphi)$ of $(G, d, \k, \Gp)$. The weighted treedepth increases by at most one. Finally, we observe that, even though the degree of the universal vertex is $n$, there are at most $4$ recursive calls made from the root. Therefore, we can solve \textsc{Odd Cycle Transversal} problem on $(G, d, \P, \Gp)$ using the algorithm from above.

\begin{theorem}
	There exists a $\tndd$ time, polynomial space, randomized algorithm to solve \textsc{Odd Cycle Transversal} in the intersection graphs of similarly sized fat objects in $\real^d$.
\end{theorem}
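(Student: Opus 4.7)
The plan is to reduce \textsc{Odd Cycle Transversal} on a graph $G$ to \textsc{Connected Odd Cycle Transversal} on $G' \coloneqq G + v^*$, where $v^*$ is a new universal vertex joined to every vertex of $G$, and then run the algorithm from Theorem \ref{fv:theorem:coct}. The reduction is standard: $S \subseteq V(G)$ is an OCT of $G$ of size $k$ if and only if $S \cup \{v^*\}$ is a Connected OCT of $G'$ of size $k+1$, since $v^*$ is adjacent to every vertex of $S$, making $G'[S \cup \{v^*\}]$ connected, and $G'-(S\cup\{v^*\}) = G-S$. By setting the guessed vertex in the \cnc framework of Theorem \ref{fv:theorem:coct} to $v_1 \coloneqq v^*$, we count only solutions containing $v^*$, which is exactly the ``good'' side of the reduction.

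The main obstacle is that $G'$ need not be the intersection graph of similarly-sized fat objects, so Theorem \ref{fv:theorem:coct} cannot be invoked as a black box. However, the algorithm underlying that theorem is oblivious to the geometric representation; it only needs (i) a partition $\P'$ of $V(G')$ in which every part is a union of at most $\Oh(1)$ cliques so that Observation \ref{fv:obs:coct} applies, and (ii) a weighted treedepth decomposition of $G_{\P'}$ of weighted depth $\Oh(n^{\dd})$. I will build both directly.

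For (i), take the $\k$-partition $\P$ of $G$ from Lemma \ref{fv:lem:deberg-kpart} and set $\P' \coloneqq \P \cup \{V_0\}$ with $V_0 \coloneqq \{v^*\}$. Since attaching $v^*$ introduces no new edges inside any $V_i \in \P$, each such $V_i$ remains a union of at most $\k$ cliques in $G'$, so $|V_i \setminus X| \le 2\k$ for every OCT $X$ of $G'$ by exactly the argument in Observation \ref{fv:obs:coct}; the bound is trivial for $V_0$. For (ii), apply Theorem \ref{fv:thm:treedepth} to $\Gp$ to obtain $(F, \varphi)$ of weighted depth $\Oh(n^{\dd})$ in $\tndd$ time and polynomial space, and then form $(F', \varphi')$ by creating a new root node $u_0$ with $\varphi'(u_0) \coloneqq V_0$ and attaching the root of $F$ as its child. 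Because $\omega(u_0) = \log(1+1) = 1$, the weighted depth grows by at most $1$ and stays in $\Oh(n^{\dd})$.

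It remains to check the hypotheses of Theorem \ref{fv:thm:subexp-poly} at the atypical root $u_0$. Although $v^*$ has degree $n$ in $G'$, only the enumeration of $\F(V_0)$ governs the branching factor: forcing $v_1 = v^*$ to state $\oo_L$ reduces $|\F(V_0)|$ to $1$, and unconditionally $|\F(V_0)| \le |\states| = 4 = 2^{\Oh(\omega(u_0))}$. The polynomial computed at $u_0$ has $\Oh(1)$ terms and can be stored in polynomial space. For every other node of $F'$, Lemma \ref{fv:lem:coct-function-bound} and the analysis underlying Theorem \ref{fv:theorem:coct} apply unchanged. Theorem \ref{fv:thm:subexp-poly} combined with Corollary \ref{fv:cor:cnc} then yields the claimed $\tndd$-time, polynomial-space randomized algorithm for \textsc{Odd Cycle Transversal}.
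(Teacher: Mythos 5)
Your proposal is correct and follows essentially the same route as the paper: add a universal vertex $v^*$, observe that $G'$ may leave the class of intersection graphs of similarly-sized fat objects, construct the weighted treedepth decomposition of $G'_{\P'}$ directly by hanging the old decomposition below a new root node for $\{v^*\}$ (which adds only weight $1$), note that the branching factor at the new root is bounded by $|\states| = 4$ despite $v^*$ having degree $n$, and force $v_1 = v^*$ so that only OCTs of $G$ are counted. Your write-up is somewhat more careful than the paper's sketch (you explicitly verify that $\P' = \P \cup \{V_0\}$ still supports Observation \ref{fv:obs:coct} and Lemma \ref{fv:lem:coct-function-bound}, neither of which depends on the degree bound $\Delta$ that $G_{\P'}$ now violates), but the underlying argument is the same.
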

\fi

\iflong
\section{Cycle Cover} \label{fv:sec:cyclecover}

Let $C$ be a cycle in a graph $G = (V, E)$. We use $V(C), E(C)$ to denote the set of vertices and edges in $C$ respectively. If $\C$ is a collection of cycles, then we use the notation $V(\C) = \bigcup_{C \in \C} V(C)$, and $E(\C) = \bigcup_{C \in \C} E(C)$. Finally, if $V(\C) = V(G)$, and the cycles in $\C$ are vertex disjoint, then we say that $\C$ is a \emph{cycle cover} of $G$. 

\begin{definition}[\textsc{Cycle Cover}] \label{fv:defn:cycle-cover}
		\ \\Input: An undirected graph $G = (V, E)$, and an integer $k$. 
		\\Question: Does there exist a cycle cover of $G$ of size at most $k$?
\end{definition}
Note that the case of $k = 1$ corresponds to determining whether $G$ has a Hamiltonian cycle, that is, to the \textsc{Hamiltonian Cycle} problem.

\subsection{Structural Properties}

In this section, we need a strengthened version of Lemma \ref{fv:lem:deberg-kpart}, which is possible when we have access to geometric representation of a geometric intersection graph. 

\begin{lemma} \label{fv:lem:deberg-1part}
	Let $d \ge 2$ be a constant. Then, there exists a constant $\Delta$, such that for any intersection graph $G = (V, E)$ of $n$ similarly-sized fat objects in $\real^d$ along with the geometric representation of the objects, a $1$-partition $\P$ for which $G_\P$ has maximum degree $\Delta$ can be computed in polynomial time.
\end{lemma}

In the following, we fix a $1$-partition $\P= \{V_1, V_2, \ldots, V_t\}$ such that $G_{\P}$ has maximum degree $\Delta = O(1)$. Recall that the definition of a $1$-partition implies that any $V_i \in \P$ is an induced clique in $G$. For any $V_i \in \P$, let $N(V_i)$ denote the set of neighbors of $V_i$ in the graph $G_\P$. Note that $|N(V_i)| \le \Delta = O(1)$. Finally, for any disjoint vertex subsets $U, W \subseteq V(G)$, let $E(U, W)$ denote the subset of edges in $E(G)$ with one endpoint in $U$ and another in $W$. 

We note that the structural properties proved in the rest of this subsection can be thought of as an analog of the corresponding results in Chaplick et al. \cite{ChaplickFGK019}. One important distinction between their work from ours is that we have a bound $\Delta$ on the maximum degree of $\Gp$, which we use to obtain more refined bounds. We give the formal proofs for completeness, since Chaplick et al. \cite{ChaplickFGK019} only give a sketch, and a full version with all the details is not publicly available. We also note that these results can be seen as a generalization of results of Ito et al. \cite{ito2010tractability} for the special case of \textsc{Hamiltonian Cycle}. We consider \textsc{Cycle Cover}, which is a more general problem. 

\iffalse
\begin{definition}
	A cycle $C$ is said to be a canonical cycle such that for any distinct $V_i, V_j \in \P$, $|C \cap E(V_i, V_j)| \le 2$. 
	\\A set of vertex disjoint cycles $\C = \{C_1, C_2, \ldots, C_k\}$ is said to be a canonical set of cycles, if any $C_j \in \C$ is a canonical cycle.
\end{definition}

We first have a following simple observation (see \cite{ito2010tractability} for example, for a proof).
\begin{observation}
	Let $\C = \{C_1, C_2, \ldots, C_k\}$ be a set of vertex disjoint cycles, where $k \ge 1$. Then, there exists a canonical set of cycles $\C' = \{C'_1, C'_2, \ldots, C'_{k}\}$ such that for any $V(C_i) = V(C'_i)$ for $1 \le i \le k$.
\end{observation}

We need another claim about a set of canonical cycles.
\fi

\begin{claim} \label{fv:cl:canonical}
	Let $\C = \{C_1, C_2, \ldots, C_k\}$ be a set of vertex disjoint cycles, with $k \ge 1$. Then, we can obtain another set of vertex disjoint cycles $\C' = \{C'_1, C'_2, \ldots, C'_{k'}\}$ such that (a) $k' \le k$, (b) $V(\C) = V(\C')$, and with the following two properties.
	\begin{enumerate}
		\item For any $V_i \in \P$, there exists at most one cycle $C' \in \C'$ such that $V(C') \subseteq V_i$.
		\item For any $V_i, V_j \in \P$, there exists at most one cycle $C' \in \C'$ such that $E(V_i, V_j) \cap E(C)' \neq \emptyset$. Furthermore, if there exists such a cycle $C'$, then $|E(V_i, V_j) \cap C'| \le 2$.
	\end{enumerate}
\end{claim}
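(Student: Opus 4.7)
The plan is to exhibit a sequence of edge-swap modifications on $\C$ that never increases $|\C|$, preserves $V(\C)$, and terminates at a $\C'$ satisfying both properties. The key enabler is that $\P$ is a $1$-partition, so each $V_i \in \P$ induces a clique in $G$ and we can freely rewire edges inside any $V_i$ without worrying about the existence of replacement edges. First, I would enforce property 1: as long as two distinct cycles $C_a, C_b \in \C$ both satisfy $V(C_a), V(C_b) \subseteq V_i$ for some $V_i \in \P$, pick arbitrary $uv \in E(C_a)$ and $xy \in E(C_b)$, delete them, and insert the clique edges $ux$ and $vy$; this merges $C_a$ and $C_b$ into a single cycle on $V(C_a) \cup V(C_b)$ and drops $|\C|$ by one. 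Next, I would enforce the uniqueness half of property 2 analogously: while distinct cycles $C_a \ne C_b \in \C$ both use edges of $E(V_i, V_j)$, pick $e_a = u_av_a \in E(C_a) \cap E(V_i, V_j)$ and $e_b = u_bv_b \in E(C_b) \cap E(V_i, V_j)$ (with $u_a, u_b \in V_i$, $v_a, v_b \in V_j$), and replace $\{e_a, e_b\}$ by the intra-clique edges $\{u_au_b, v_av_b\}$; this fuses $C_a$ and $C_b$ into a single cycle. Neither operation interferes with already-established invariants: intra-clique swaps do not touch any cross-edge, and the cycle produced by a cross-edge merge spans both $V_i$ and $V_j$ and hence cannot be contained in any single clique.

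The more delicate step is the ``$\le 2$ edges'' half of property 2 inside a single cycle $C$. I claim that whenever $C$ uses $m \ge 3$ edges of $E(V_i, V_j)$, a similar clique-swap reduces $m$ by $2$ while keeping $C$ one cycle rather than splitting it in two. Fix a traversal direction of $C$ and assign to each $e \in E(V_i, V_j) \cap E(C)$ a sign $\sigma(e) \in \{+,-\}$ according to whether $e$ is traversed $V_i \to V_j$ or $V_j \to V_i$. If we remove two same-sign edges $e_a, e_b$, the two resulting arcs have endpoint pairs $(v_a, u_b)$ and $(v_b, u_a)$, and inserting the intra-clique edges $u_au_b$ and $v_av_b$ glues the arcs into a single cycle on the same vertex set with $m-2$ cross-edges. (Same-sign edges are automatically vertex-disjoint, since the two cycle edges incident to any given vertex are traversed in opposite directions, so the inserted edges are legitimate non-loops.) By contrast, two opposite-sign edges would yield arcs with endpoint pairs $(v_a, v_b)$ and $(u_a, u_b)$, and the same insertion would split $C$ in two, which must be avoided. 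Since $m \ge 3$ and there are only two signs, the pigeonhole principle guarantees a same-sign pair; iterating drives $m$ down to at most $2$.

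Termination follows from the lexicographic potential $\bigl(|\C|,\ \sum_{V_i, V_j \in \P} \sum_{C \in \C} |E(V_i, V_j) \cap E(C)|\bigr)$, which strictly decreases at every modification: the two merge operations decrease the first coordinate, while the single-cycle rerouting keeps $|\C|$ unchanged and decreases the second coordinate by $2$. Since no step adds or deletes a vertex, $V(\C') = V(\C)$ throughout, and since no step increases $|\C|$, we obtain $|\C'| \le |\C|$ at the end. All invariants are preserved across phases (intra-clique edges introduced by cross-edge operations are not themselves cross-edges, and every merged cycle straddles at least two cliques, so it cannot land entirely inside one $V_i$), which yields a $\C'$ with both structural properties. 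The main obstacle is the orientation/pigeonhole argument that enables single-cycle rerouting without splitting; the rest is routine edge-swap bookkeeping made possible by the $1$-partition assumption.
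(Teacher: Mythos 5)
Your proof is correct and follows essentially the same approach as the paper's: merge cycles fully contained in a single clique, merge two cycles sharing cross-edges in $E(V_i,V_j)$ via a clique-edge swap, and reroute within a single cycle to cut $|E(V_i,V_j)\cap E(C)|$ down to $2$. What you add is exactly the detail the paper delegates to Figure~\ref{fv:fig:canonical}: the orientation/pigeonhole argument showing that among $m\ge 3$ cross-edges of one cycle some two are traversed in the same direction, and that swapping those two (and only those two) for intra-clique edges reconnects the two resulting arcs into a single cycle rather than splitting $C$ into two; you also correctly observe that same-sign cross-edges are automatically vertex-disjoint, so the replacement edges are legitimate. Your lexicographic potential $\bigl(|\C|,\ \sum_{i,j}\sum_{C}|E(V_i,V_j)\cap E(C)|\bigr)$ gives a clean termination argument that the paper only states informally. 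One small thing worth making explicit (it follows from your sign observation, but a reader might want it spelled out): for a same-sign pair $e_a=u_av_a$, $e_b=u_bv_b$, the inserted edges $u_au_b$ and $v_av_b$ cannot already be edges of $C$, since that would force $e_a$ and $e_b$ to be adjacent along the cycle and hence share an endpoint, contradicting vertex-disjointness of same-sign pairs.
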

\begin{proof}
	First, note that if there exist at least two cycles that are completely contained a $V_i$ for some $V_i \in \P$, then these cycles can be merged so that there exists at most one such cycle per $V_i$. Therefore, the first property is easy to satisfy.
	Consider following two operations.
	\begin{enumerate}
		\item For any $V_i, V_j$, if there exists a cycle $C \in \C$ such that $|E(V_i, V_j) \cap E(C)| \ge 3$, reroute as shown in Figure \ref{fv:fig:canonical}, cases A and B. Apply this operation repeatedly as long as $|E(V_i, V_j) \cap E(C)| \ge 3$. Finally, when this operation cannot be applied, we have that $|E(V_i, V_j) \cap E(C)| \le 2$.
		\item Consider any $V_i, V_j$, and suppose there are two edges $e_1, e_2 \in E(V_i, V_j)$ such that $e_1 \in E(C_1)$ and $e_2 \in E(C_2)$. Then, we can merge the two cycles $C_1$ and $C_2$ into one cycle (see Figure \ref{fv:fig:canonical} case C).
	\end{enumerate}
	
	\begin{figure}[h]
		\centering
		\includegraphics[scale=0.7]{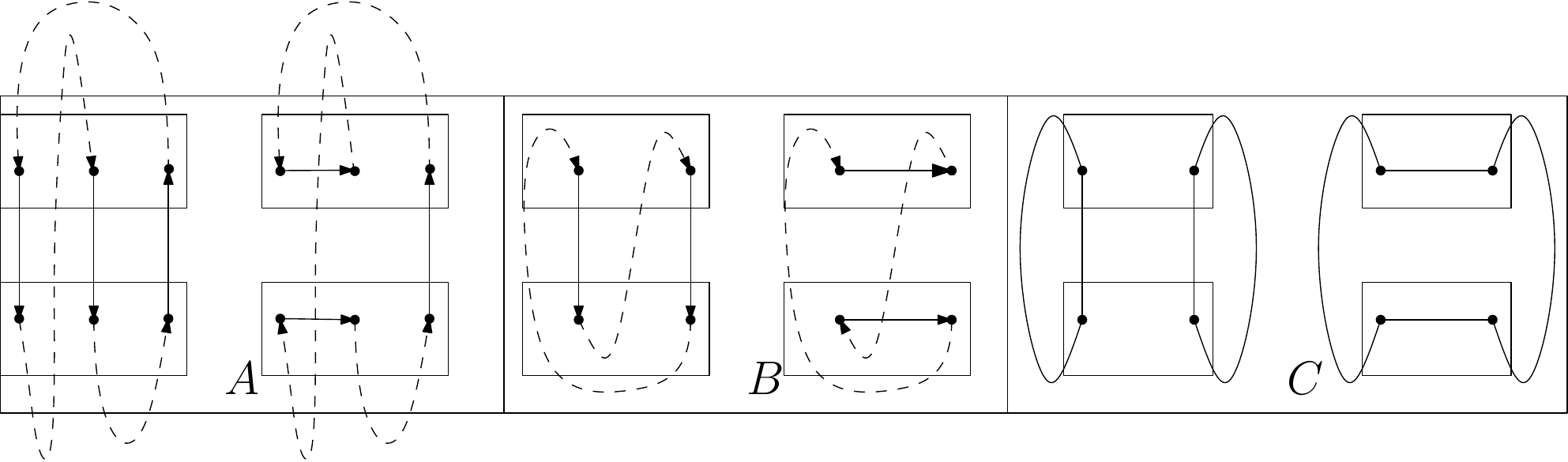}
		\caption{Different cases for rerouting in Claim \ref{fv:cl:canonical}}
		\label{fv:fig:canonical}
	\end{figure}
	
	First, observe that the set of incident vertices on the cycles does not change after applying any of the operations. Also, since we merge two cycles in the second operation, the number of cycles can only decrease.
	
	Initially, we apply the first operation repeatedly as long as it is possible. We redefine $\C' \gets \C$ after each application. Next, we apply the second operation if it is possible to do so. After each application of the second operation, we also check whether the first operation can be applied. Note that when neither of the operations can be applied to the current set of cycles $\C'$, it has the claimed properties. 
\end{proof}

If $\C$ is a set of vertex disjoint cycles satisfying the two properties from Claim \ref{fv:cl:canonical}, we say that $\C$ is a set of \emph{canonical} cycles, and a cycle $C \in \C$ is said to be a canonical cycle. In the following, when we refer to a cycle (resp.\ a set of cycles), we will assume that it is a canonical cycle (resp.\ a set of canonical cycles), unless explicitly mentioned otherwise. If $\C$ is a set of canonical cycles that is also a cycle cover of $G$, then we say that $\C$ is a \emph{canonical cycle cover} of $G$. 

We say that $u \in V_i$ is a \emph{boundary vertex} with respect to a set of cycles $\C$ if there exists a cycle $C \in \C$ such that there is an edge $uv \in E(C)$, where $v \in V_j$, $j \neq i$. We denote the set of boundary vertices in $V_i$ w.r.t.\ a set of cycles $\C$ by $B_i(\C)$. Finally, if $\C = \{C\}$, then we slightly abuse the notation and write $B_i(C)$ for $B_i(\{C\})$.

We have the following simple observation.

\begin{observation} \label{fv:obs:boundary-bound}
	For any $V_i \in \P$, and any set of cycles $\C$, $|B_i(\C)| \le 2 \min\{\Delta, |\C|\}$.
\end{observation}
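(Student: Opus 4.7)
The plan is to establish the two bounds $|B_i(\C)| \le 2\Delta$ and $|B_i(\C)| \le 2|\C|$ separately; the claim then follows by taking the minimum of the two. Recall that, by the convention stated just above the observation, the set $\C$ is canonical, so properties~1 and~2 of Claim~\ref{fv:cl:canonical} are available, and $V_i$ induces a clique in $G$ since $\P$ is a $1$-partition (Lemma~\ref{fv:lem:deberg-1part}).

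For $|B_i(\C)| \le 2\Delta$, I would count the \emph{crossing edges} of $V_i$ in $E(\C)$, meaning edges of some cycle in $\C$ with exactly one endpoint in $V_i$, and observe that every boundary vertex of $V_i$ is the $V_i$-side endpoint of at least one such edge, directly from the definition. Since distinct boundary vertices yield distinct $V_i$-side endpoints of crossing edges, it suffices to bound the total number of crossing edges. By property~2, for each neighbor $V_j \in N(V_i)$ at most one cycle of $\C$ contains any edge of $E(V_i, V_j)$, and that cycle contains at most two such edges. Summing over the at most $\Delta$ neighbors of $V_i$ gives at most $2\Delta$ crossing edges, hence $|B_i(\C)| \le 2\Delta$.

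For $|B_i(\C)| \le 2|\C|$, I would use vertex-disjointness of the cycles in $\C$ to write $|B_i(\C)| = \sum_{C \in \C} |B_i(\{C\})|$, reducing the task to showing $|B_i(\{C\})| \le 2$ for each canonical cycle $C \in \C$. The intuition is that, because $V_i$ is a clique, a canonical cycle should intersect $V_i$ in a single contiguous arc, whose at most two endpoints inside $V_i$ are precisely the boundary vertices contributed by $C$. I expect the main obstacle to be rigorously establishing this single-arc property, since property~2 only constrains crossings through any one fixed neighbor $V_j$ and not across different neighbors; the resolution should come from augmenting the canonical-ization of Claim~\ref{fv:cl:canonical} with a merging operation that, whenever a cycle $C$ has two arcs in $V_i$, uses clique edges within $V_i$ to fuse them into a single arc (possibly splitting off an extra cycle outside $V_i$, which keeps the bound valid as $|\C|$ would then grow correspondingly).
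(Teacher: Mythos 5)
Your $2\Delta$ argument coincides with the paper's proof: count crossing edges using property~2 of Claim~\ref{fv:cl:canonical} and the degree bound $\Delta$ in $G_\P$. The trouble, as you correctly suspect, is the $2|\C|$ half. Properties~1 and~2 of Claim~\ref{fv:cl:canonical} do \emph{not} give a single-arc property: a canonical cycle $C$ may cross $E(V_i,V_j)$ twice and also $E(V_i,V_k)$ twice for distinct neighbors $V_j,V_k$ of $V_i$, which is fully consistent with property~2 (it constrains one ordered pair at a time) yet yields $|B_i(\{C\})|=4$. Concretely, take $V_i=\{a,b,c,d\}$ a clique, $V_j=\{e,f\}$, $V_k=\{g,h\}$, cross-edges $ae,bf,cg,dh$, and the Hamiltonian cycle $a,e,f,b,c,g,h,d$; this is canonical, $|\C|=1$, $\Delta\geq 2$, but $|B_i(\C)|=4>2\min\{\Delta,|\C|\}=2$. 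So the $2|\C|$ bound does not follow from the stated canonicity, and in fact the observation as written fails.

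Your proposed patch (an additional merging pass that fuses multiple $V_i$-arcs of a single cycle into one) is the natural repair but has its own problems: the leftover piece after fusing need not close up into a cycle (its two endpoints sit in neighboring cliques and may not be adjacent), and if you compensate by letting $|\C|$ grow, you break the constraint $k'\leq k$ that Lemma~\ref{fv:lemma:canonical-boundary} needs to pass on to the reduction. For the record, the paper's own proof of this observation is silent on the $2|\C|$ half -- it proves $2\Delta$ and then says the observation ``follows.'' Where $|B_i(C)|\leq 2$ is later used inside the proof of Lemma~\ref{fv:lemma:canonical-boundary}, the argument really needs only the \emph{local} count of boundary vertices incident to $E(V_i,V_j)\cap E(C)$ for a fixed neighbor $V_j$, which property~2 does bound by $2$; but that is not the same quantity as $B_i(C)$. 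So you have correctly located a genuine gap, and the gap is in the paper, not merely in your write-up.
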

\begin{proof}
	Since $\C$ is a set of canonical cycles, for every $V_i, V_j$, there exists at most one cycle $C \in \C$ such that $E(C) \cap E(V_i, V_j) \neq \emptyset$, and furthermore for such a cycle $|E(C) \cap E(V_i, V_j)| \le 2$. Now, the observation follows from the fact that the degree of $V_i$ in $G_\P$ is at most $\Delta$, which implies that there are at most $\Delta$ distinct $V_j \in \P$ such that $E(V_i, V_j) \neq \emptyset$.
\end{proof}

Before we state the following structural lemma, we define some notation.

For any $V_i \in \P$, and $V_j \in N(V_i)$, let $V_i(j) \coloneqq \{v \in V_i: N(v) \cap V_j \neq \emptyset\}$. Now, consider the bipartite graph $H_{ij} = (V_i(j) \cup V_j(i), E(V_i, V_j))$, and let $M_{ij}$ be a maximal matching in $H_{ij}$, and let $M'_{ij} \subseteq M_{ij}$ be an arbitrary subset of matching of size $\min\LR{6\Delta, |M_{ij}|}$. Furthermore, let $L_i(j)$ (resp.\ $L_j(i)$) be the endpoints of matching edges of $M'_{ij}$ in $V_i$ (resp.\ $V_j$). 

We initialize subsets $U_{i}(j), U_j(i) \gets \emptyset$, and proceed as follows.

\begin{enumerate}
	\item $|M_{ij}| \ge 6\Delta$. Note that in this case $|L_i(j)| = |L_j(i)| = |M'_{ij}| = 6\Delta$. Let $U_i(j) \gets L_i(j)$, and $U_j(i) \gets L_j(i)$.
	\item $|M_{ij}| < 6\Delta$. We first let $U_i(j) \gets L_i(j)$, and $U_j(i) \gets L_j(i)$. Then, for each $u \in L_i(j)$, we add $\min\{|N(u) \cap V_j|, 6\Delta-1\}$ neighbors of $u$ to $U_j(i)$, and analogously for every $v \in L_j(i)$ we add $\min\{|N(v) \cap V_i|, 6\Delta-1\}$ neighbors of $v$ to $U_i(j)$.
\end{enumerate}

We define $U_i \coloneqq \bigcup_{V_j \in N(V_i)} U_i(j)$. Finally, we add $\min\LR{4\Delta + 3, |V_i \setminus U_i|}$ arbitrary vertices to $U_i$. This ensures that either $|U_i| \ge 4\Delta + 3$, or $U_i = V_i$. In either case, we have that $|U_i| = O(\Delta^3)$. We have the following structural lemma regarding any canonical cycle cover of $G$.

\begin{lemma} \label{fv:lemma:canonical-boundary}
	Let $\C = \LR{C_1, C_2, \ldots, C_k}$ be a canonical cycle cover of $G$. Then we can obtain another canonical cycle cover $\C' = \LR{C'_1, C'_2, \ldots, C'_{k'}}$ of $G$, where $k' \le k$, that satisfies the following property. For every $V_i \in \P$, the set of boundary vertices w.r.t. $\C$ is a subset of $U_i$.
\end{lemma}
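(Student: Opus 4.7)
My plan is to transform $\C$ into $\C'$ by a sequence of local modifications that preserve the cycle count and monotonically decrease the potential $\Phi(\C) = |\{uv \in E(\C) : uv \text{ is a boundary edge with } u \in V_i \setminus U_i \text{ or } v \in V_j \setminus U_j\}|$, the number of ``bad'' boundary edges. At each step I pick a bad boundary edge $uv$ (say, with $u \in V_i \setminus U_i$ and $v \in V_j$) and replace it with a ``good'' edge $u'v' \in E(G)$ having $u' \in U_i(j)$ and $v' \in U_j(i)$. After $\Phi$ reaches $0$, a final canonicalization via Claim~\ref{fv:cl:canonical} yields $\C'$ canonical with $|\C'| \le |\C|$.

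The existence of a suitable $u'v'$ uses the two-case construction of $U_i(j), U_j(i)$. In Case 1 ($|M_{ij}| \ge 6\Delta$), we have $|L_i(j)| = |L_j(i)| = 6\Delta$ with $M'_{ij}$ a perfect matching between them; by Observation~\ref{fv:obs:boundary-bound}, $|B_i(\C)|, |B_j(\C)| \le 2\Delta$, so at most $4\Delta$ matching edges of $M'_{ij}$ touch a boundary vertex on either side, leaving at least $2\Delta \ge 2$ matching edges whose both endpoints are non-boundary in $\C$; any such edge serves as $u'v'$. In Case 2 ($|M_{ij}| < 6\Delta$), maximality of $M_{ij}$ forces $N(u) \cap V_j \subseteq L_j(i) \subseteq U_j(i)$ for every $u \in V_i(j) \setminus L_i(j)$, so the current endpoint $v$ is already in $U_j(i)$; I then take $u'$ to be the matching partner of $v$ in $L_i(j) \subseteq U_i(j)$ if $v \in L_j(i)$, and otherwise a matching endpoint $u^\ast \in L_i(j)$ that pulled $v$ into $U_j(i)$ during the construction.

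The swap itself is effected by the simultaneous transposition $\sigma = (u\,u')(v\,v')$ applied vertex-wise to $E(\C)$. Because $V_i, V_j$ are cliques by Lemma~\ref{fv:lem:deberg-1part}, every relabelled internal edge remains in $E(G)$; the relabelled external edge $\sigma(uv) = u'v'$ is in $E(G)$ by construction. Since a vertex transposition is simply a relabelling, $\sigma(E(\C))$ is again a vertex-disjoint union of cycles with the same cardinality as $\C$, so the bad edge $uv$ is replaced by the good edge $u'v'$ without creating any new bad boundary edge, and $\Phi$ strictly decreases. The main obstacle is the case where $u$ has a second boundary edge $uw$ to some $V_k \neq V_j$: the transposition then also relabels $uw$ to $u'w$, which requires $u'w \in E(G)$ and is not automatic. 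I would handle this by restricting the candidates for $u'$ to $U_i(j) \cap N(w)$ --- the slack in Case 1 (the constant $6\Delta$ is generous enough to forbid up to $\Delta$ additional non-boundary candidates per external neighbor of $u$) and the extra neighbors folded into $U_j(i)$ in Case 2 provide the room to do so --- or, when that fails, by first applying a preparatory swap that eliminates the $(V_i, V_k)$ boundary edge of $u$, reducing to the single-external-edge subcase before moving $u$ itself.
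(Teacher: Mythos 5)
The swap move $\sigma=(u\,u')(v\,v')$ that you propose as the basic step fails exactly in the case you flag, and your fallback does not close the gap. When $u$ has a second boundary edge $uw$ with $w \in V_k$, $k \neq j$, the relabelled edge $u'w$ must lie in $E(G)$. You propose choosing $u'$ from $U_i(j) \cap N(w)$, but $U_i(j)$ is built entirely from the bipartite matching $M'_{ij}$ between $V_i(j)$ and $V_j(i)$: nothing in the construction gives any vertex of $U_i(j)$ a neighbour in $V_k$, let alone the specific vertex $w$, so the set $U_i(j)\cap N(w)$ can perfectly well be empty. The slack in the constant $6\Delta$ guarantees an abundance of \emph{non-boundary} vertices in $U_i(j)$, but says nothing about their adjacencies into $V_k$. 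The alternative fallback (a preparatory swap that first eliminates the $(V_i,V_k)$ boundary edge of $u$) can itself require moving a vertex with yet another boundary edge, and you give no argument that this recursion terminates or respects $\Phi$. There is also a symmetric gap you do not address: in your Case~2 you never require $u'$ to be non-boundary, so after the swap $u$ inherits $u'$'s boundary edges and the same adjacency problem appears with $u$ and $u'$ interchanged.

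The paper sidesteps the whole difficulty by using an \emph{insertion} rather than a \emph{swap} as the local primitive. Given a bad edge $u'_iu'_j\in E(C)$, it picks non-boundary (unmarked) matched vertices $u_i\in U_i(j)$, $u_j\in U_j(i)$, removes $u_i$ (and $u_j$) from its own cycle by short-cutting the path $a_i,u_i,b_i$ to the edge $a_ib_i$, and inserts $u_i,u_j$ between $u'_i$ and $u'_j$ on $C$. The only new edges created are $u'_iu_i$, $u_iu_j$, $u_ju'_j$, $a_ib_i$, $a_jb_j$; these lie in $E(G)$ automatically because $V_i,V_j$ are cliques and $u_iu_j$ is a matching edge. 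Crucially, every other edge incident to $u'_i$ and $u'_j$ is left untouched, so additional boundary edges of $u'_i$ or $u'_j$ pose no problem. Your potential $\Phi$ still strictly decreases under this move, canonicity is preserved (only one boundary edge between $V_i$ and $V_j$ is replaced by another, and the short-cut happens entirely inside a clique), and the cycle count never increases, so your overall framework would work with the insertion move in place of the transposition. As written, however, the vertex transposition is the wrong primitive and the proof has a genuine hole.
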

\begin{proof}
	For every $V_i \in \P$. Initially, all vertices in $U_i$ are ``unmarked''.
	First, we mark all the boundary vertices w.r.t.\ the set of cycles $\C$. Note that by Observation \ref{fv:obs:boundary-bound}, the number of marked vertices is at most $2\Delta$ at this point.
	
	Now we will iterate over pairs $V_i, V_j \in \P$ in an arbitrary order such that $V_i$ and $V_j$ are neighbors in $G_\P$. Note that, since $\C$ is a set of canonical cycle, there is at most one cycle $C \in \C$ such that $0 < |E(C) \cap E(V_i, V_j)| \le 2$.
	Note that $1 \le |B_i(C)|, |B_j(C)| \le 2$ using Observation \ref{fv:obs:boundary-bound}. We will show how to modify the cycles in $\C$ to obtain another set of cycles $\C'$ such that the following properties are satisfied.
	\begin{enumerate}
		\item $\C'$ is a set of vertex disjoint canonical cycles with $V(\C') = V(G)$, $|\C'| \le |\C|$, and
		\item If there exists a cycle $C' \in \C'$ such that $E(C') \cap E(V_i, V_j) \neq \emptyset$, then $B_i(C) \subseteq U_i$, and $B_j(C) \subseteq U_j$.
	\end{enumerate}
	Furthermore, during the rerouting process, we will mark at most $2$ new vertices of $U_i$ (resp.\ $U_j$). Finally, at the end of the iteration we will set $\C \gets \C'$, and proceed to the next iteration. 
	
	Suppose at the beginning of the iteration there is no $C \in \C$ such that $E(C) \cap E(V_i, V_j) \neq \emptyset$, or if for the cycle $C \in \C$ with $E(C) \cap E(V_i, V_j) \neq \emptyset$, we already have that $B_i(C) \subseteq U_i$ and $B_j(C) \subseteq U_j$, then the set $\C$ satisfies the required properties. Therefore in this case we proceed to the next iteration without modifying the set $\C$. Now let us discuss the interesting case when we have to modify $\C$.
	
	Without loss of generality, assume that $B_i(C) \not\subseteq U_i$, which also implies that $B_i(C) \not\subseteq U_i(j)$. We consider two cases.
	
	\textbf{Case 1.} $6\Delta = |M'_{ij}| < |M_{ij}|$. Then, we have that $|U_i(j)| = |U_j(i)| = |M'_{ij}| = 6\Delta$. Note that initially at most $2\Delta$ vertices in $U_i$ (resp.\ $U_j$) were marked, and in each of the previous iterations we have marked at most $2$ new vertices. Furthermore, before the iteration corresponding to $\{V_i, V_j\}$ have been at most $2(\Delta-1)$ iterations involving either $V_i$ or $V_j$. Therefore, at the beginning of this iteration, at most $2\Delta + 2(\Delta-1) = 4\Delta-2$ vertices in $U_i$ (resp.\ $U_j$) are marked. However, since $|M'_{ij}| = 6\Delta$, there must be at least two edges $e_1 = u_iu_j, e_2 = v_iv_j \in M'_{ij}$, such that $u_i, v_i \in U_i$ are unmarked, and $u_j, v_j \in U_j$ are unmarked.
	
	Consider $u_i$. Since $u_i \in U_i$ was unmarked, it is not a boundary vertex w.r.t.\ any cycle in $\C$. Therefore, the cycle $C(v_i) \in \C$ incident on $u_i$ is of the form $(\ldots, a_i, u_i, b_i, \ldots)$, where $a_i, b_i \in V_i$. Since $V_i$ is a clique, $a_ib_i \in E(G)$, therefore we can short-cut $C(v_i)$ to obtain $C'(v_i)$ which is of the form $(\ldots, a_i, b_i \ldots)$ \footnote{If $C(v_i)$ contains only three vertices $\{a_i, u_i, b_i\}$ then we cannot do such a modification. However, in this case, when can can reroute $C$ as $(\ldots, u_i, a_i, b_i, v'_i \ldots)$. Note that the number of cycles decreases in this case. Other corner cases can be handled in a similar manner, which we do not discuss here.}. Note that it is possible to have $C(v_i) = C$, in which case we perform simultaneous replacements to $C$ at two different places. Analogous claims also hold for $v_i$, as well $u_j, v_j$. See figure \ref{fv:fig:cycle-bypass} for an illustration.
	
	\textbf{Case 2.} $|M'_{ij}| = |M_{ij}| < 6\Delta$. Note that in this case, $|L_i(j)| = |L_j(i)| = |M_{ij}|$. Let $u \in V_i$ be a vertex such that $u \in B_i(C) \setminus U_i(j)$, and let $v \in N(u) \cap V_j$ such that $uv \in E(C) \cap E(V_i, V_j)$. Note that $v$ must be in $U_j(i)$ -- otherwise $M_{ij} \cup \{uv\}$ is a matching of size at most $6\Delta$, contradicting the maximality of $M_{ij}$. Now, we add $\min\{6\Delta, |N(v) \cap V_i|\}$ neighbors of $v$ from $V_i \setminus L_i(j)$ to $U_i(j)$. Since $v \not\in U_i(j)$, it must be the case that we have added exactly $6\Delta$ neighbors of $v$ to $U_i(j)$. Using an argument similar to the previous paragraph, there must exist an unmarked neighbor $u_i \in N(u) \cap V_i$. We replace the edge $uv \in E(C)$ with the edge $uu_i$ to obtain a new cycle $C'$. Note that we also have to perform replacement in the cycle $C(u_i)$ as in the previous paragraph. The final set of cycles $\C'$ obtained satisfies the desired properties. 
	\begin{figure}
		\centering
		\includegraphics[scale=0.8]{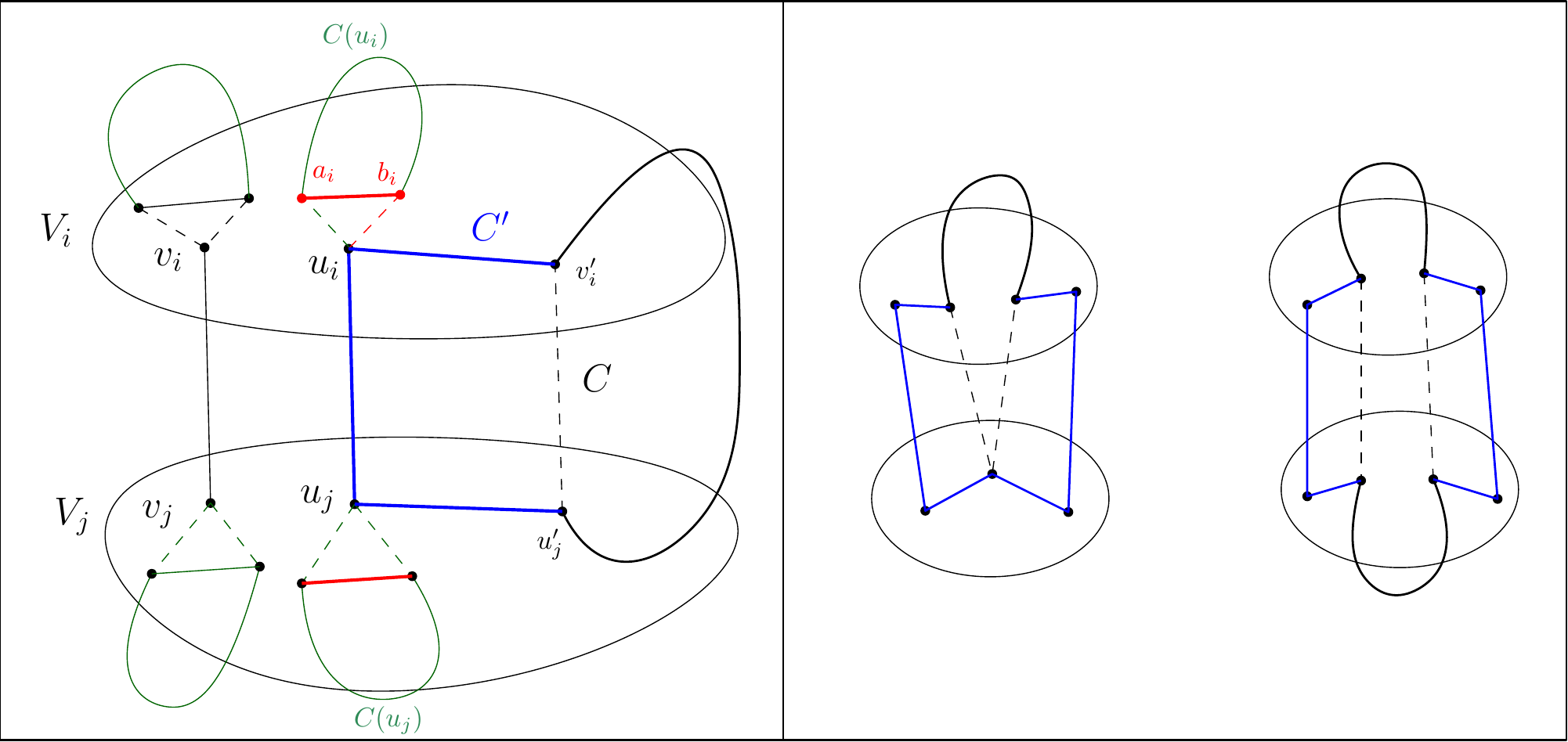}
		\caption{An illustration for modifying cycles. 
		\\Left: Suppose $u'_i \not\in U_i$ and $u'_j \not\in U_j$. Then, we reroute $C$ using the thick blue edges to obtain another cycle $C'$. Furthermore, we also have to reroute cycles $C(u_i)$ and $C(u_j)$ using the thick red edges. Note that $v_i, v_j$ can also be used in a similar manner to reroute another ``problematic'' edge in $E(C) \cap E(V_i, V_j)$ (see the figure on the right).
		\\Right: Some other cases for rerouting. Dashed edges in the cycle $C$ are replaced with the blue edges to obtain new cycle $C'$. Rerouting of the cycles $C(w)$ for the new vertices $w$ incident on the blue edges is not shown.}
		\label{fv:fig:cycle-bypass}
	\end{figure}
	
	Therefore, instead of the boundary vertices $B_i(C)$, we can reroute the cycle $C$ such that it uses the desired number of boundary vertices from the set $\{u_i, v_i, u_j, v_j\}$. Let $C'$ denote the cycle obtained after the modification. We also short-cut cycle(s) $C(w)$ for $w \in B_i(C') \cup B_j(C')$ as discussed in the previous paragraph to obtain $C'(w)$. Let $\C'$ be the set of cycles obtained by replacing in $\C$ any cycle $C_\ell \in \C$ involved in a modification, by its modified version $C'_\ell$. Finally, we mark the vertices in $B_i(C')$ and in $B_j(C')$. Observe that $\C'$ satisfies the properties claimed earlier, and we mark at most two vertices in $U_i$ (resp.\ $U_j$). This completes the proof by induction.
\end{proof}

\else
First, we need the following Lemma from \cite{BergBKMZ20}.

\begin{lemma} \label{fv:lem:deberg-1part}
	Let $d \ge 2$ be a constant. Then, there exists a constant $\Delta$, such that for any intersection graph $G = (V, E)$ of $n$ similarly-sized fat objects in $\real^d$ along with the geometric representation of the objects, a $1$-partition $\P$ for which $G_\P$ has maximum degree $\Delta$ can be computed in polynomial time.
\end{lemma}

Next, we prove the following key theorem. We use ideas similar to \cite{ChaplickFGK019,ito2010tractability} in the proof of this result. The details are given in the appendix.

\begin{theorem}
	Given $(G, d, \P, \Gp)$, such that $\P$ is a clique cover of $G$, and the maximum degree of $\Gp$ is $\Delta = \Oh(1)$, we can compute a graph $H$ in polynomial time with the property that $G$ has a cycle cover of size $k$ iff $H$ has a cycle cover of size $k$. 
	\\Furthermore, the (unweighted) treedepth of $H$ is $\ndd$, and it can be computed in $\tndd$ time and using polynomial space.
\end{theorem}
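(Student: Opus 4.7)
The plan is to construct $H$ by replacing each clique $V_i \in \P$ with a constant-size gadget, exploiting the structural facts already in hand (Claim \ref{fv:cl:canonical} and Lemma \ref{fv:lemma:canonical-boundary}): in any canonical cycle cover, every $V_i$ can be assumed to have all of its boundary vertices in a fixed subset $U_i \subseteq V_i$ with $|U_i| = \Oh(\Delta^3) = \Oh(1)$.

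Concretely, for each $V_i \in \P$ I would define the gadget $H_i = U_i \cup \{z_i\}$ when $V_i \setminus U_i \neq \emptyset$ and $H_i = U_i$ otherwise, where $z_i$ is a fresh ``dummy'' vertex standing in for the internal vertices $V_i \setminus U_i$. Then set $V(H) = \bigcup_{V_i \in \P} H_i$, make each $H_i$ a clique in $H$, and for every edge $V_iV_j \in E(\Gp)$ copy into $H$ all edges of $G$ that have one endpoint in $U_i$ and the other in $U_j$. Since $|H_i| = \Oh(1)$, the graph $H$ has $\Oh(|\P|) = \Oh(n)$ vertices and is computable in polynomial time from the geometric representation.

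For the equivalence of cycle covers, the plan in the forward direction is to take any cycle cover of $G$ of size at most $k$, make it canonical via Claim \ref{fv:cl:canonical} without increasing its size, then apply Lemma \ref{fv:lemma:canonical-boundary} to ensure all boundary vertices lie in the $U_i$'s. Inside each $V_i$ I would further redistribute internal vertices along the sub-paths through $V_i$ (possible since $V_i$ is a clique in $G$) so that at most one sub-path carries any internal vertex, and collapse that sub-path to the single vertex $z_i$ in $H$. In the reverse direction, given a cycle cover of $H$ of size $k$, whenever some cycle contains $z_i$ between two $U_i$-neighbors $u, v$, I would replace the fragment $(u, z_i, v)$ by a Hamiltonian path of $V_i \setminus U_i$ from $u$ to $v$ (again possible since $V_i$ is a clique in $G$); the remaining $U_i$-vertices are already covered by the cycle cover of $H$, yielding a cycle cover of $G$ with the same number of cycles.

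Finally, for the treedepth bound I would apply Theorem \ref{fv:thm:treedepth} to compute a weighted treedepth decomposition $(F,\varphi)$ of $\Gp$ of weighted depth $\ndd$ in $\tndd$ time and polynomial space, and then expand it into a treedepth decomposition of $H$ by replacing each node $u_i \in V(F)$ with a chain of $|H_i|$ nodes, one per vertex of $H_i$, preserving ancestor-descendant relations between chains of distinct $F$-nodes. Placing the $\Oh(1)$ vertices of each chain in ancestor-descendant order validates all edges inside $H_i$ (a clique) and all edges between $U_i$ and $U_j$ (whenever $u_i$ is an ancestor of $u_j$ in $F$). Since $\omega(\varphi(u_i)) = \log(1+|V_i|) \ge 1$ for every $V_i \in \P$, the number of original nodes on any root-leaf path in $F$ is at most $\ndd$, so the depth of the expanded decomposition is bounded by $\Oh(\Delta^3) \cdot \ndd = \ndd$. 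The main conceptual obstacle is Lemma \ref{fv:lemma:canonical-boundary} itself, where the structural rerouting work of the reduction is concentrated; once it is assumed, the construction and analysis above are straightforward.
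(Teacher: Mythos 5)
Your proposal is correct and leans on the same key structural tools (canonical cycle covers via Claim~\ref{fv:cl:canonical}, boundary vertices confined to the precomputed sets $U_i$ via Lemma~\ref{fv:lemma:canonical-boundary}, and expansion of the weighted treedepth decomposition of $\Gp$ from Theorem~\ref{fv:thm:treedepth}), but you build $H$ differently. The paper takes the \emph{induced subgraph} $H = G[U]$ with $U = \bigcup_i U_i$, so the parts of $H$ are just $U_i$; you instead glue together clique gadgets $H_i = U_i \cup \{z_i\}$ with a fresh dummy vertex $z_i$ standing in for $V_i \setminus U_i$. Both constructions work and yield the same asymptotic treedepth bound. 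Your dummy-vertex variant makes the reverse direction slightly cleaner: you simply re-expand $z_i$ into a path through $V_i \setminus U_i$, whereas the paper's induced-subgraph version must argue separately that some edge $uv$ of a cycle lies inside $H[U_i]$ (using $|U_i| \ge 4\Delta+3 > 2\Delta$) before $V_i \setminus U_i$ can be inserted.

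One imprecision worth flagging in your forward direction: ``redistribute internal vertices $\ldots$ so that at most one sub-path carries any internal vertex, and collapse that sub-path to the single vertex $z_i$'' is ambiguous. You cannot collapse an entire sub-path to $z_i$, since its boundary endpoints lie in $U_i$ and the sub-path may also carry non-boundary vertices of $U_i$ that must remain covered; only a contiguous $V_i \setminus U_i$ block of the interior can be replaced by $z_i$. The cleaner statement is: for each maximal sub-path of a cycle inside $V_i$ (and the at-most-one internal cycle), keep its boundary endpoints and replace its interior with an arbitrary block drawn from a partition of $H_i \setminus B_i(\C)$; this is always feasible because $H_i$ is a clique containing all of $B_i(\C)$, and $|H_i \setminus B_i(\C)| \ge 3$ whenever $U_i \subsetneq V_i$. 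With that small repair, your argument is sound and slightly more modular than the paper's.
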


Finally, we appeal to the following result from \cite{NederlofPSW20}.

\begin{theorem}[\cite{NederlofPSW20}]
	Given a graph $H$, and its treedepth decomposition of (unweighted) treedepth $d$, there exists a $2^{O(d)} \cdot n^{O(1)}$ time, polynomial space randomized algorithm, to solve \textsc{Cycle Cover}.
\end{theorem}

Note that the algorithm in Nederlof et al.\ \cite{NederlofPSW20} is actually for the \textsc{Partial Cycle Cover} problem, which asks to determine whether there exists a set of at most $k$ vertex disjoint cycles that span \emph{exactly} $\ell$ vertices. Note that the \textsc{Cycle Cover} problem corresponds to the special case where $\ell = n$. Finally, given an instance of \textsc{Hamiltonian Path}, where the task is to find a path spanning the set of vertices, we can reduce it to \textsc{Hamiltonian Cycle} by adding a universal vertex. Thus, we have the following result. 

\begin{theorem}
	There exists a $\tndd$ time, polynomial space, randomized algorithm, to solve \textsc{Cycle Cover} in the intersection graphs of similarly sized fat objects in $\real^d$. In particular, this implies analogous results for \textsc{Hamiltonian Cycle} and \textsc{Hamiltonian Path} in the intersection graphs of similarly sized fat objects in $\real^d$.
\end{theorem}

\fi

\iflong
Let $U = \bigcup_{V_i \in \P} U_i$, and let $H = G[U]$. Furthermore, let $\P' = \{U_i: V_i \in \P\}$ be a $1$-partition of $H$, and let $H_{\P'}$ be the graph isomorphic to $\Gp$, where a vertex $V_i$ is replaced by corresponding $U_i$.

\begin{observation} \label{fv:obs:canonical-sets}
	For any $V_i \in \P$, we have that $U_i \subseteq V_i$ such that $|U_i| = O(\Delta^3)$. Furthermore, the sets $\LR{U_i}_{V_i \in \P}$, and $(H, d, \P', H_{\P'})$ as defined above, can be computed in polynomial time.
\end{observation}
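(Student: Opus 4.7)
The plan is to verify the observation directly from the construction of the sets $U_i$ described just before the statement. The claim $U_i \subseteq V_i$ is immediate: each $U_i(j)$ is defined as a subset of either $L_i(j) \subseteq V_i$ or of $N(v) \cap V_i$ for some $v \in L_j(i)$, all of which lie in $V_i$; the final at most $4\Delta+3$ ``padding'' vertices are also chosen from $V_i \setminus U_i$. Hence $U_i \subseteq V_i$ by construction.

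For the cardinality bound, I would bound $|U_i(j)|$ in each of the two cases separately. In Case 1 we have $|U_i(j)| = |L_i(j)| = |M'_{ij}| = 6\Delta$. In Case 2 we start with $|L_i(j)| \le 6\Delta$ vertices and for every $v \in L_j(i)$ we add at most $6\Delta - 1$ further neighbors of $v$ from $V_i$, so $|U_i(j)| \le 6\Delta + 6\Delta \cdot (6\Delta - 1) = 36\Delta^2$. Then, since $|N(V_i)| \le \Delta$ in $G_\P$ (this is the key bound that allows the summation), we get $\bigl|\bigcup_{V_j \in N(V_i)} U_i(j)\bigr| \le 36 \Delta^3$, and the final padding contributes at most an additive $4\Delta + 3$. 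Therefore $|U_i| = O(\Delta^3)$, as required.

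For the running time, every step of the construction is clearly polynomial: the sets $V_i(j)$ and the bipartite graph $H_{ij}$ are read off from $G$ in polynomial time; a maximal matching $M_{ij}$ in $H_{ij}$ is computed by the standard greedy procedure; selecting the subset $M'_{ij}$, enumerating $L_i(j)$, $L_j(i)$, and picking up to $6\Delta - 1$ neighbors per endpoint are all performed in polynomial time. Since $|N(V_i)| \le \Delta = O(1)$ and $|\P| \le n$, there are at most $O(n)$ pairs $(V_i, V_j)$ to process, and the padding step is trivially polynomial. Finally, $H = G[U]$ with $U = \bigcup_i U_i$ is obtained by an induced-subgraph computation, and $\P' = \{U_i : V_i \in \P\}$ together with $H_{\P'}$ is obtained by restricting $\P$ and $\Gp$ in the obvious way; both take polynomial time.

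I do not foresee any real obstacle here: the observation is essentially a bookkeeping statement that packages the construction together with its size and running-time guarantees, and will be used in the subsequent application of Lemma~\ref{fv:lemma:canonical-boundary} to ``compress'' $G$ to the smaller graph $H$ on $O(\Delta^3 \cdot |\P|)$ vertices while preserving canonical cycle covers. The only care point is to make the case split for $|U_i(j)|$ explicit so that the $O(\Delta^3)$ bound is transparent.
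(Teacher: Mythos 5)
Your proof is correct and follows the natural reading of the construction; the paper itself gives no explicit argument for this observation and relies on exactly the bookkeeping you spell out. The containment $U_i \subseteq V_i$ is indeed by construction, the case split on $|M_{ij}|$ gives $|U_i(j)| \le 36\Delta^2$ per neighbor, and the $|N(V_i)| \le \Delta$ bound together with the $O(\Delta)$ padding gives $|U_i| = O(\Delta^3)$; the polynomial-time claim is routine.
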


\begin{theorem}
	$G$ has a canonical cycle cover of size $k$ iff $H$ has a canonical cycle cover of size $k$.
\end{theorem}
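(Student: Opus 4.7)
The plan is to prove the two directions separately, leveraging the fact that both $V_i$ and $U_i$ are cliques (so path orderings within them are freely rearrangeable) together with Lemma \ref{fv:lemma:canonical-boundary} and the slack $|U_i| \ge 4\Delta + 3$ built into the construction of $U_i$.

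For the forward direction, starting from a canonical cycle cover $\C$ of $G$ of size at most $k$, I will first invoke Lemma \ref{fv:lemma:canonical-boundary} to obtain a canonical cover $\C'$ of size at most $k$ whose boundary vertices all lie in $U$. Inside each $V_i$, the cycles of $\C'$ induce some number of paths with endpoints in $U_i$, plus at most one internal cycle contained entirely in $V_i$ (Claim \ref{fv:cl:canonical}). I will then rewrite $\C'$ inside each $V_i$ by keeping the external boundary structure intact and re-routing all internal path-segments through $U_i$ only, bypassing every vertex of $V_i \setminus U_i$ via the clique edge between its two cycle-neighbors in $V_i$. When a through-path exists in $V_i$, I extend it inside $U_i$ to cover every remaining $U_i$ vertex and discard the internal cycle (if any); otherwise $V_i$ is covered by a single internal cycle in $\C'$, which I replace by any cycle on $U_i$ (valid since $|U_i| \ge 3$). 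The resulting cover $\C''$ of $H$ is canonical by construction with $|\C''| \le |\C'| \le k$.

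For the backward direction, given a canonical cycle cover $\mathcal{D}$ of $H$ of size at most $k$, I will insert $V_i \setminus U_i$ into $\mathcal{D}$ for every $V_i$ with $U_i \subsetneq V_i$. Applying Observation \ref{fv:obs:boundary-bound} to $\mathcal{D}$ in the graph $H_{\P'}$ (isomorphic to $\Gp$, and hence with maximum degree $\Delta$), at most $2\Delta$ vertices of $U_i$ are boundary vertices, so at least $2\Delta + 3 \ge 3$ non-boundary vertices exist in $U_i$. Picking a non-boundary vertex $u \in U_i$ and one of its cycle-neighbors $u' \in U_i$ (both cycle-neighbors of $u$ lie in $U_i$ by non-boundary-ness), I replace the edge $uu'$ by the path $u, v_1, \ldots, v_s, u'$ that absorbs $V_i \setminus U_i = \{v_1, \ldots, v_s\}$. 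Every inserted edge lies inside the clique $V_i$, no new cycle is formed, and no new cross-partition edge appears, so canonicity is preserved and the resulting $G$-cover has size exactly $|\mathcal{D}| \le k$.

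The main obstacle is the forward direction, specifically ensuring that the rewrite inside each $V_i$ never inflates the cycle count, in particular in the degenerate case where the internal $V_i$-cycle of $\C'$ would collapse to fewer than three vertices after removing $V_i \setminus U_i$. The key observation is that canonicity caps the internal-cycle budget per $V_i$ at one, and this budget can always be traded either for absorption into an existing through-path in $U_i$ or for a freshly chosen internal cycle on $U_i$, so the total cycle count is non-increasing; the backward direction is technically easier once the inequality $|U_i| \ge 4\Delta + 3 > 2\Delta$ is combined with the boundary bound, which is precisely the reason the construction pads $U_i$ with arbitrary extra vertices.
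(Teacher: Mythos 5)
Your proof is correct and takes essentially the same route as the paper's: invoke Lemma~\ref{fv:lemma:canonical-boundary} to push all boundary vertices into $U_i$, reroute inside each clique $V_i$ so that only $U_i$-vertices are used (exploiting the slack $|U_i|\ge 4\Delta+3$ against the $\le 2\Delta$ boundary bound from Observation~\ref{fv:obs:boundary-bound}), and in the reverse direction insert $V_i\setminus U_i$ onto a clique edge of $\mathcal D$ that lies entirely in $U_i$, which exists for the same counting reason. The only cosmetic difference is your ``bypass then extend one through-path'' phrasing versus the paper's explicit mark-and-shrink bookkeeping; they implement the same rerouting and your handling of the at-most-one internal $V_i$-cycle (discard-and-absorb when a through-path exists, otherwise replace it by a cycle on $U_i$) lands in the same place.
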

\begin{proof}
	First we prove the forward direction. Let $\C = \LR{C_1, C_2, \ldots, C_{k}}$ with $V(\C) = V(G)$ be a canonical cycle cover of $G$, such that for every $V_i \in \P$, the set of boundary vertices w.r.t. $\C$ is a subset of $U_i$, as guaranteed by Lemma \ref{fv:lemma:canonical-boundary}. Now consider any $V_i \in \P$ with $U_i \subsetneq V_i$, and a subset $\C_i \coloneqq \{C_1, C_2, \ldots, C_t\} \subseteq \C$, that contain at least one vertex vertex from $V_i$. Note that $|\C_i| \le \Delta$, and $|B_i(\C_i)| \le 2\Delta$. Furthermore, since $U_i \subsetneq V_i$, we have that $|U_i| \ge 4\Delta+3$. Initially, all vertices in $U_i$ are unmarked. Then, we mark all vertices in $B_i(\C_i) \subseteq U_i$.
	
	If there is a cycle $C \in \C_i$ such that $V(C) \subseteq V_i$ (Type 1), then we obtain another cycle $C'$ that spans three unmarked vertices from $U_i$. We mark these vertices. Note that by Claim \ref{fv:cl:canonical}, there exists at most one cycle of Type 1, since we assume $\C$ is a canonical cycle cover of $G$.
	
	Now consider a cycle $C \in \C_i$ such that $V(C) \subsetneq V_i$ (Type 2), and let us focus on a minimal sub-path $\pi$ of $C$ of the form $(\ldots b_1, q_1, q_2, \ldots, q_\ell, b_2, \ldots)$ that lies completely inside $V_i$, such that $b_1, b_2 \in B_i(C)$, and $\{q_1, q_2, \ldots, q_\ell\} \cap B_i(C) = \emptyset$. That is, the cycle $C$ enters $V_i$ via a boundary vertex $b_1$, visits internal vertices $q_1, \ldots, q_\ell$, and exits $V_i$ via another boundary vertex $b_2$. We obtain another cycle $C'$ by replacing the sub-path $\pi$ with $(\ldots, b_1, s_1, s_2, b_2,\ldots)$, where $s_1, s_2$ are arbitrary unmarked vertices in $U_i$. We mark these vertices. We perform this rerouting for all cycles $C \in \C_i$, and all minimal sub-paths of the form described. Note that $|B_i(\C_i)| \le 2\Delta$ (Observation \ref{fv:obs:boundary-bound}), each boundary vertex can be involved in at most one rerouting, and we use exactly two unmarked vertices per rerouting. Therefore, at most $4\Delta$ unmarked vertices are sufficient for Type 2. Since $|U_i| \ge 4\Delta + 3$, it is easy to see that we have enough unmarked vertices for all reroutings of Type 1 and 2.
	 
	Finally, if there are some unmarked vertices in $U_i$ at the end, we arbitrarily select one edge $uv \in E(C)$ for some rerouted cycle $C \in \C_i$, where $u, v \in U_i$. We make a final rerouting of this cycle by adding all unmarked vertices in $U_i$ between $u$ and $v$ in an arbitrary order. Note that after making this replacement for every $V_i \in \P$, we obtain a canonical cycle cover for $H$ of the same size.
	
	Now let us look at the reverse direction. Let $\C = \LR{C_1, C_2, \ldots, C_{k}}$ be a canonical cycle cover of $H$. Consider a $W_i \in \P'$ such that $W_i \subsetneq V_i$, and let $C \in \C$ be a cycle such that $E(C)$ uses at least one edge $uv$ from $H[W_i]$, i.e., $u, v \in W_i$. Note that such a cycle $C$, and the corresponding edge $uv$ must exist, since $U_i \subsetneq V_i$, which implies that $|U_i| \ge 4\Delta+3$, whereas the number of boundary vertices in $U_i$ is at most $2\Delta$ by Observation \ref{fv:obs:boundary-bound}. We simply insert all vertices in $V_i \setminus U_i$ in an arbitrary order between $u$ and $v$ in the cycle $C$. After performing these replacements for every $U_i \in \P'$, it is easy to see that we obtain a canonical cycle cover of $G$ of the same size.
\end{proof}

\subsection{Algorithm}

First, we prove the following key lemma. 

\begin{lemma} \label{fv:lem:unweighted-treedepth}
	The (unweighted) treedepth of $H$ is $\ndd$, and a treedepth decomposition $(F, \varphi)$ of treedepth $\ndd$ can be computed in time $\tndd$ and polynomial space.
\end{lemma}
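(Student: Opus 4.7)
The plan is to derive a treedepth decomposition of $H$ from a weighted treedepth decomposition of $\Gp$ that we already know how to compute. First I would invoke Theorem~\ref{fv:thm:treedepth} to obtain a weighted treedepth decomposition $(F_w, \varphi_w)$ of $\Gp$ of weighted depth $\ndd$, in time $\tndd$ and polynomial space. Recall that the weight assigned to $V_i \in \P$ is $\omega(V_i) = \log(1+|V_i|)$, so along any root-to-leaf path $\pi_w$ in $F_w$ the sum of weights is $\ndd$.

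Next, I would construct $(F, \varphi)$ by ``expanding'' each node of $F_w$ into a short path: for every $u_i \in V(F_w)$ with $\varphi_w(u_i) = V_i$, replace $u_i$ by a path $P_i$ of $|U_i|$ new nodes, and extend $\varphi$ so that the nodes of $P_i$ are bijectively mapped to the vertices of $U_i$ in an arbitrary order. Glue the resulting paths according to the parent--child relation of $F_w$: attach the top of $P_i$ as a child of the bottom of $P_{\mathrm{parent}(u_i)}$. This expansion step is clearly carried out in polynomial time and polynomial additional space.

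To verify that $(F, \varphi)$ is a valid treedepth decomposition of $H$, I would check the ancestor property for every edge $uv \in E(H) \subseteq E(G)$. If $u, v \in U_i$ for the same $i$, both lie on the path $P_i$, so one is an ancestor of the other. Otherwise $u \in U_i$ and $v \in U_j$ with $i \neq j$; since $uv \in E(G)$, we have $V_iV_j \in E(\Gp)$, so one of $u_i, u_j$ is an ancestor of the other in $F_w$ (say $u_i$), and hence every node of $P_i$ is an ancestor of every node of $P_j$ in $F$.

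Finally, for the depth bound, on any root-to-leaf path $\pi$ in $F$ the number of nodes equals $\sum_{u_i \in \pi_w} |U_i|$, where $\pi_w$ is the corresponding root-to-leaf path in $F_w$. By Observation~\ref{fv:obs:canonical-sets} we have $|U_i| = O(\Delta^3) = O(1)$. Moreover, since every $V_i \in \P$ is non-empty, $\omega(V_i) = \log(1+|V_i|) \ge \log 2$, so the number of nodes on $\pi_w$ is at most a constant factor times the weighted length of $\pi_w$, i.e., $\ndd$. Multiplying by $O(\Delta^3)$ gives the bound $\ndd$ on the length of $\pi$. Time and space are dominated by the initial call to Theorem~\ref{fv:thm:treedepth}. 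I do not foresee a serious obstacle; the only subtlety is the constant lower bound $\omega(V_i) \ge \log 2$, which is exactly the statement that partition classes are non-empty.
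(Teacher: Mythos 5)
Your proof is correct and takes essentially the same approach as the paper: expand each node of the weighted treedepth decomposition of $\Gp$ into a path of $|U_i| = \Oh(\Delta^3)$ nodes, and use the lower bound $\omega(V_i) = \log(1+|V_i|) \ge 1$ to bound the number of nodes on any root-to-leaf path of the contracted tree by its weighted length. Your explicit verification of the ancestor property for edges of $H$ is a detail the paper omits, but the construction and depth bound are identical.
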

\begin{proof}
	Given $(G, d, \P, \Gp)$, we use Theorem \ref{fv:thm:treedepth} to compute weighted treedepth decomposition $(F', \varphi')$ of weighted treedepth $\ndd$, in time $\tndd$, and using polynomial space. 
	
	We now discuss how to obtain the (unweighted) treedepth decomposition $(F, \varphi)$ of the claimed depth. Recall that node $u_i \in V(F')$, we have that $\varphi'(u_i) = V_i$ for some $V_i \in \P$. Correspondingly, let $\pi_i$ be a path of length $|W_i| = O(\Delta^3)$ to be added to the forest $F$.
	
	Let $u_j \in V(F')$ be the parent of $u_i \in V(F')$ (except for the root). We add an edge between the last node of the path $\pi_j$ (corresponding to $u_j, W_j$) to the first vertex of path $\pi_i$ (corresponding to $u_i, W_i$), such that the latter is a child of the former. In other words, if we contract each path $\pi_i$ in $F$ into a single node, we obtain a graph that is isomorphic to $F'$. Finally, we obtain the bijection $\varphi: V(F) \to V(H)$ by arbitrarily mapping each vertex on $\pi_i$ to a unique vertex from $W_i$.
	
	For analyzing the unweighted treedepth of $H$, consider any root-leaf path $P$ in $F$. From the last paragraph, this corresponds to a root-leaf path $P'$ in $F'$, which we obtain by contracting each sub-path $\pi_i$ of length $O(\Delta^3)$ into a single vertex. We have the following bound on the number of vertices on $P$.
	$$ |P| \le \Oh(\Delta^3) \cdot |P'| \le \Oh(\Delta^3) \cdot \sum_{u_i \in P'} \omega(u_i) = \Oh(\Delta^3) \cdot \omega(P') = \ndd $$
	Where the second inequality follows from the fact that $\omega(u_i) = \log(1+|V_i|) \ge 1$, and the last inequality follows from the bound on the $\wtd(G)$, and since $\Delta$ is a constant. It follows that the (unweighted) treedepth of $(F, \varphi)$ thus constructed is $\ndd$. Finally, observe that after computing $(F', \varphi')$, we can construct $(F, \varphi)$ in polynomial time and polynomial space.
\end{proof}

Finally, we appeal to the following result from \cite{NederlofPSW20}.

\begin{theorem}[\cite{NederlofPSW20}]
	Given a graph $H$, and its treedepth decomposition of (unweighted) treedepth $d$, there exists a $2^{O(d)} \cdot n^{O(1)}$ time, polynomial space randomized algorithm, to solve \textsc{Cycle Cover}.
\end{theorem}

Note that the algorithm in Nederlof et al.\ \cite{NederlofPSW20} is actually for the \textsc{Partial Cycle Cover} problem, which asks to determine whether there exists a set of at most $k$ vertex disjoint cycles that span \emph{exactly} $\ell$ vertices. Note that the \textsc{Cycle Cover} problem corresponds to the special case where $\ell = n$. Finally, given an instance of \textsc{Hamiltonian Path}, where the task is to find a path spanning the set of vertices, we can reduce it to \textsc{Hamiltonian Cycle} by adding a universal vertex. Thus, we have the following result. 

\begin{theorem}
	There exists a $\tndd$ time, polynomial space, randomized algorithm, to solve \textsc{Cycle Cover} in the intersection graphs of similarly sized fat objects in $\real^d$. In particular, this implies analogous results for \textsc{Hamiltonian Cycle} and \textsc{Hamiltonian Path} in the intersection graphs of similarly sized fat objects in $\real^d$.
\end{theorem}
\fi

\section{Conclusion and Open Questions}
In this paper, following \cite{BergBKMZ20}, we consider various graph problems in the intersection graphs of similarly sized fat objects. Our running times for \textsc{Independent Set, $r$-Dominating Set, Steiner Tree, Connected Vertex Cover, Feedback Vertex Cover, (Connected) Odd Cycle Transversal, Hamiltonian Cycle} are of the form $\tndd$---matching that in \cite{BergBKMZ20}---but we improve the space requirement to be polynomial. Due to some technical reasons, we are not able to achieve a similar result for \textsc{Connected Dominating Set} which is also considered by \cite{BergBKMZ20}. We leave this as an open problem.

Kisfaludi-Bak \cite{kisfaludi2020hyperbolic} used some of the ideas from \cite{BergBKMZ20} in the context of (noisy) unit ball graphs in $d$-dimensional hyperbolic space. In particular, he gave subexponential and quasi-polynomial time (and space) algorithms for problems such as \textsc{Independent Set, Steiner Tree, Hamiltonian Cycle} using a notion similar to the weighted treedepth. Using our techniques, it should be possible to improve the space requirement of these algorithms to polynomial, while keeping the running time same (up to possibly a multiplicative $\Oh(\log n)$ factor in the exponent in some cases). We leave the details for a future version.

Finally, our algorithms for the connectivity problems such as \textsc{Steiner Tree, Connected Vertex Cover, (Connected) Odd Cycle Transversal}, and that for \textsc{Cycle Cover} use an adapted version of the \cnc technique (\cite{NederlofPSW20,HegerfeldK20,cygan2011solving}). \cnc technique crucially uses the Isolation Lemma (cf. Lemma \ref{fv:lem:isolation}), and hence these algorithms are inherently randomized. We note that recently there has been some progress toward derandomizing Cut\&Count \cite{NederlofPSW21} for problems such as \textsc{Hamiltonian Cycle} on graphs of bounded treedepth. This may also have some consequences for our algorithms.
\bibliography{references}

\begin{thebibliography}{17}
\providecommand{\natexlab}[1]{#1}
\providecommand{\url}[1]{\texttt{#1}}
\expandafter\ifx\csname urlstyle\endcsname\relax
  \providecommand{\doi}[1]{doi: #1}\else
  \providecommand{\doi}{doi: \begingroup \urlstyle{rm}\Url}\fi

\bibitem[Chaplick et~al.(2019)Chaplick, Fomin, Golovach, Knop, and
  Zeman]{ChaplickFGK019}
Steven Chaplick, Fedor~V. Fomin, Petr~A. Golovach, Dusan Knop, and Peter Zeman.
\newblock Kernelization of graph hamiltonicity: Proper h-graphs.
\newblock In \emph{Algorithms and Data Structures - 16th International
  Symposium, {WADS} 2019, Proceedings}, volume 11646 of \emph{Lecture Notes in
  Computer Science}, pages 296--310. Springer, 2019.
\newblock \doi{10.1007/978-3-030-24766-9\_22}.
\newblock URL \url{https://doi.org/10.1007/978-3-030-24766-9\_22}.

\bibitem[Cygan et~al.(2011)Cygan, Nederlof, Pilipczuk, Pilipczuk, van Rooij,
  and Wojtaszczyk]{cygan2011solving}
Marek Cygan, Jesper Nederlof, Marcin Pilipczuk, Michal Pilipczuk, Joham~MM van
  Rooij, and Jakub~Onufry Wojtaszczyk.
\newblock Solving connectivity problems parameterized by treewidth in single
  exponential time.
\newblock In \emph{2011 IEEE 52nd Annual Symposium on Foundations of Computer
  Science}, pages 150--159. IEEE, 2011.

\bibitem[Cygan et~al.(2015)Cygan, Fomin, Kowalik, Lokshtanov, Marx, Pilipczuk,
  Pilipczuk, and Saurabh]{cygan2015parameterized}
Marek Cygan, Fedor~V Fomin, {\L}ukasz Kowalik, Daniel Lokshtanov, D{\'a}niel
  Marx, Marcin Pilipczuk, Micha{\l} Pilipczuk, and Saket Saurabh.
\newblock \emph{Parameterized algorithms}, volume~5.
\newblock Springer, 2015.

\bibitem[de~Berg et~al.(2020)de~Berg, Bodlaender, Kisfaludi{-}Bak, Marx, and
  van~der Zanden]{BergBKMZ20}
Mark de~Berg, Hans~L. Bodlaender, S{\'{a}}ndor Kisfaludi{-}Bak, D{\'{a}}niel
  Marx, and Tom~C. van~der Zanden.
\newblock A framework for exponential-time-hypothesis-tight algorithms and
  lower bounds in geometric intersection graphs.
\newblock \emph{{SIAM} J. Comput.}, 49\penalty0 (6):\penalty0 1291--1331, 2020.
\newblock \doi{10.1137/20M1320870}.
\newblock URL \url{https://doi.org/10.1137/20M1320870}.

\bibitem[Demaine et~al.(2005)Demaine, Fomin, Hajiaghayi, and
  Thilikos]{DemaineFHT05jacm}
Erik~D. Demaine, Fedor~V. Fomin, Mohammadtaghi Hajiaghayi, and Dimitrios~M.
  Thilikos.
\newblock Subexponential parameterized algorithms on graphs of bounded genus
  and {$H$}-minor-free graphs.
\newblock \emph{Journal of the ACM}, 52\penalty0 (6):\penalty0 866--893, 2005.

\bibitem[Diestel(2012)]{Diestel12}
Reinhard Diestel.
\newblock \emph{Graph Theory, 4th Edition}, volume 173 of \emph{Graduate texts
  in mathematics}.
\newblock Springer, 2012.
\newblock ISBN 978-3-642-14278-9.

\bibitem[Fomin et~al.(2019)Fomin, Lokshtanov, Panolan, Saurabh, and
  Zehavi]{FominLPSZ19}
Fedor~V. Fomin, Daniel Lokshtanov, Fahad Panolan, Saket Saurabh, and Meirav
  Zehavi.
\newblock Finding, hitting and packing cycles in subexponential time on unit
  disk graphs.
\newblock \emph{Discret. Comput. Geom.}, 62\penalty0 (4):\penalty0 879--911,
  2019.
\newblock \doi{10.1007/s00454-018-00054-x}.
\newblock URL \url{https://doi.org/10.1007/s00454-018-00054-x}.

\bibitem[F{\"{u}}rer and Yu(2017)]{FurerY17}
Martin F{\"{u}}rer and Huiwen Yu.
\newblock Space saving by dynamic algebraization based on tree-depth.
\newblock \emph{Theory Comput. Syst.}, 61\penalty0 (2):\penalty0 283--304,
  2017.
\newblock \doi{10.1007/s00224-017-9751-3}.
\newblock URL \url{https://doi.org/10.1007/s00224-017-9751-3}.

\bibitem[Hegerfeld and Kratsch(2020)]{HegerfeldK20}
Falko Hegerfeld and Stefan Kratsch.
\newblock Solving connectivity problems parameterized by treedepth in
  single-exponential time and polynomial space.
\newblock In \emph{37th International Symposium on Theoretical Aspects of
  Computer Science (STACS)}, volume 154 of \emph{LIPIcs}, pages 29:1--29:16.
  Schloss Dagstuhl - Leibniz-Zentrum f{\"{u}}r Informatik, 2020.
\newblock \doi{10.4230/LIPIcs.STACS.2020.29}.
\newblock URL \url{https://doi.org/10.4230/LIPIcs.STACS.2020.29}.

\bibitem[Ito and Kadoshita(2010)]{ito2010tractability}
Hiro Ito and Masakazu Kadoshita.
\newblock Tractability and intractability of problems on unit disk graphs
  parameterized by domain area.
\newblock In \emph{Proceedings of the 9th International Symposium on Operations
  Research and Its Applications (ISORA)}, volume 2010. Citeseer, 2010.

\bibitem[Kisfaludi-Bak(2020)]{kisfaludi2020hyperbolic}
S{\'a}ndor Kisfaludi-Bak.
\newblock Hyperbolic intersection graphs and (quasi)-polynomial time.
\newblock In \emph{Proceedings of the Fourteenth Annual ACM-SIAM Symposium on
  Discrete Algorithms}, pages 1621--1638. SIAM, 2020.

\bibitem[Knuth(2005)]{knuth2005art}
Donald~Ervin Knuth.
\newblock \emph{The art of computer programming: Generating all combinations
  and partitions}.
\newblock Addison-Wesley, 2005.

\bibitem[Mulmuley et~al.(1987)Mulmuley, Vazirani, and
  Vazirani]{mulmuley1987matching}
Ketan Mulmuley, Umesh~V Vazirani, and Vijay~V Vazirani.
\newblock Matching is as easy as matrix inversion.
\newblock In \emph{Proceedings of the nineteenth annual ACM symposium on Theory
  of computing}, pages 345--354, 1987.

\bibitem[Nederlof et~al.(2020)Nederlof, Pilipczuk, Swennenhuis, and
  Wegrzycki]{NederlofPSW20}
Jesper Nederlof, Michal Pilipczuk, C{\'{e}}line M.~F. Swennenhuis, and Karol
  Wegrzycki.
\newblock Hamiltonian cycle parameterized by treedepth in single exponential
  time and polynomial space.
\newblock In \emph{46th International Workshop on Graph-Theoretic Concepts in
  Computer Science (WG)}, volume 12301 of \emph{Lecture Notes in Computer
  Science}, pages 27--39. Springer, 2020.
\newblock \doi{10.1007/978-3-030-60440-0\_3}.
\newblock URL \url{https://doi.org/10.1007/978-3-030-60440-0\_3}.

\bibitem[Nederlof et~al.(2021)Nederlof, Pilipczuk, Swennenhuis, and
  Wegrzycki]{NederlofPSW21}
Jesper Nederlof, Michal Pilipczuk, C{\'{e}}line M.~F. Swennenhuis, and Karol
  Wegrzycki.
\newblock Isolation schemes for problems on decomposable graphs.
\newblock \emph{CoRR}, abs/2105.01465, 2021.
\newblock URL \url{https://arxiv.org/abs/2105.01465}.

\bibitem[Nesetril and de~Mendez(2012)]{NesetrilM12}
Jaroslav Nesetril and Patrice~Ossona de~Mendez.
\newblock \emph{Sparsity - Graphs, Structures, and Algorithms}, volume~28 of
  \emph{Algorithms and combinatorics}.
\newblock Springer, 2012.
\newblock ISBN 978-3-642-27874-7.
\newblock \doi{10.1007/978-3-642-27875-4}.
\newblock URL \url{https://doi.org/10.1007/978-3-642-27875-4}.

\bibitem[Reed(2003)]{reed2003algorithmic}
Bruce~A Reed.
\newblock Algorithmic aspects of tree width.
\newblock In \emph{Recent advances in algorithms and combinatorics}, pages
  85--107. Springer, 2003.

\end{thebibliography}
 
\appendix

\end{document}